\documentclass[11pt]{article}

\usepackage[numbers,sort&compress]{natbib}

\usepackage{arxiv}

\usepackage[utf8]{inputenc} 
\usepackage[T1]{fontenc}    
\usepackage{hyperref}       
\usepackage{url}            
\usepackage{booktabs}       
\usepackage{amsfonts}       
\usepackage{nicefrac}       
\usepackage{microtype}      
\usepackage{lipsum}

\usepackage{braket}
\usepackage{graphicx}
\usepackage{subfigure}

\usepackage{ulem}
\usepackage{anyfontsize}
\usepackage{makecell}

\usepackage{lipsum}

\usepackage{amsmath}
\usepackage{amssymb}
\usepackage{mathtools}
\usepackage{amsthm}
\usepackage{wrapfig}

\usepackage[capitalize,noabbrev]{cleveref}

\usepackage{bm}
\usepackage{qcircuit}
\usepackage{caption}
\usepackage{multirow}
\usepackage{algorithm}
\usepackage{algorithmic}
\usepackage{array}

\allowdisplaybreaks

\theoremstyle{plain}
\newtheorem{theorem}{Theorem}[section]

\newtheorem{lemma}[theorem]{Lemma}
\newtheorem{corollary}[theorem]{Corollary}
\theoremstyle{definition}

\theoremstyle{remark}

\usepackage[textsize=tiny]{todonotes}
\usepackage{enumitem}

\newcommand{\xb}[1]{\textcolor{orange}{#1}}

\newcommand{\blue}[1]{\textcolor{blue}{#1}}
\def\>{\rangle}
\def\<{\langle}

\def\poly{{\rm poly}}
\def\O{\mathcal{O}}

\def\R{\mathbb{R}}

\def\Tr{{\rm Tr}}

\def\C{\mathbb{C}}
\def\Re{{\rm Re}}
\def\0{\bm{0}}
\def\1{\bm{1}}
\def\2{\bm{2}}

\def\bmt{\bm{\theta}}

\title{AQER: A Scalable and Efficient Data Loader for Digital Quantum Computers}

\author{%
  Kaining Zhang$^{1}$, Xinbiao Wang$^{1}$, Yuxuan Du$^{1,2}$\thanks{Corresponding authors.}, Min-Hsiu Hsieh$^{3*}$, Dacheng Tao$^{1*}$ \\
  $^1$Generative AI Lab, College of Computing and Data Science,\\
  Nanyang Technological University, Singapore 639798, Singapore\\
  $^2$School of Physical and Mathematical Sciences,\\
  Nanyang Technological University, Singapore 639798, Singapore\\
  $^3$Hon Hai (Foxconn) Research Institute, Taipei, Taiwan\\
  \texttt{yuxuan.du@ntu.edu.sg,min-hsiu.hsieh@foxconn.com,dacheng.tao@ntu.edu.sg}
}

\begin{document}

\maketitle

\begin{abstract}

Digital quantum computing promises to offer computational capabilities beyond the reach of classical systems, yet its capabilities are often challenged by scarce quantum resources. A critical bottleneck in this context is how to load classical or quantum data into quantum circuits efficiently. Approximate quantum loaders (AQLs) provide a viable solution to this problem by balancing fidelity and circuit complexity. However, most existing AQL methods are either heuristic or provide guarantees only for specific input types, and a general theoretical framework is still lacking. To address this gap, here we reformulate most AQL methods into a unified framework and establish information-theoretic bounds on their approximation error. Our analysis reveals that the achievable infidelity between the prepared state and target state scales linearly with the total entanglement entropy across subsystems when the loading circuit is applied to the target state. In light of this, we develop AQER, a scalable AQL method that constructs the loading circuit by systematically reducing entanglement in target states. We conduct systematic experiments to evaluate the effectiveness of AQER, using synthetic datasets, classical image and language datasets, and a quantum many-body state datasets with up to 50 qubits. The results show that AQER consistently outperforms existing methods in both accuracy and gate efficiency. Our work paves the way for scalable quantum data processing and real-world quantum computing applications.

\end{abstract}

\section{Introduction}
\label{AQER_intro}

Digital quantum computers~\cite{feynman1982simulating} promise to deliver computational capabilities that surpass those of classical systems in diverse fields~\cite{shor1994algorithms,PhysRevLett.103.150502,cao2019quantum,doi:10.1073/pnas.2026805118,doi:10.1126/science.abn7293,Liu2024}. After four decades of exploration~\cite{PhysRevLett.55.1543,PhysRevLett.85.2208,RevModPhys.73.357,Clarke2008,DiCarlo2009,PhysRevLett.104.010503,Arute2019,PhysRevLett.127.180501,Graham2022}, substantial progress has been achieved, with superconducting~\cite{Acharya2025} and neutral atom platforms~\cite{xu2025neutral} demonstrating quantum advantages on synthetic benchmarks. Nevertheless, the available quantum resources, such as the number of high-quality qubits and coherence time, would remain severely constrained in the foreseeable future~\cite{jiang2025advancements}. This limitation underscores the need to consistently improve the efficiency of the three foundational modules of quantum computing: quantum state preparation~\cite{PhysRevLett.122.010505,GonzalezConde2024efficientquantum}, quantum processing~\cite{bharti2022noisy,10.1007/978-3-031-90200-0_9}, and readout~\cite{PhysRevResearch.3.043095,PRXQuantum.6.020346}. Maximizing resource utilization across these stages is essential to advancing the practical utility of quantum computers. In this context, efficient quantum state preparation, which amounts to constructing quantum gate sequences that encode classical or quantum inputs into quantum states, emerges as a critical prerequisite~\cite{ranga2024quantum}. This step remains notoriously challenging, as theoretical results indicate that in the worst case, preparing an arbitrary quantum state within a provable error tolerance may require an exponential number of quantum gates or ancillary qubits~\cite{PhysRevLett.129.230504,Gui2024spacetimeefficient}.

Recently, a new concept known as the \textit{approximate quantum loader} (AQL)~\cite{iaconis2023tensor} has emerged, offering a promising direction for efficient quantum state preparation. Unlike earlier approaches that aimed to achieve provable accuracy guarantees, AQL embraces a trade-off between preparation fidelity and circuit complexity. The central insight driving this paradigm is that many quantum algorithms can tolerate some imprecision in the input state, which allows substantial reductions in gate count and resource overhead. For instance, in quantum machine learning, small perturbations in input features often have negligible impact on classification accuracy~\cite{nguyen2020quantum} and do not compromise the demonstration of quantum advantage, particularly in terms of sample complexity~\cite{PhysRevLett.130.200403}. Motivated by this observation, considerable effort has been devoted to designing efficient AQL schemes for various data types. These methods broadly fall into two categories: tensor network (TN)–based approaches~\cite{jobst2023efficient,Iaconis2024}, which employ matrix product state (MPS) and other TN representations, and circuit-based approaches~\cite{PhysRevA.109.052423,shirakawa2021automatic}, which directly optimize quantum gate sequences (see Fig.~\ref{AQER_fig_AQL} for a general framework and Sec.~\ref{AQER_method_framework} for details). Despite this progress, most existing techniques are either heuristic or provide theoretical guarantees only under restrictive scenarios. Consequently, \textit{a systematic understanding of the fundamental limits on approximation error achievable by AQL remains elusive}. Addressing this question would provide critical insights for the principled development of AQL algorithms and for the resource-efficient use of digital quantum computers.
\begin{wrapfigure}{r}{0.5\textwidth}
  \centering
  \includegraphics[width=0.5\textwidth]{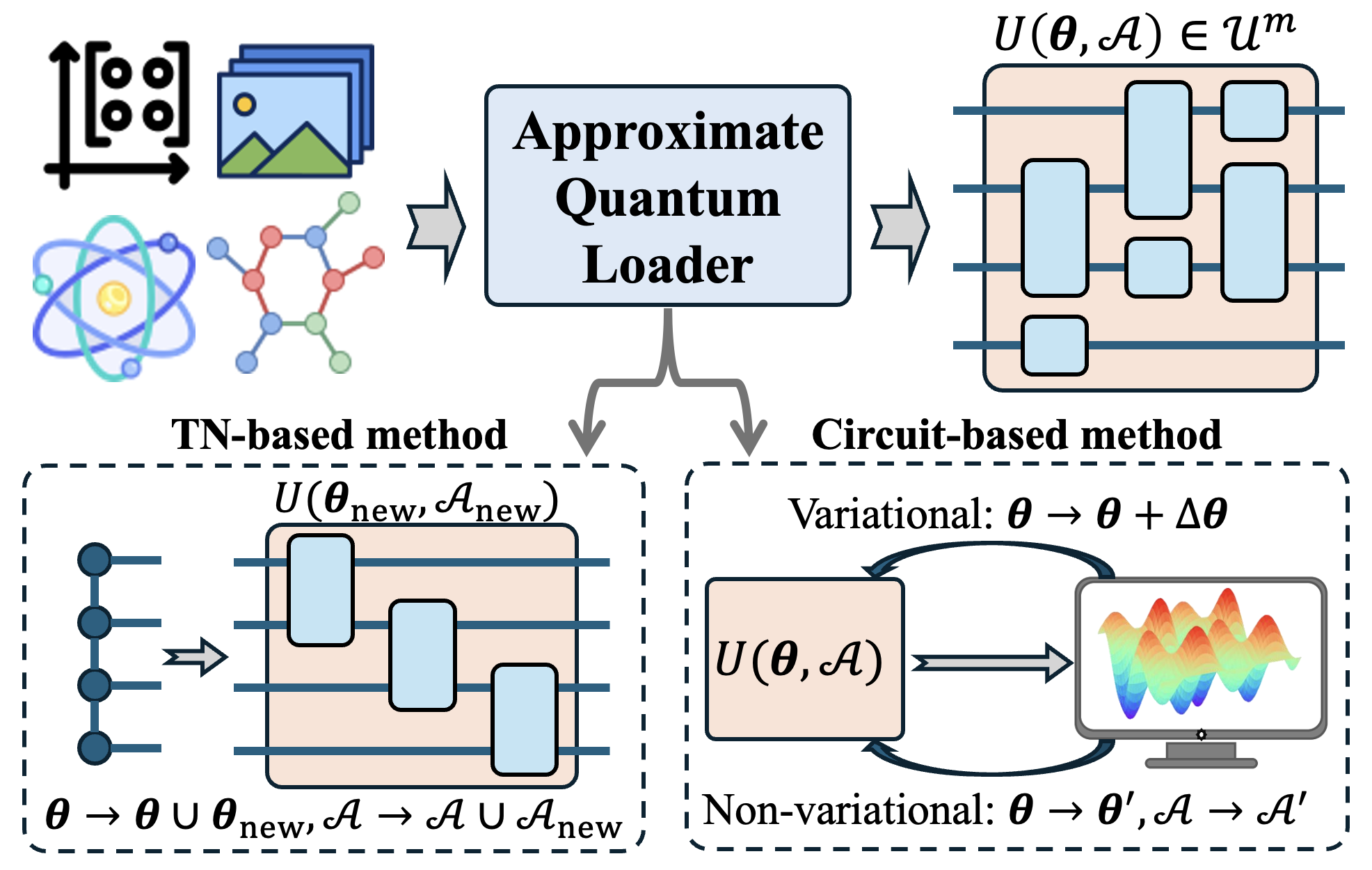}
  \caption{\small{The general framework of AQL. Typical AQLs are separated into two categories: TN–based methods and circuit-based methods, both of which aim to construct quantum circuits from a given gate set $\mathcal{U}$ that approximately prepare the target state.}}
  \label{AQER_fig_AQL}
\end{wrapfigure}

Here we narrow the above knowledge gap by first reformulating a wide range of AQL methods into a unified framework. An intuition is illustrated in Fig.~\ref{AQER_fig_AQL}, where the AQL construction amounts to identifying a sequence of (tunable) quantum gates that minimizes the distance between the state evolved by this sequence and the target quantum state. Building on this framework, we derive \textbf{information-theoretic lower and upper bounds} for the approximation error of AQL
(Theorem~\ref{AQER_method_tn_init_projection_prop}), which are independent of specific AQL strategies. Specifically, we demonstrate that the approximation error decreases linearly with the newly proposed entanglement measure, which amounts to the summation of the single-qubit entanglement entropy of the target quantum states after the inverse of the AQL gate sequence. This result provides the key insight that AQL performance can be fundamentally characterized by the degree of entanglement reduction achievable during the AQL, with a larger entanglement reduction leading to more accurate loading.

Motivated by the theoretical importance of entanglement in determining the performance of AQL, we develop AQER, a scalable and efficient \underline{AQ}L method that constructs the gate sequence guided by the principle of maximal \underline{E}ntanglement \underline{R}eduction. In particular, AQER leverages this principle to reduce the entanglement of target states by progressively adding parameterized single- and two-qubit gates, and employs an explicitly constructed single-qubit gate sequence to further reduce the approximation error. Compared to existing AQL methods, AQER offers \textbf{two key advantages}. First, AQER is flexible and universal, supporting efficient approximate loading of both classical data and unknown quantum states. Second, AQER is robust and easy to optimize. Guided by entanglement-reduction optimization, it not only achieves low approximation error but also mitigates vanishing gradient problems during the parameter training~\cite{larocca2025barren}, thereby ensuring scalability to large-qubit systems. To validate the effectiveness of AQER, we conduct systematic experiments to benchmark its performance on different quantum state loading tasks, ranging from synthetic quantum states, real-world image and language datasets, to quantum many-body systems with up to 50 qubits. The achieved results reveal that AQER consistently outperforms existing methods. Our work paves the way for scalable quantum data processing and practical applications of quantum computing in real-world tasks.

In summary, our contributions are threefold. (i) We propose a unified framework for a wide range of AQL methods and derive two information-theoretic bounds on the approximation error of AQL with respect to the entanglement measure. To the best of our knowledge, this is the first study to establish theoretical limits for AQL from an information-theoretic perspective. (ii) Motivated by these theoretical results, we develop AQER, a scalable and efficient AQL method that features principled entanglement reduction optimization to efficiently utilize gate resources. (iii) We conduct extensive numerical simulations across diverse datasets with up to 50 qubits, and compare AQER against reference AQL methods. The results validate the effectiveness of AQER, demonstrating superior performance with lower approximation error and reduced gate count. The corresponding code is available at \href{https://github.com/Kaining111/AQER_code}{\blue{GitHub}} for reproducibility and benchmarking purposes.

\section{Preliminary}
\label{AQER_pre}

Here, we introduce the basics of quantum computing, quantum entanglement, and variational quantum algorithms, followed by reviewing typical AQL methods. Refer to Appendix~\ref{AQER_app_pre} for more details.

\textbf{Basics of quantum computing.}  
The pure \textit{state} of a qubit can be written as $|\psi\> = a|0\>+b|1\>$, where $a,b \in \mathbb{C}$ satisfy $|a|^2 + |b|^2 =1$, and $|0\> = (1,0)^T, |1\> = (0,1)^T$. 
The $N$-qubit space is formed by the tensor product of $N$ single-qubit spaces.
For a pure state $|\psi\>$, the corresponding density matrix is defined as $\rho=|\psi\>\<\psi|$, where $\<\psi| = (|\psi\>)^{\dag}$. Mixed states are represented by density matrices of the form $\rho = \sum_{k} c_k |\psi_k\>\<\psi_k|$, where $c_k \ge 0$ and $\sum_k c_k =1$. For an $N$-qubit state $|\psi\>$, the reduced density matrix of a subsystem $A \subseteq [N]$ is obtained by the partial trace $\rho_A = \Tr_{[N]\setminus A} [|\psi\> \<\psi|]$. 
A quantum \textit{gate} is represented by a unitary matrix acting on the state, and can be depicted in the circuit model as
$
\Qcircuit @C=0.8em @R=1.5em {
\lstick{} & \gate{} & \qw 
}
$
in quantum circuit notation.
Typical quantum operations include fixed gates such as $CZ:={\rm diag}(1,1,1,-1)$, and tunable single-qubit rotation gates given by $R_{\sigma} (\theta)=e^{-i\theta \sigma/2}$ with $\sigma \in \{X,Y,Z\}$ and 
$X = (\begin{smallmatrix} 0 & 1 \\ 1 & 0 \end{smallmatrix}),\;
  Y = (\begin{smallmatrix} 0 & -i \\ i & 0 \end{smallmatrix}),\;
  Z = (\begin{smallmatrix} 1 & 0 \\ 0 & -1 \end{smallmatrix})$ being Pauli-X, -Y, -Z operators. The rotation gate could be generalized to the two-qubit case, for example $R_{ZZ}$ with $ZZ=Z\otimes Z$.
The quantum \textit{measurement} refers to the procedure of extracting classical information from a quantum state. It is mathematically specified by a Hermitian matrix $H$ called the \textit{observable}. Measuring the state $\ket{\psi}$ or $\rho$ with the observable $H$ yields a random variable whose expectation value is $\bra{\psi}H\ket{\psi}$ or $\Tr[H\rho]$, respectively. To quantify the similarity between two quantum states, we use \textit{fidelity} defined as $F=|\<\psi_1|\psi_2\>|^2$ or $F=\Tr[\rho_1 \rho_2]$, with the corresponding \textit{infidelity} given by $1-F$~\cite{nielsen2010quantum}. Both of them can be evaluated by taking the density matrix of either $|\psi_1\>$ or $|\psi_2\>$ as the observable.

\textbf{Quantum entanglement.} Quantum entanglement is a unique property of quantum systems that distinguishes them from classical systems and serves as a pivotal quantum resource for achieving quantum computational advantages. Specifically, an $N$-qubit quantum state $\ket{\psi}$ of a composite system is called entangled across two subsystems $A,B\subset [N]$ if it cannot be written as a tensor product of states of the subsystems, i.e., $\ket{\psi} \ne \ket{\psi_A}\otimes \ket{\psi_B}$ for any states $\ket{\psi_A},\ket{\psi_B}$ in the respective subsystems. Various metrics have been developed to quantify the degree of entanglement in quantum states. One commonly used measure is the Renyi-2 entropy, defined as $\mathcal{S}_A (|\psi\>) = -\log_2 \Tr[\rho_A^2]$, where $\rho_A$ is the reduced density matrix of $\ket{\psi}$ on the subsystem $A$. When $\mathcal{S}_A(\ket{\psi})=0$, the quantum state $\ket{\psi}$ is separable and can be written as a product state $\ket{\psi}=\ket{\psi_A}\otimes \ket{\psi_B}$.

\textbf{Variational quantum algorithm (VQA).} VQA~\cite{cerezo2021variational} is a hybrid quantum--classical paradigm in which a parameterized quantum circuit (PQC) is trained by a classical optimizer.
It has become a core framework for quantum machine learning~\cite{PhysRevLett.122.040504,du2025quantummachinelearninghandson}, with applications spanning discriminative learning, generative learning~\cite{PhysRevLett.121.040502,doi:10.1126/sciadv.aat9004,10113742}, and reinforcement learning~\cite{9144562,cheng2023offline,10138016}. Concretely, a PQC is written as $V(\bm{\theta}) = V_m(\bmt_m)\cdots V_1(\bmt_1)$, and the parameters $\bm{\theta}$ are optimized to minimize a cost function defined with respect to an input state $\rho_{\rm in}$ and an observable $O$, \textit{i.e.},
$f(\bm{\theta}) = \Tr \!\left[ O \, V(\bm{\theta}) \rho_{\rm in} V(\bm{\theta})^\dag \right]$.
The parameters are updated by a classical optimizer such as gradient descent or Adam, which requires accessing (exact or estimated) gradients.
When the variational circuit is simulated on a classical computer, gradients can be computed directly via automatic differentiation~\cite{bergholm2018pennylane}.
When executed on a quantum device, gradients are typically estimated by the parameter-shift rule~\cite{wierichs2022general}:
$\frac{\partial f}{\partial \theta_j} = \tfrac{1}{2} f(\bm{\theta}_+) - \tfrac{1}{2} f(\bm{\theta}_-)$,
where $\bm{\theta}_{\pm}$ differ from $\bm{\theta}$ by $\pm \pi/2$ on the $j$-th parameter.

\subsection{Related work}

Existing approaches to AQL can be broadly categorized into TN-based and circuit-based methods.

\textbf{TN-based methods.}
TN-based methods exploit tensor networks to efficiently represent and prepare low-entanglement quantum states. For example, matrix product states (MPS) with bond dimension $k$ can be prepared exactly with $\mathcal{O}(Nk^2)$ two-qubit gates~\cite{PhysRevLett.95.110503}, and approximate encoding is feasible for states with compact MPS representations~\cite{PhysRevA.101.032310}. In light of this, TN-based strategies have been applied to construct approximate quantum state encodings for \textbf{classical data}~\cite{9259933,iaconis2023tensor,jobst2023efficient,Iaconis2024}. These approaches provide a principled framework with controlled approximation error for low-entanglement inputs, but their utility to quantum data or highly entangled classical data is limited.

\textbf{Circuit-based methods.}
Circuit-based methods directly optimize the quantum gates that generate the target state, often without relying on an explicit low-entanglement representation. These methods can be broadly categorized into variational and non-variational strategies. Variational approaches train parameterized circuits to minimize infidelity with respect to the target state~\cite{PhysRevResearch.4.023136,rudolph2023synergistic,PhysRevA.109.052423}, while non-variational methods iteratively optimize local two-qubit gates along the circuit, progressively improving accuracy without explicit parameter training~\cite{Rudolph_2024,shirakawa2021automatic}. Circuit-based methods are flexible and broadly applicable, but lack rigorous theoretical guarantees and suffer from barren plateaus.

\section{Approximate Quantum Loader}
\label{AQER_method}

In this section, we first reformulate the main AQL approaches within a unified optimization framework and derive two information-theoretical bounds of the approximation error of AQLs in Sec.~\ref{AQER_method_framework}. We then present the implementation details of AQER in Sec.~\ref{AQER_method_algorithm}.

\subsection{Unified Framework of AQL and Theoretical Analysis}
\label{AQER_method_framework}

Existing AQL methods can be broadly categorized into \textit{TN-based} methods and \textit{circuit-based} methods. For clarity, we first elucidate a unified framework for AQL and explain how different methods can be formulated within this framework. Building on this framework, we then provide an information-theoretic analysis of the approximation error achievable by AQL.

\textbf{A unified framework of AQL.} 
We begin by recalling the fundamental definition of \textit{approximate quantum loader} (AQL). 
An AQL amounts to preparing the target state with controllable accuracy by generating quantum gate sequences from a given gate set $\mathcal{U}$, as shown in Fig.~\ref{AQER_fig_AQL}. Specifically, let $U(\bmt;\mathcal{A})\in \mathcal{U}^m$ denote a circuit composed of 
$m$ gates, where $\bmt$ and $\mathcal{A}$ represent tunable parameters and the circuit architecture, respectively. With the aim of identifying the optimal $\bmt$ and $\mathcal{A}$, AQL can be reformulated as \textit{a unified framework} for solving the following optimization problem
\begin{equation}
\arg\min_{\bmt,\mathcal{A}} \left[ 1-\left|\<\bm{v}_{\rm target}| U(\bmt;\mathcal{A}) |\psi_{\rm product}\> \right|^2 \right] , \label{AQER_method_AQl_framework_eq}
\end{equation}
where $\ket{\bm{v}_{\rm target}}$ is the target quantum state and $\ket{\psi_{\rm product}}$ is an easily preparable product state.
Within this framework, various AQL methods differ in how they construct the circuit $U(\bmt;\mathcal{A})$, either by the way of updating $\bmt$, the design of $\mathcal{A}$, or adjusting both. A brief overview of how TN- and circuit-based methods fall into Eq.~(\ref{AQER_method_AQl_framework_eq}) is provided below, with further details given in Appendix~\ref{aqer_app_aql}.

\textit{TN-based methods.}
For these methods~\cite{PhysRevA.101.032310}, the circuit $U(\bmt,\mathcal{A})$ is constructed incrementally by sequentially appending local unitaries obtained from the TN representation. Each unitary is further decomposed into hardware-available gates using the KAK decomposition~\cite{tucci2005introductioncartanskakdecomposition}, which yields the final circuit.
In terms of Eq.~(\ref{AQER_method_AQl_framework_eq}), the optimization proceeds by extending the circuit as $U(\bmt;\mathcal{A})\rightarrow U(\bmt \cup \bmt_{\rm new}; \mathcal{A} \cup \mathcal{A}_{\rm new})$, while keeping the previous parameters and architecture fixed.

\textit{Circuit-based methods.}
Circuit-based AQL methods can be divided into variational and non-variational approaches. In variational approaches~\cite{PhysRevResearch.4.023136, rudolph2023synergistic, PhysRevA.109.052423}, the circuit $U(\bmt;\mathcal{A})$ is constructed by optimizing a variational quantum circuit with cost function $\ell(\bmt)=1-|\<\bm{v}_{\rm target}|U(\bmt;\mathcal{A})\ket{\psi_{\rm product}}|^2$, where $\bmt$ are tunable and  $\mathcal{A}$ is fixed.
By contrast, non-variational approaches~\cite{Rudolph_2024,shirakawa2021automatic} gradually update all two-qubit unitaries in a prescribed circuit following a zigzag schedule. This involves sequentially adjusting both the parameters $\bmt$ and the architecture $\mathcal{A}$ of the circuit $U(\bmt;\mathcal{A})$, and thus fits within the general optimization framework in Eq.~(\ref{AQER_method_AQl_framework_eq}).

\textbf{Information-theoretic analysis.}
A benefit of the unified framework in Eq.~(\ref{AQER_method_AQl_framework_eq}) is that it enables an algorithm-independent theoretical analysis of AQL. Here, we establish two information-theoretical bounds of the approximation error achievable by AQL using an entanglement measure, as stated in the following theorem with the proof deferred to Appendix~\ref{AQER_method_tn_init_projection_prop_proof}.

\begin{theorem}\label{AQER_method_tn_init_projection_prop}
Denote the entanglement measure for an $N$-qubit state $|\psi\>$ as $\mathcal{S}(|\psi\>)=\sum_{i=1}^{N} \mathcal{S}_{\{i\}}(|\psi\>)$.
Then, for the state $|\bm{v}_{\rm target}\>$ and a circuit $U$ with $\mathcal{S}(U^\dag |\bm{v}_{\rm target}\>)=S$, the infidelity between $|\bm{v}_{\rm target}\>$ and the state generated from $U$ on any product state $|\psi_{\rm product}\>$ is lower bounded as $1-|\< \bm{v}_{\rm target}  | U | \psi_{\rm product} \> |^2 \geq{} {f_1 (S) := \frac{1}{2} \left( 1-\sqrt{2^{1- S/N }-1} \right)}$. 
{Moreover, given access to $\rho$, we can construct} a product state $|\psi'_{\rm product}\>$ such that the infidelity is upper bounded as 
$ 1-|\< \bm{v}_{\rm target} | U | \psi'_{\rm product} \> |^2 \leq {f_2(S) :=  \frac{1}{2} \left( 1 - \sqrt{2^{1-S+\lfloor S \rfloor} -1} + \lfloor S \rfloor  \right)} $. When $S \rightarrow 0$, $f_1(S) \rightarrow \frac{\ln2}{2N}S + \O(S^3)$ and $f_2(S) \rightarrow \frac{\ln2}{2}S + \O(S^3)$.
\end{theorem}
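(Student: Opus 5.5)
Throughout, set $|\phi\> := U^\dag|\bm{v}_{\rm target}\>$. Then $|\<\bm{v}_{\rm target}|U|\psi_{\rm product}\>|^2 = |\<\phi|\psi_{\rm product}\>|^2$, so both bounds reduce to statements about how well $|\phi\>$ overlaps with product states.

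\emph{Reduction to one-qubit spectra.} Let $\rho_i$ be the reduced state of $|\phi\>$ on qubit $i$, with largest eigenvalue $\lambda_i\ge 1/2$ and top eigenvector $|e_i\>$. Write $S_i:=\mathcal{S}_{\{i\}}(|\phi\>)$, so that $S=\sum_i S_i$. The purity is $\Tr[\rho_i^2]=\lambda_i^2+(1-\lambda_i)^2=2^{-S_i}$. Solving for $\lambda_i$ gives
\[
1-\lambda_i=g(S_i), \qquad g(s):=\tfrac12\bigl(1-\sqrt{2^{1-s}-1}\bigr),
\]
where $s\in[0,1]$, $g(0)=0$ and $g(1)=1/2$. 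Both bounds will be expressed through this single function $g$.

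\emph{Lower bound.} Take any product state $|\psi\>=\bigotimes_j|\psi_j\>$ and any fixed $i$. Then
\[
|\<\phi|\psi\>|^2\le \<\phi|\,(|\psi_i\>\<\psi_i|\otimes I)\,|\phi\>=\<\psi_i|\rho_i|\psi_i\>\le\lambda_i .
\]
Hence the infidelity is at least $\max_i g(S_i)$. The function $g$ is increasing, and $\max_i S_i\ge S/N$, so the infidelity is at least $g(S/N)=f_1(S)$.

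\emph{Upper bound.} Choose $|\psi'\>=\bigotimes_i|e_i\>$; this is where access to the state is needed, in order to compute each $\rho_i$. Set $P_i=|e_i\>\<e_i|$ acting on qubit $i$. The operator inequality $I-\bigotimes_iP_i\le\sum_i (I-P_i)$ (a union bound for commuting projectors) gives
\[
1-|\<\phi|\psi'\>|^2\le\sum_i(1-\lambda_i)=\sum_i g(S_i).
\]
It remains to maximize $\sum_i g(S_i)$ over $S_i\in[0,1]$ subject to $\sum_i S_i=S$. This is the step I expect to need real checking, because it requires showing that $g$ is convex.

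To check convexity, put $u=2^{1-s}\in[1,2]$ and $h(s)=\sqrt{u-1}$. A direct computation gives
\[
h''(s)=\frac{(\ln 2)^2\,u\,(u-2)}{4(u-1)^{3/2}}\le 0 ,
\]
so $h$ is concave and therefore $g=(1-h)/2$ is convex. A convex function summed over the polytope $\{\sum_i S_i=S,\ 0\le S_i\le 1\}$ is maximized at a vertex. The vertices have $\lfloor S\rfloor$ coordinates equal to $1$, one coordinate equal to $S-\lfloor S\rfloor$, and the rest equal to $0$. The maximum is therefore
\[
\tfrac12\lfloor S\rfloor+g(S-\lfloor S\rfloor)=f_2(S).
\]

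\emph{Asymptotics.} Expand near $s=0$: $2^{1-s}-1=1-2\ln 2\, s+\O(s^2)$, so $\sqrt{2^{1-s}-1}=1-\ln 2\, s+\O(s^2)$ and $g(s)=\tfrac{\ln 2}{2}s+\O(s^2)$. For small $S$ we have $\lfloor S\rfloor=0$, which gives $f_1(S)\approx\frac{\ln2}{2N}S$ and $f_2(S)\approx\frac{\ln2}{2}S$. The theorem states the remainder as $\O(S^3)$; a careful Taylor expansion is needed to confirm the quadratic coefficient. If that coefficient does not vanish, the remainder should read $\O(S^2)$.
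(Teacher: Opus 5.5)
Your proof is correct and follows the paper's argument essentially step for step. For $f_1$, both use the bound by a single-qubit marginal together with $\max_i S_i\ge S/N$. For $f_2$, both use the product of top eigenvectors, the union bound, and concavity of $\sqrt{2^{1-x}-1}$. Your extreme-point-of-the-polytope step is a cleaner formalization of the paper's pairwise ``push to the boundary'' argument.

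The Taylor question you left open resolves in the theorem's favor. Write $a=\ln 2$. Then $2^{1-s}-1=1-2as+a^2s^2-\tfrac{a^3}{3}s^3+\O(s^4)$, which agrees with $(1-as)^2$ through second order. Hence $\sqrt{2^{1-s}-1}=1-as-\tfrac{a^3}{6}s^3+\O(s^4)$ and $g(s)=\tfrac{a}{2}s+\tfrac{a^3}{12}s^3+\O(s^4)$. So the remainders for $f_1(S)=g(S/N)$, and for $f_2(S)=g(S)$ when $S<1$, are indeed $\O(S^3)$ rather than $\O(S^2)$.
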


The above results indicate that the approximation error of AQL is fundamentally governed by the entanglement measure $\mathcal{S}$ of the evolved states $U^\dag |\bm{v}_{\rm target}\>$. In particular, the infidelity between the target state $\ket{\bm{v}_{\rm target}}$ and the prepared state $U \ket{\psi_{\rm product}}$ scales linearly with the entanglement measure value $S$. Hence, a smaller $S$ guarantees lower infidelity and smaller approximation error. Since ${S}$ depends on both $\ket{\bm{v}_{\rm target}}$ and the circuit $U$, reducing infidelity through parameter and architecture optimization in AQL is equivalent to minimizing the entanglement measure $\mathcal{S}$.

\textbf{Remark.} Theorem~\ref{AQER_method_tn_init_projection_prop} can be generalized to noisy channel cases as provided in Appendix~\ref{AQER_app_noisy}.

\subsection{AQER: an Efficient and  Scalable AQL}
\label{AQER_method_algorithm}

\begin{figure}[t]
\centering
\includegraphics[width=.98\linewidth]{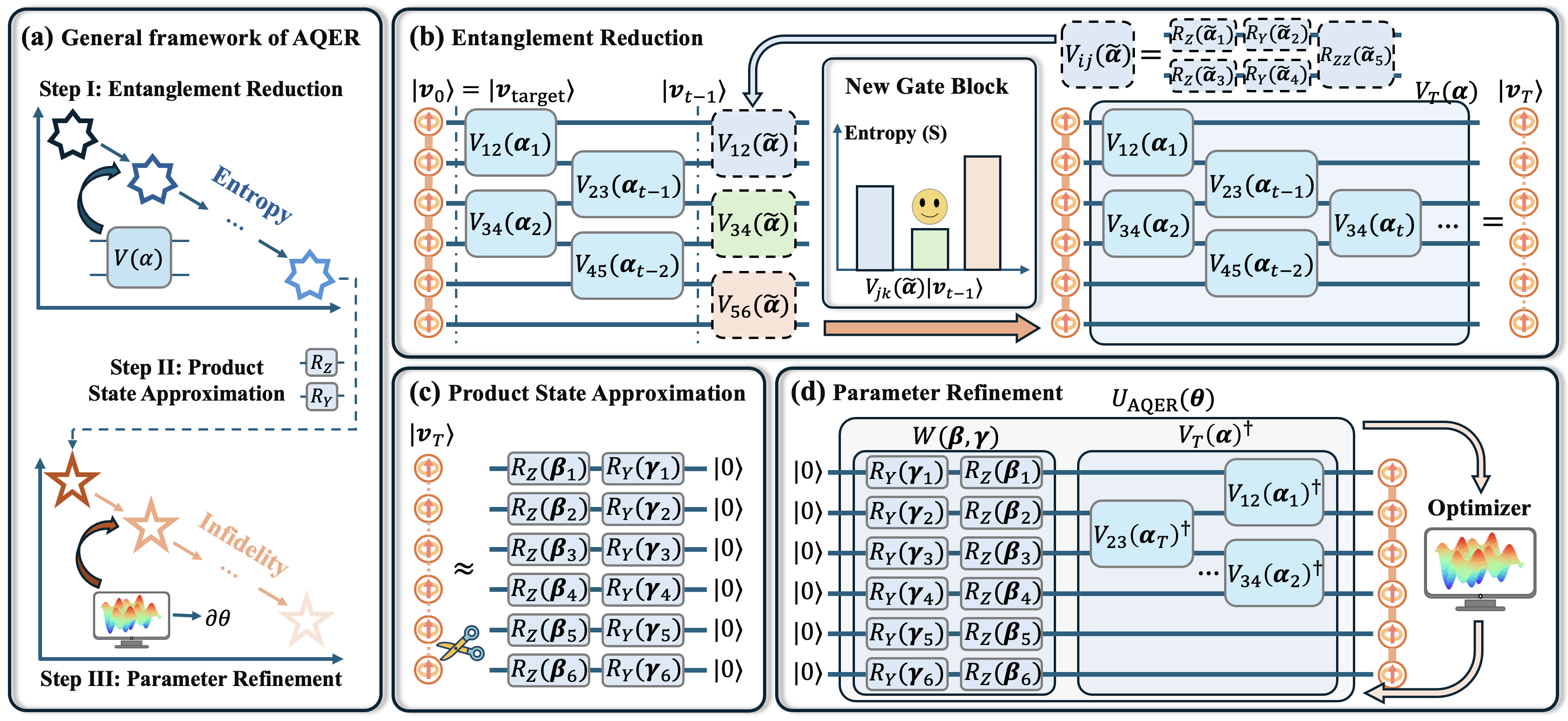}  
\caption{\small{The workflow of the AQER algorithm. (a) An overview of AQER, which consists of three steps. (b) \textit{Step I: entanglement reduction}. This step iteratively appends two-qubit gate blocks to progressively reduce the entanglement of the input $|\bm{v}_{\rm target}\>$ with the circuit $V_T(\bm{\alpha})$. (c) \textit{Step II: product state approximation}. This step approximates the low-entanglement state $|\bm{v}_T\>$ by applying single-qubit rotations $\{R_Z(\bm{\beta}_n)\}_{n=1}^{N}$ and $\{R_Y(\bm{\gamma}_n)\}_{n=1}^{N}$ to the initial state ${|0\>}^{\otimes N}$. (d) \textit{Step III: parameter refinement}. This step finetunes all circuit parameters in $\bmt=(\bm{\alpha}, \bm{\beta}, \bm{\gamma})$ to minimize the infidelity and obtain the final AQL $U_{\rm AQER}(\bmt^*)$. }}
\label{AQER_main_fig}
\end{figure}

The theoretical role of the entanglement measure $\mathcal{S}$ in determining the approximation error suggests that it can serve as a practical indicator of the quality of an AQL. Driven by this insight, we propose AQER, an efficient and scalable AQL that solves the optimization problem in Eq.~(\ref{AQER_method_AQl_framework_eq}) by constructing the gate sequence guided by the principle of maximal entanglement reduction.  In the remainder of this subsection, we present the implementation of AQER.

\textbf{Overview of AQER:} As illustrated in Fig.~\ref{AQER_main_fig}(a), AQER consists of three key components: (I) suppressing the entanglement measure $\mathcal{S}$ in the input data by incrementally adding quantum gates; (II) applying single-qubit rotation gates for correcting leading errors; and (III)  optimizing the circuit parameters for further refinement. Next, we explain these procedures separately.

\textit{Step I: Entanglement Reduction.}
The goal of this step is to construct a gate sequence that reduces the entanglement of the target input by monitoring the entanglement measure $\mathcal{S}$. 
As illustrated in Fig.~\ref{AQER_main_fig}(b), this is achieved by \textbf{iteratively} appending two-qubit gate blocks $V_{\mathcal{I}_t}(\bm{\alpha}_t)$, where the tunable variables include the acting qubit pair $\mathcal{I}_t=(j_t, k_t)$ with $1 \leq j_t \neq k_t \leq N$ and the gate parameters $\bm{\alpha}_{t}$.
Specifically, the blocks $\{V_{\mathcal{I}_t}(\bm{\alpha}_t)\}$ have the identical structure $R_{ZZ} R_{Y} R_{Z}$ with single-qubit rotation gates applied to both qubits. Let $V_{t-1}(\bm{\alpha}_{1:t-1})=V_{\mathcal{I}_{t-1}}(\bm{\alpha}_{t-1}) \cdots V_{\mathcal{I}_1}(\bm{\alpha}_1)$ be the gate sequence generated after $(t-1)$ iterations. At the $t$-th iteration, the acting qubit pair $\mathcal{I}_{t}$ and gate parameters $\bm{\alpha}_{t}$  of the block $V_{\mathcal{I}_{t}}(\bm{\alpha}_{t})$ are obtained by solving the following optimization problem
\begin{equation}\label{AQER_method_algorithm_ER_eq}
\mathcal{I}_{t}, \bm{\alpha_{t}}= \arg\min_{\tilde{\mathcal{I}}, \tilde{\bm{\alpha}} } \mathcal{S} \left( V_{\tilde{\mathcal{I}}} (\tilde{\bm{\alpha}})  |\bm{v}_{t-1}\> \right),
\end{equation}
where $\mathcal{S}$ is the entanglement measure defined in Theorem~\ref{AQER_method_tn_init_projection_prop} and $|\bm{v}_{t-1}\>:=V_{t-1}(\bm{\alpha}_{1:t-1})|\bm{v}_{\rm target}\>$.
This iterative process is repeated for $T$ times such that the entanglement of the state $|\bm{v}_T\> = V_T(\bm{\alpha}) |\bm{v}_{\rm target} \>$ is sufficiently small, where $\bm{\alpha}:=\bm{\alpha}_{1:T}=(\bm{\alpha}_1, \cdots, \bm{\alpha}_T)$.

\textit{Step II: Product State Approximation.} 
After Step I, AQER aims to further suppress the approximation error by applying single-qubit rotation gates, as illustrated in Fig.~\ref{AQER_main_fig}(c). The motivation for this step is that, as suggested in Theorem~\ref{AQER_method_tn_init_projection_prop}, the state $|\bm{v}_T\>$ with low entanglement can be well approximated by a product state. To prepare such a product state from a standard initial state $|0\>^{\otimes N}$, we apply additional single-qubit rotations $W(\bm{\beta},\bm{\gamma}) = \otimes_{i=1}^{N} (R_Z(\bm{\beta}_i)R_Y(\bm{\gamma}_i))$. Note that the optimal parameters $\bm{\beta}=(\bm{\beta}_1,\cdots, \bm{\beta}_N)$ and $\bm{\gamma}=(\bm{\gamma}_1,\cdots, \bm{\gamma}_N)$ can be \textbf{explicitly derived} without numerical optimization, as indicated in the corollary below. Refer to  Appendix~\ref{AQER_app_proof_lemma} for the explicit form and derivation. 

\begin{corollary}[informal]\label{AQER_method_coro_vt_beta_gamma}
Given constant access to the quantum state $|\bm{v}_T\> $ with reduced entanglement, the explicit form of each parameter in $(\bm{\beta},\bm{\gamma})$ can be derived without optimization.
\end{corollary}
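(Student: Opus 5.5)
The plan is to choose the product state qubit by qubit as the dominant eigenvector of each single-qubit reduced density matrix of $|\bm{v}_T\>$, and then to read off the rotation angles $(\bm{\beta}_i,\bm{\gamma}_i)$ in closed form from that eigenvector's Bloch vector. This is also the construction behind the upper bound $f_2$ in Theorem~\ref{AQER_method_tn_init_projection_prop}. The only quantum information needed per qubit is the three Pauli expectations $\<X_i\>,\<Y_i\>,\<Z_i\>$ on $|\bm{v}_T\>$, which can be estimated with a constant number of measurement settings. This is how I interpret ``constant access.''

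\emph{Step 1: single-qubit reductions.} For each $i\in[N]$ write
\[
\rho_i=\Tr_{[N]\setminus\{i\}}\bigl[|\bm{v}_T\>\<\bm{v}_T|\bigr]=\tfrac12\bigl(I+r_x X+r_y Y+r_z Z\bigr),
\]
with $r_\sigma=\<\bm{v}_T|\sigma_i|\bm{v}_T\>$ and $r=\|\bm{r}\|\le 1$. Then $\Tr[\rho_i^2]=(1+r^2)/2$. Low entanglement $\mathcal{S}_{\{i\}}$ therefore means $r$ is close to $1$, so the dominant eigenvector $|\phi_i\>$ is a good local approximation.

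\emph{Step 2: the explicit angles.} The state $R_Z(\beta)R_Y(\gamma)|0\>$ has Bloch vector $(\sin\gamma\cos\beta,\ \sin\gamma\sin\beta,\ \cos\gamma)$, up to a global phase. Matching it to $\bm{r}/r$ gives
\[
\bm{\gamma}_i=\arccos\!\bigl(r_z/r\bigr),\qquad \bm{\beta}_i=\operatorname{atan2}(r_y,r_x).
\]
When $r=0$, any choice is allowed; when $r_x=r_y=0$, $\beta$ is arbitrary. The product state $|\psi'_{\rm product}\>=W(\bm{\beta},\bm{\gamma})|0\>^{\otimes N}$ then equals $\otimes_i|\phi_i\>$.

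\emph{Step 3: justification.} I will argue that this choice is the one used, or at least admissible, in the proof of the upper bound $f_2$ in Theorem~\ref{AQER_method_tn_init_projection_prop}. With this choice, $|\<\bm{v}_{\rm target}|V_T^\dagger W|0\>^{\otimes N}|^2=|\<\bm{v}_T|\otimes_i\phi_i\>|^2$, so the theorem's guarantee (with $U=V_T^\dagger$) transfers directly. I also plan to note that $|\phi_i\>$ locally maximizes $\<\phi|\rho_i|\phi\>=(1+r)/2$.

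\emph{Main obstacle.} The main difficulty is the gap between per-qubit optimality and global optimality. The locally optimal eigenvector choice need not maximize the global overlap with $|\bm{v}_T\>$. The corollary only claims an explicit, optimization-free choice, not a global optimum, so I would phrase the statement in terms of the guarantee of Theorem~\ref{AQER_method_tn_init_projection_prop}. If the appendix's definition of ``optimal'' is stronger, I would fall back on the small-$S$ regime. There the overlap is dominated by the product of local fidelities $\prod_i (1+r_i)/2$, up to corrections of order $S$, which makes the local choice optimal to leading order.
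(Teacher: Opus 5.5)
Your proposal is correct and matches the paper's proof (Lemmas~\ref{AQER_method_rdm_approximation} and~\ref{AQER_method_rdm_beta_gamma_value}). The paper's angles $\beta=\arg(\rho_{10})$ and $\gamma=\frac{\pi}{2}-\arcsin\frac{\rho_{00}-\rho_{11}}{\sqrt{4|\rho_{10}|^2+(\rho_{00}-\rho_{11})^2}}$ are exactly your $\operatorname{atan2}(r_y,r_x)$ and $\arccos(r_z/r)$, because $\rho_{10}=(r_x+ir_y)/2$ and $\rho_{00}-\rho_{11}=r_z$; the paper likewise shows the resulting local fidelity equals the top eigenvalue $\lambda_1=(1+r)/2$ and claims only this per-qubit optimality, so your small-$S$ fallback on global optimality is not needed.
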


\textit{Step III: Parameter Refinement.} 
The final step of AQER further enhances the accuracy by fine-tuning the parameters of the gate sequences constructed in Steps I and II, as illustrated in Fig.~\ref{AQER_main_fig}(d). For clarity, we denote the combined circuit as $U_{\rm AQER}(\bmt)=V_T(\bm{\alpha})^\dag W(\bm{\beta},\bm{\gamma})$, where $\bmt=(\bm{\alpha}, \bm{\beta}, \bm{\gamma})$ collects all tunable parameters. Parameter refinement then amounts to minimizing the infidelity between the target state $|\bm{v}_{\rm target}\>$ and the state $U_{\rm AQER}(\bmt)|0\>^{\otimes N}$, i.e.
\begin{equation}\label{AQER_method_infidelity_loss_eq}
\bm{\theta}^{*} = \arg\min_{\bm{\theta}} \left( 1 - |\<\bm{v}_{\rm target}| U_{\rm AQER}(\bm{\theta})|0\>^{\otimes N} |^2 \right) .
\end{equation}
After the optimization of Eq.~(\ref{AQER_method_infidelity_loss_eq}), we obtain the AQL: $|\bm{v}_{\rm load}\> = e^{-i{g}} U_{\rm AQER}(\bmt^*) |0\>^{\otimes N}$, where $g$ is a global phase that does not affect any measurement outcomes.

\textbf{Remark.} (i) An important feature of AQER is the usage of the entanglement measure $\mathcal{S}$ as a proxy for the approximation error. For quantum data, evaluating and optimizing $\mathcal{S}$ is efficient since it involves only local measurements. For classical data, AQER can be simulated classically to construct $U_{\rm AQER}$. (ii) By first suppressing the entanglement measure $\mathcal{S}$, AQER not only reduces the approximation error but also distinguishes itself from prior circuit-based methods by mitigating barren plateau issues, thereby enhancing trainability and scalability. See Appendices~\ref{AQER_app_discuss_remark} and \ref{app:aqer_complexity} for additional discussion and the time-complexity analysis of AQER. (iii) In general, AQER is a heuristic algorithm. However, for certain structured vector or state families, AQER admits efficient performance guarantees. In particular, as shown in Appendix~\ref{AQER_IQP_discrete_entanglement_proof}, AQER provably generates an optimal loading circuit for IQP states with polynomial resource cost.

\section{Experiments}
\label{AQER_experiment}

We conduct extensive numerical simulations to evaluate the performance of AQER in loading both classical and quantum datasets. Further implementation details and additional results are provided in Appendices~\ref{aqer_app_experiment_setting} and \ref{aqer_app_experiment}, respectively.

\subsection{Dataset construction for AQL}

We briefly introduce datasets used in this work with more details in Appendix~\ref{aqer_app_experiment_preprocess}.

\textbf{Classical data.} 
We use three standard classical datasets: MNIST~\cite{726791}, CIFAR-10~\cite{krizhevsky2009learning}, and SST-2~\cite{socher2013recursive}. These datasets have been widely employed to benchmark AQL methods on diverse data types across vision and language tasks. Specifically, MNIST contains $28\times28$ grayscale handwritten digits; CIFAR-10 includes $32\times32$ RGB images of natural objects such as {cats} and {cars}; and SST-2 is a sentiment classification dataset, for which we use a pretrained Sentence-BERT model~\cite{reimers2019sentence} to obtain $1024$-dimensional sentence embeddings. Each dataset is preprocessed into $M=50$ normalized vectors, which are then encoded as target states using either amplitude encoding $|\bm{v}\>=\sum_{j=1}^{2^N} \bm{v}_j|j\>$ or compact encoding~\cite{blank2022compact} $|\bm{v}\>=\sum_{j=1}^{2^N} (\bm{v}_j+\imath\bm{v}_{j+2^N})|j\>$ with $N\in\{10,11\}$ qubits. 

\textbf{Quantum data.} 
We construct two types of quantum datasets: (i) synthetic states generated from random quantum circuits (RQCs); and (ii) ground states of the one-dimensional transverse-field Ising model (1D TFIM)~\cite{pfeuty1970one}, denoted as S-RQC and GS-TFIM, respectively. These datasets represent typical examples of states arising from quantum circuit evolutions and many-body quantum systems.
S-RQC contains $M=50$ states generated by applying different RQCs to the state $|0\>^{\otimes N}$ Each RQC is sampled from the set ${\textit{RandomShuffle}} \left(\{CZ_{p_k, q_k}\}_{k=1}^{W} \cup \{{R}_{{p_k}}\}_{k=W+1}^{4W} \right)$, which includes $W$ $CZ$ gates and $3W$ single-qubit rotations on randomly chosen qubits $\{p_k, q_k\}$. Here, we set $N=10$ and $N_{\rm RQC}=40$. GS-TFIM contains ground states of the 1D TFIM Hamiltonian $H = - \sum_{i=1}^{N-1} J Z_{i} Z_{i+1} - \sum_{i=1}^{N} g X_{i}$. We consider coefficients $g=1$ and $J \in \{0.8, 0.9, 1, 1.1, 1.2\}$ to construct datasets of size $M=5$ for each $N\in \{10,20,30,40,50\}$.

\begin{table*}[t]
\caption{\small{Infidelity ($\downarrow$) of different AQL methods on MNIST, CIFAR-10, SST-2, S-RQC, and GS-TFIM datasets. We compare AQER with $G\in \{20,40,80\}$ against reference methods, where the latter use equal or slightly larger $G$ due to feasibility constraints detailed in Appendix~\ref{AQER_app_hyper_setting_reference}. Values are reported with the mean (and standard deviation) over $M$ samples. The \blue{best} and \xb{second-best} results are highlighted in \blue{blue} and \xb{orange}, respectively. }}
\centering
{
\fontsize{6.6}{7}\selectfont
\setlength{\tabcolsep}{3pt}
\begin{tabular}{l|c|c|c|c|c}
\toprule
\multicolumn{1}{c}{} &
\multicolumn{1}{c}{MNIST} &
\multicolumn{1}{c}{CIFAR-10} &
\multicolumn{1}{c}{SST-2} &
\multicolumn{1}{c}{S-RQC} &
\multicolumn{1}{c}{GS-TFIM} \\
\midrule
MPS &
\makecell{
\begin{tabular}{@{}c@{\hspace{0pt}}c@{\hspace{2pt}}c@{\hspace{2pt}}c@{}}
$G$ & 36 & 54 & 90 \\
\midrule
& 0.330 & 0.287 & 0.237 \\
& (0.101) & (0.089) & (0.076)
\end{tabular}
} &
\makecell{
\begin{tabular}{@{}c@{\hspace{0pt}}c@{\hspace{2pt}}c@{\hspace{2pt}}c@{}}
$G$ & 30 & 60 & 90 \\
\midrule
& \xb{0.068} & 0.056 & 0.049 \\
& (0.038) & (0.031) & (0.028)
\end{tabular}
} &
\makecell{
\begin{tabular}{@{}c@{\hspace{0pt}}c@{\hspace{2pt}}c@{\hspace{2pt}}c@{}}
$G$ & 36 & 54 & 90 \\
\midrule
& 0.901 & 0.870 & 0.814 \\
& (0.022) & (0.022) & (0.022)
\end{tabular}
} &
\makecell{
\begin{tabular}{@{}c@{\hspace{0pt}}c@{\hspace{2pt}}c@{\hspace{2pt}}c@{}}
$G$ & 27 & 54 & 81 \\
\midrule
& 0.746 & 0.663 & 0.605 \\
& (0.100) & (0.118) & (0.127)
\end{tabular}
} &
\makecell{
\begin{tabular}{@{}c@{\hspace{0pt}}c@{\hspace{2pt}}c@{\hspace{2pt}}c@{}}
$G$ & 36 & 72 & 90 \\
\midrule
& \xb{0.056} & 0.039 & 0.041 \\
& (0.004) & (0.003) & (0.002)
\end{tabular}
} \\
\midrule
HEC &
\makecell{
\begin{tabular}{@{}c@{\hspace{0pt}}c@{\hspace{2pt}}c@{\hspace{2pt}}c@{}}
$G$ & 20 & 40 & 80 \\
\midrule
& 0.430 & 0.234 & 0.103 \\
& (0.089) & (0.079) & (0.042)
\end{tabular}
} &
\makecell{
\begin{tabular}{@{}c@{\hspace{0pt}}c@{\hspace{2pt}}c@{\hspace{2pt}}c@{}}
$G$ & 22 & 44 & 88 \\
\midrule
& 0.081 & 0.050 & 0.030 \\
& (0.039) & (0.026) & (0.016)
\end{tabular}
} &
\makecell{
\begin{tabular}{@{}c@{\hspace{0pt}}c@{\hspace{2pt}}c@{\hspace{2pt}}c@{}}
$G$ & 20 & 40 & 80 \\
\midrule
& 0.892 & \xb{0.759} & 0.546 \\
& (0.016) & (0.012) & (0.007)
\end{tabular}
} &
\makecell{
\begin{tabular}{@{}c@{\hspace{0pt}}c@{\hspace{2pt}}c@{\hspace{2pt}}c@{}}
$G$ & 20 & 40 & 80 \\
\midrule
& 0.731 & 0.484 & 0.367 \\
& (0.097) & (0.133) & (0.161)
\end{tabular}
} &
\makecell{
\begin{tabular}{@{}c@{\hspace{0pt}}c@{\hspace{2pt}}c@{\hspace{2pt}}c@{}}
$G$ & 20 & 40 & 80 \\
\midrule
& 0.168 & \xb{0.020} & \xb{0.007} \\
& (0.090) & (0.013) & (0.002)
\end{tabular}
} \\
\midrule
AQCE &
\makecell{
\begin{tabular}{@{}c@{\hspace{0pt}}c@{\hspace{2pt}}c@{\hspace{2pt}}c@{}}
$G$ & 20 & 40 & 80 \\
\midrule
& \xb{0.296} & \xb{0.145} & \xb{0.051} \\
& (0.083) & (0.053) & (0.023)
\end{tabular}
} &
\makecell{
\begin{tabular}{@{}c@{\hspace{0pt}}c@{\hspace{2pt}}c@{\hspace{2pt}}c@{}}
$G$ & 30 & 45 & 90 \\
\midrule
& \xb{0.068} & \xb{0.048} & \xb{0.024} \\
& (0.036) & (0.026) & (0.014)
\end{tabular}
} &
\makecell{
\begin{tabular}{@{}c@{\hspace{0pt}}c@{\hspace{2pt}}c@{\hspace{2pt}}c@{}}
$G$ & 20 & 40 & 80 \\
\midrule
& \xb{0.891} & 0.761 & \xb{0.518} \\
& (0.017) & (0.018) & (0.013)
\end{tabular}
} &
\makecell{
\begin{tabular}{@{}c@{\hspace{0pt}}c@{\hspace{2pt}}c@{\hspace{2pt}}c@{}}
$G$ & 30 & 45 & 90 \\
\midrule
& \xb{0.534} & \xb{0.363} & \xb{0.267} \\
& (0.149) & (0.156) & (0.110)
\end{tabular}
} &
\makecell{
\begin{tabular}{@{}c@{\hspace{0pt}}c@{\hspace{2pt}}c@{\hspace{2pt}}c@{}}
$G$ & 20 & 40 & 80 \\
\midrule
& 0.108 & 0.068 & 0.056 \\
& (0.026) & (0.024) & (0.015)
\end{tabular}
} \\
\midrule
\makecell{AQER \\ (Ours)} &
\makecell{
\begin{tabular}{@{}c@{\hspace{0pt}}c@{\hspace{2pt}}c@{\hspace{2pt}}c@{}}
$G$ & 20 & 40 & 80 \\
\midrule
& \blue{0.195} & \blue{0.090} & \blue{0.034} \\
& (0.060) & (0.034) & (0.015)
\end{tabular}
} &
\makecell{
\begin{tabular}{@{}c@{\hspace{0pt}}c@{\hspace{2pt}}c@{\hspace{2pt}}c@{}}
$G$ & 20 & 40 & 80 \\
\midrule
& \blue{0.043} & \blue{0.029} & \blue{0.018} \\
& (0.023) & (0.016) & (0.010)
\end{tabular}
} &
\makecell{
\begin{tabular}{@{}c@{\hspace{0pt}}c@{\hspace{2pt}}c@{\hspace{2pt}}c@{}}
$G$ & 20 & 40 & 80 \\
\midrule
& \blue{0.819} & \blue{0.652} & \blue{0.406} \\
& (0.017) & (0.013) & (0.008)
\end{tabular}
} &
\makecell{
\begin{tabular}{@{}c@{\hspace{0pt}}c@{\hspace{2pt}}c@{\hspace{2pt}}c@{}}
$G$ & 20 & 40 & 80 \\
\midrule
& \blue{0.285} & \blue{0.128} & \blue{0.067} \\
& (0.152) & (0.106) & (0.069)
\end{tabular}
} &
\makecell{
\begin{tabular}{@{}c@{\hspace{0pt}}c@{\hspace{2pt}}c@{\hspace{2pt}}c@{}}
$G$ & 20 & 40 & 80 \\
\midrule
& \blue{0.028} & \blue{0.009} & \blue{0.003} \\
& (0.011) & (0.006) & (0.001)
\end{tabular}
} \\
\bottomrule
\end{tabular}
}
\label{AQER_table_infid}
\end{table*}

\subsection{Experimental settings}

\textbf{Reference AQL methods.} 
We select three typical reference AQL methods to provide a comprehensive comparison with AQER. The three reference methods are: 
(i) TN method based on 1D MPS~\cite{iaconis2023tensor}, which represents approaches with guarantees on low-entanglement data; (ii) hardware-efficient circuit (HEC)-based method~\cite{PhysRevResearch.4.023136}, where the circuit is commonly used in VQAs; (iii) automatic quantum circuit encoding (AQCE)~\cite{shirakawa2021automatic}, which illustrates recent advances in non-variational AQL.

\textbf{Evaluation metrics.} 
To quantify the accuracy of an AQL, we use infidelity $1-|\<\bm{v}_{\rm load}| \bm{v}_{\rm target}\>|^2$ as the measure of approximation error. The infidelity ranges from $0$ to $1$, with \textit{smaller values indicating lower approximation error and higher accuracy}. To evaluate the efficiency of an AQL, we consider the quantum resources to prepare the state from the encoding circuit. In particular, we record the number of employed two-qubit gates (e.g., $CZ$ and $R_{ZZ}$) in the encoding circuit. This quantity, denoted by $G$, serves as the measure of quantum resource consumption, since it dominates the circuit runtime~\cite{ma2024understanding}. A smaller $G$ corresponds to a more efficient AQL.

\textbf{Hyperparameter settings of AQER.} 
The number of iterations in Step I of AQER is set to $T \in \{5,10,20,40,60,80,100\}$ by default. For GS-TFIM with large qubit numbers ($N\geq 20$), we use $T \in \{20,40,60,80,100,120,160,200\}$, since larger systems generally require more gates to capture quantum correlations. In AQER, the iteration count $T$ controls the two-qubit gate count $G$, with one iteration introducing one two-qubit gate. In the $t$-th iteration of Step I, the parameters $\bm{\alpha}_t$ in Eq.~(\ref{AQER_method_algorithm_ER_eq}) are first initialized to zero and then optimized using the Nelder–Mead method with a convergence tolerance of $10^{-4}$. The qubit index set $\mathcal{I}_t$ is optimized by selecting the qubit pair that minimizes $S$ through adjusting $\bm{\alpha}_t$. Step III performs optimization using the Adam optimizer with a learning rate of $10^{-2}$ for $T_3=2000$ iterations. For quantum datasets, quantities such as $S$ and gradients are estimated from $10^5$ simulated measurement shots by default. 

\begin{figure*}[t]
\raggedright
\includegraphics[width=.98\linewidth]{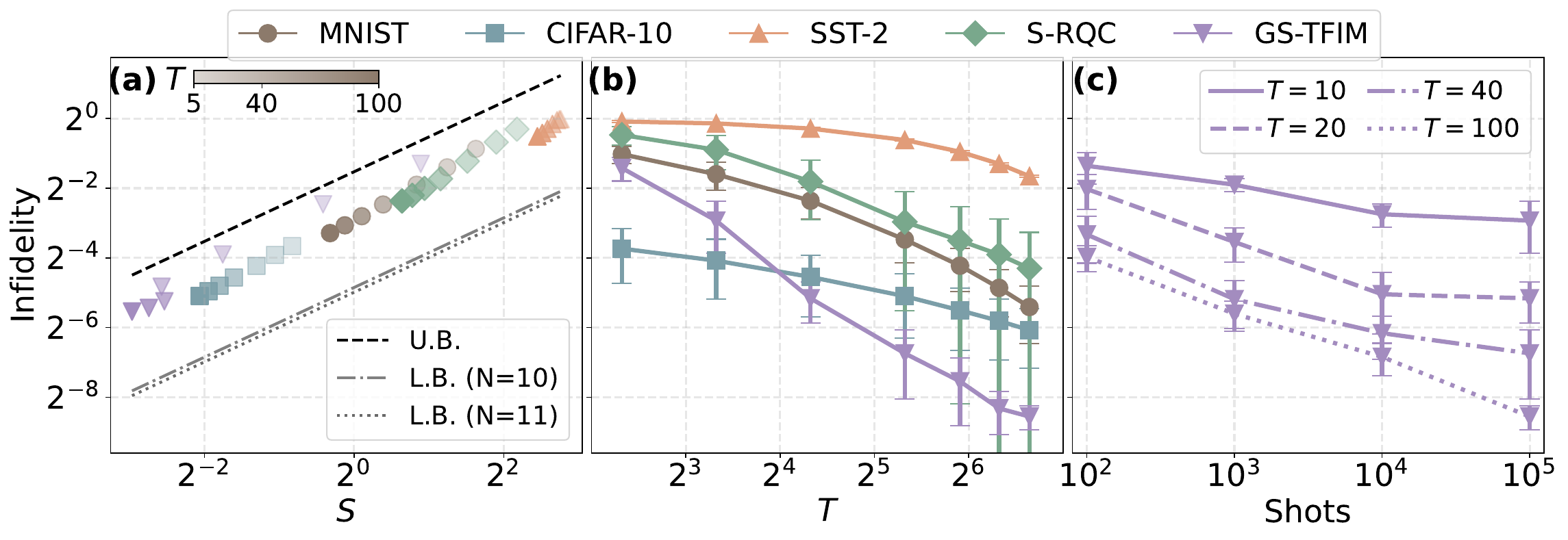}
\caption{\small{Performance of AQER across MNIST, CIFAR-10, SST-2, S-RQC, and GS-TFIM datasets, distinguished by different colors and markers. (a) Infidelity versus the entanglement measure value $S$ after Step II of AQER, averaged over $M$ samples. Color bars from light to dark indicate increasing $T$. Dashed lines indicate the linearized upper (U.B.) and lower (L.B.) bounds in Theorem~\ref{AQER_method_tn_init_projection_prop}, which neglect higher-order terms. (b) Infidelity versus different $T$ values after Step III of AQER across all datasets. (c) Infidelity versus different measurement shots for the GS-TFIM dataset, with different $T \in \{10,20,40,100\}$.}}
\label{AQER_fig_aqer_performance}
\end{figure*}

\subsection{Experimental results}

We evaluate AQER on both classical and quantum datasets to verify its accuracy and efficiency, as well as its trainability and scalability on large systems, and to compare its performance against existing AQL methods. Additional numerical results are provided in Appendix~\ref{aqer_app_experiment}.

\textbf{AQER outperforms all reference AQL methods for both classical and quantum data.}
We first compare the performance of various AQL methods by measuring their approximation errors (infidelity) for different two-qubit gate counts $G$ across multiple datasets. Table~\ref{AQER_table_infid} lists these results for MNIST, CIFAR-10, SST-2, S-RQC, and GS-TFIM (with $N=10$). Results for AQER are shown with $G \in \{20, 40, 80\}$. For the referenced AQL methods, $G$ is set to the same or slightly larger values, as determined by the feasibility constraints explained in Appendix~\ref{AQER_app_hyper_setting_reference}. It can be observed that AQER consistently surpasses existing AQL methods by achieving the lowest infidelity with the same or even smaller $G$. The most pronounced improvement is observed on S-RQC, where AQER reduces the infidelity by more than $60\%$ relative to the second-best method (AQCE) for $G \in \{40,80\}$, and further achieves lower infidelity while using $50\%$ fewer two-qubit gates than other methods. These results validate the advantage of AQER, which achieves lower infidelity with equal or even fewer two-qubit gates than existing AQL methods.

\textbf{AQER consistently decreases the infidelity by reducing the entanglement measure $\mathcal{S}$.} 
We next examine the effectiveness of the entanglement entropy reduction of Step I in decreasing the approximation error of AQER. To do this, we record the value of the entanglement measure ${S}$ in Step I with different settings of the iteration times $T \in \{5,10,20,40,60,80,100\}$, and also record the corresponding approximation error after Step II. These results are shown in Fig.~\ref{AQER_fig_aqer_performance}(a). For all datasets, increasing $T$ generally shifts points toward lower $S$ and infidelity, which stay within the theoretical upper and lower bounds given by Theorem~\ref{AQER_method_tn_init_projection_prop}, demonstrating that AQER progressively expands the circuit to reduce entanglement and improve AQL accuracy. For example, increasing $T$ from 5 to 100 reduces $S$ by roughly fourfold for the MNIST dataset, with the corresponding infidelity decreasing by a similar factor, which validates the effect of entanglement reduction on lowering infidelity.

\textbf{Effect of two-qubit gate count and shot number on AQER performance.}
We further investigate how the number of two-qubit gates and measurement shots influence AQER. 
In particular, we conduct experiments with varying $T \in \{5,10,20,40,60,80,100\}$ and shots in $\{10^2, 10^3, 10^4, 10^5\}$. The infidelity versus varying $T$ after Step III of AQER for different datasets is shown in Fig.~\ref{AQER_fig_aqer_performance}(b), while the effect of different measurement shots for the GS-TFIM dataset is illustrated in Fig.~\ref{AQER_fig_aqer_performance}(c). Larger $T$ values lead to a significant reduction in infidelity. For example, increasing $T$ from 5 to 100 decreases the infidelity for the GS-TFIM dataset from above $2^{-2}$ to below $2^{-8}$. Similarly, increasing the number of shots generally reduces infidelity by suppressing statistical noise in circuit generation and optimization. This effect is more pronounced for larger $T$, with the reduction being less than 4 for $T=10$ and more than 16 for $T=100$. These results quantitatively demonstrate that larger circuits combined with sufficient measurement shots effectively improve AQER performance.

\begin{figure*}[t]
\raggedright
\includegraphics[width=.98\linewidth]{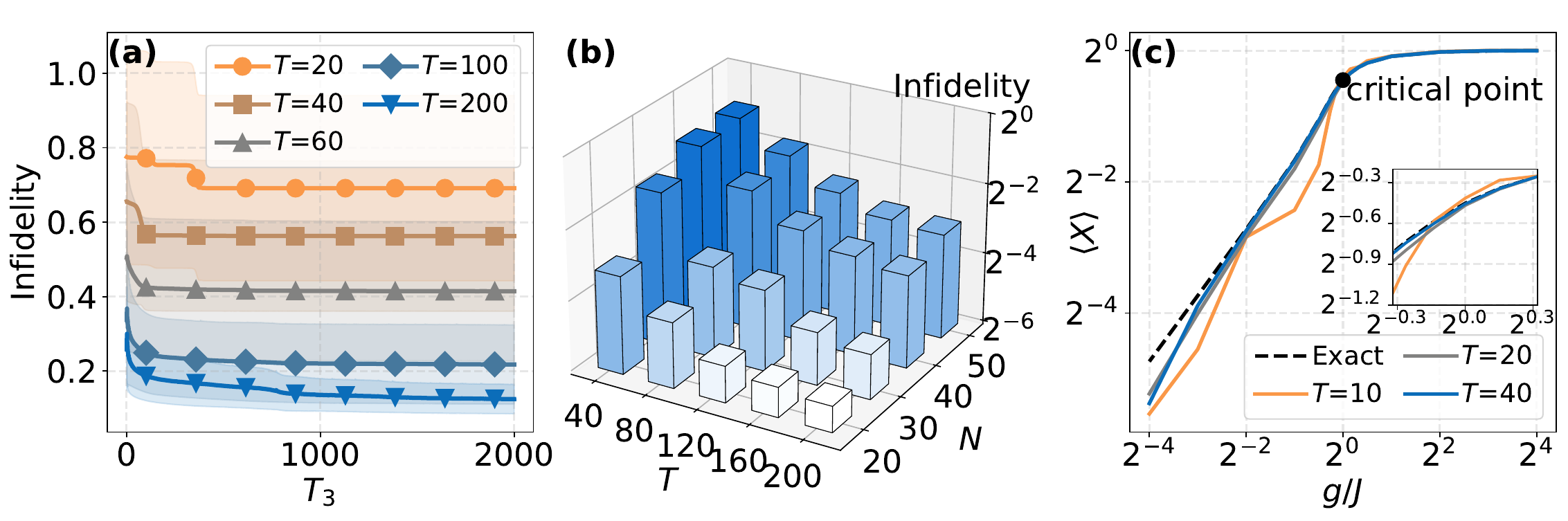}
\caption{\small{Performance of AQER on the GS-TFIM dataset. (a) Infidelity during Step III optimization across different $T$ values with $N=50$ qubits. (b) Infidelity for different qubit numbers $N$ and Step I iteration times $T$. (c) Expectation values of the averaged magnetization $\<X\>$ measured on AQER-loaded GS-TFIM states for different $g/J$ values with $N=10$. Each curve corresponds to a different $T \in \{10,20,40\}$. }}
\label{AQER_fig_TFIM}
\end{figure*}

\textbf{The trainability of AQER.}
We then demonstrate the trainability of AQER on large systems via experiments on the GS-TFIM dataset with $N=50$ qubits. The parameter optimization in Step III of AQER for varying $T \in \{20,40,60,100,200\}$ is shown in Fig.~\ref{AQER_fig_TFIM}(a). The optimization curves do not exhibit barren plateaus, which would otherwise trap the process at high infidelity near 1. The initial infidelity is already far from $1$, consistent with Theorem~\ref{AQER_method_tn_init_projection_prop}. For instance, with $T=200$, the infidelity starts around 0.3 and decreases effectively to around 0.1. These results demonstrate that the entanglement-reduction mechanism in AQER successfully mitigates barren plateau effects in Step~III, ensuring trainability and serving as a prerequisite for its scalability.

\textbf{Scalability of AQER.}
To validate the scalability of AQER, we conduct experiments on the GS-TFIM dataset with varying qubit numbers $N \in \{20,30,40,50\}$ and $T \in \{40,80,120,160,200\}$. As illustrated in Fig.~\ref{AQER_fig_TFIM}(b), for all system sizes, infidelity consistently decreases as $T$ increases. In particular, AQER maintains roughly constant infidelity across different $N$ when $T$ scales linearly with $N$, specifically following $T = 4N - 40$, highlighting favorable scalability with respect to both qubit number and two-qubit gate count.

\begin{wrapfigure}{r}{0.55\textwidth} 
  \centering
  \includegraphics[width=0.55\textwidth]{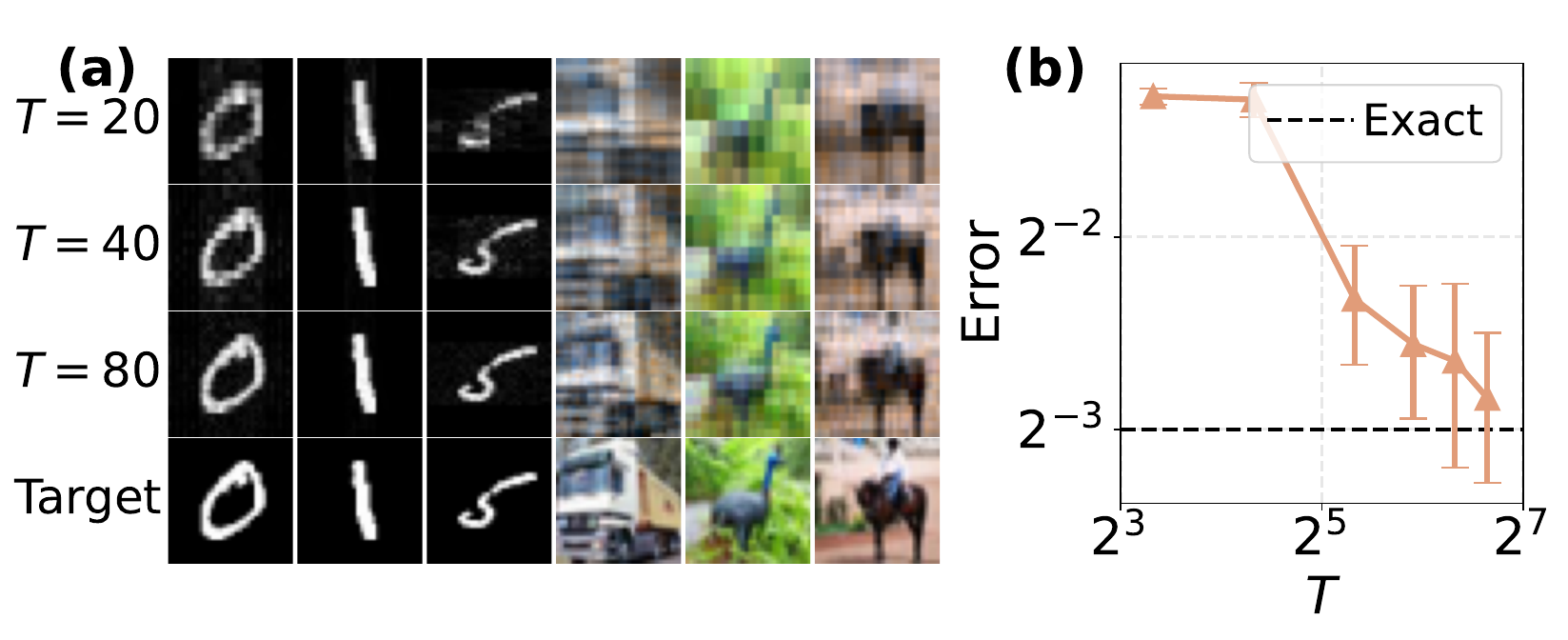}
  \caption{\small{Downstream performance of AQER on classical data. (a) Reconstructed MNIST and CIFAR-10 images using AQER with $T \in \{20,40,80\}$. (b) Classification error on the SST-2 dataset using AQER-loaded states with $T \in \{10,20,40,60,80,100\}$, compared to exact loading (black dashed line). }}
  \label{AQER_fig_downstream}
\end{wrapfigure}

\textbf{Downstream performance of AQER.}
Finally, we evaluate the performance of AQER in downstream tasks on both quantum and classical datasets. 
We begin with the detection of quantum phase transitions in the TFIM, which is measured by the averaged magnetization $\<X\>:=\frac{1}{N} \sum_{n=1}^{N} \< X_n\>$ of the ground state as the order parameter. We record the values of $\braket{X}$ for the AQER-loaded states with varying $T$ values as a function of $g/J$ as shown in Fig.~\ref{AQER_fig_TFIM}(c). These results demonstrate that AQER-loaded states can capture the transition from the ferromagnetic phase with $g/J<1$ to the paramagnetic phase with $g/J>1$, with a rapid change near the critical point at $g/J=1$. Increasing $T$ generally results in a more accurate approximation of exact values, indicating that larger circuits can encode more correlations relevant for the order parameter. Notably, even states with moderate $T=10$ capture the overall trend, suggesting that AQER-loaded states can effectively capture quantum phase transitions with limited quantum resources. Next, for classical data, we reconstruct MNIST and CIFAR-10 images with AQER using varying $T$ values. As illustrated in Fig.~\ref{AQER_fig_downstream}(a), reconstructions approach the targets as $T$ increases. We also evaluate binary classification on SST-2 using a quantum kernel method with details in Appendix~\ref{AQER_app_QKM}. As shown in Fig.~\ref{AQER_fig_downstream}(b), the error decreases with larger $T$ and approaches near the exact-loading error of $2^{-3}$ at $T=100$. These results demonstrate improved downstream performance with larger $T$.

\section{Conclusion}

In this work, we introduced a unified framework for AQLs and derived information-theoretic bounds showing that the infidelity is fundamentally controlled by the entanglement of quantum states. Based on this insight, we proposed AQER, a scalable and efficient method that systematically reduces entanglement to achieve low infidelity with efficient gate usage. Extensive benchmarks on classical and quantum datasets show that AQER consistently outperforms existing methods in both accuracy and circuit efficiency. These results provide both theoretical guarantees and a practical approach for efficient quantum data loading, enabling broader applications in data-dependent quantum algorithms.

\section*{Acknowledgments}

This project is supported by the National Research Foundation, Singapore, under its NRF Professorship Award No. NRF-P2024-001.





\bibliography{reference.bib}
\bibliographystyle{unsrt}

\newpage
\appendix
\onecolumn

\section{Preliminary}
\label{AQER_app_pre}

\subsection{Notations}

Here, we unify the notations used throughout this manuscript.
We denote by $[N]$ the set $\{1,\cdots,N\}$.
The symbol $\bm{a}_j$ denotes the $j$-th component of a vector $\bm{a}$.
The tensor product operation is denoted as ``$\otimes$''.
The conjugate transpose of a matrix $A$ is denoted as $A^{\dag}$.
The trace of a matrix $A$ is denoted as $\Tr[A]$.
The notation $\lfloor x \rfloor$ denotes the largest integer that is smaller than or equal to $x$.
We employ $\O$ to describe complexity notions.
The phase of a complex value $x$ is denoted by $\arg(x)$.

For vectors, the notation $\|\cdot\|_2$ represents the $\ell_2$ norm.
We use Schatten norms for operators: the trace norm
$
\|X\|_1 := \Tr\sqrt{X^{\dag}X}
$
and the operator norm
$
\|X\|_{\infty} := \sup_{\|\psi\|_2=1} \|X\psi\|_2,
$
which equals the largest singular value of $X$.
In particular, any density operator $\rho$ satisfies $\|\rho\|_1 = 1$ and $\|\rho\|_{\infty} \le 1$.

For a linear map (quantum channel) $\Phi$ acting on operators, we denote its diamond norm by $\|\Phi\|_{\diamond}$,
$
\|\Phi\|_{\diamond} := \sup_{d\ge 1} \sup_{X\neq 0}
\frac{\|(\Phi \otimes {\rm id}_d)(X)\|_1}{\|X\|_1},
$
where ${\rm id}_d$ denotes the identity channel on a $d$-dimensional ancilla system.
We use $I$ (or $I_d$) to denote the identity matrix and ${\rm id}$ to denote the identity channel on the underlying system.
A quantum channel $\mathcal{E}$ is completely positive and trace-preserving (CPTP).

\subsection{Related work}
\label{AQER_app_related_work}

The problem of quantum data loading has been extensively studied across multiple lines of research. For example, early investigations established that exact amplitude loading of an arbitrary $N$-qubit pure state requires exponentially many quantum resources. Specifically, it was shown that without assuming prior structure, ${\O}(2^N)$ single- and two-qubit gates are necessary for generic state preparation~\cite{PhysRevLett.85.1334,Kaye:01,PhysRevA.64.014303,PhysRevA.71.052330,PhysRevA.83.032302}. This scaling matches the intrinsic complexity of generic states: an $N$-qubit pure state corresponds to a normalized vector in $\C^{2^N}$ with $2^{N+1}-1$ degrees of freedom. Even with auxiliary qubits introduced to reduce circuit depth~\cite{10044235}, the total gate count remains exponential. For structured instances, more efficient constructions are possible: for instance, an $S$-sparse vector can be exactly loaded using $\O(SN)$ CNOT gates~\cite{9586240}. However, most states derived from classical datasets, such as images~\cite{726791,krizhevsky2009learning}, are neither sparse nor unstructured, motivating approximate approaches.

\textbf{Tensor-Network (TN)-Based Approximate Loading.}
Tensor-network methods leverage low-entanglement structure to efficiently represent quantum states and construct corresponding loading circuits. The most widely applied TN family for quantum data loading is the matrix product state (MPS), which provides a sequential representation of an $N$-qubit state using a bond dimension $k$. Given an MPS, one can construct a circuit of $\mathcal{O}(Nk^2)$ two-qubit gates that exactly prepares the state~\cite{PhysRevLett.95.110503}, while truncated singular value decomposition (SVD) allows controlled approximation for more general target states~\cite{SCHOLLWOCK201196,PhysRevA.101.032310}. Recent works have applied MPS-based methods to both synthetic distributions and real-world classical datasets. For example, Ref.~\cite{iaconis2023tensor} presented a method to convert MPS representations of images into quantum circuits with logarithmic scaling in the number of pixels, experimentally demonstrating amplitude encoding of complex images on a trapped-ion device. Ref.~\cite{jobst2023efficient} exploited the decaying Fourier spectrum of classical images to construct approximate MPS encodings, which reduce quantum circuit complexity and can be further refined using simple sequential circuits inspired by MPS structure. Other works have focused on efficiently preparing real-valued smooth probability distributions with linear-depth circuits derived from MPS~\cite{9259933,Iaconis2024}. These procedures combine classical preprocessing (e.g., Fourier feature compression) with MPS-based circuit synthesis to achieve high-fidelity state loading suitable for near-term quantum hardware. While TN-based methods offer principled and resource-efficient constructions for low-entanglement states, their applicability is limited when the target state exhibits high entanglement or strong long-range correlations, as the required bond dimension, which corresponds to the circuit depth and two-qubit gate count, can grow rapidly. 

\textbf{Circuit-based Approximate Loading.} Circuit-based methods directly construct quantum circuits to approximate a target state. These methods can be broadly categorized into variational and non-variational approaches.

\textit{Variational approaches.} Variational methods employ parameterized quantum circuits (PQCs), such as hardware-efficient or sequential ansatzes, and optimize their parameters to minimize infidelity with respect to the target state~\cite{PhysRevResearch.4.023136, PhysRevA.109.052423, rudolph2023synergistic, wang2024separable}. For real-valued data vectors, classical routines and measurements in the computational basis and, when necessary, in the Hadamard-transformed basis are used to capture both amplitude and sign information~\cite{PhysRevResearch.4.023136}. For complex-valued data vectors, fidelity can be used directly as the cost function, with classical shadow techniques employed to efficiently estimate the fidelity and its gradient during optimization~\cite{PhysRevA.109.052423}. Moreover, hybrid strategies are proposed to utilize TN-based initialization: the target state is first approximated using a tensor-network representation (e.g., MPS), mapped to a parameterized circuit, and then variationally fine-tuned~\cite{rudolph2023synergistic}. This approach retains the structural advantages of TNs while extending applicability beyond strictly low-entanglement states.

\textit{Non-variational approaches.} Non-variational methods iteratively refine local gates to approximate the target state. Shirakawa et al.~\cite{shirakawa2021automatic} introduced the AQCE algorithm, which sequentially updates two-qubit unitaries to improve fidelity. Rudolph et al.~\cite{Rudolph_2024} enhanced this framework by initializing AQCE with an MPS approximation rather than a trivial identity, demonstrating faster convergence and higher accuracy. Hybrid workflows that start from an MPS-based circuit and perform local iterative refinement combine the controlled approximation of TNs with the adaptive flexibility of circuit-level updates, providing efficient and accurate loading for moderately entangled datasets.

Despite their flexibility and broad applicability, they can suffer from barren plateaus and optimization traps, and unfavorable scaling for large qubit numbers or highly entangled target states \cite{you2021exponentially,wang2022symmetric,you2022convergence,liu2023analytic,you2023analyzing}. Moreover, circuit-based methods remain heuristic. Rigorous theoretical guarantees on fidelity and resource costs are generally absent, limiting systematic design and predictability for practical quantum data loading.

\subsection{Overview of Typical AQL Methods}
\label{aqer_app_aql}

In this section, we summarize several representative methods for approximate quantum loading (AQL) and show how they can be unified within the general AQL framework introduced in Section~3.1 in the main text. Recall that in this framework, an AQL procedure constructs a quantum circuit $U(\bmt;\mathcal{A})$ that loads the target state $|\bm{v}_{\rm target}\>$ from an initial product state $|\psi_{\rm product}\>$, and optimizes it by adjusting the parameters $\bmt$, or the architecture $\mathcal{A}$, or both. 

\subsubsection{Tensor-Network (TN) Based Methods}

Here we introduce the AQL method based on the matrix product state (MPS)~\cite{PhysRevA.101.032310}, which forms a one-dimensional tensor network. 
The MPS method iteratively constructs an MPS approximation of the current state with bond dimension $2$. From this approximation, a sequence of $N-1$ two-qubit unitaries is obtained, which are inverted and applied to the current state to generate an updated state. 
Specifically, in the $(i+1)$-th iteration, the MPS method considers the current state
\begin{equation}\label{aqer_app_aql_tn_i_iteration}
|\psi_{\rm MPS}^{(i)}\rangle = G_{(N-1)i}^\dag \cdots G_1^\dag |\bm{v}_{\rm target}\rangle,
\end{equation}
from which a new set of two-qubit unitaries 
\begin{equation}
G_{(N-1)(i+1)}, \dots, G_{(N-1)i+1}
\end{equation}
is extracted and appended to the existing circuit. By repeating this procedure sufficiently many times, the state in Eq.~\eqref{aqer_app_aql_tn_i_iteration} approaches the product state $|0\rangle^{\otimes N}$. Consequently, the collection $\{G_i\}$ defines the encoding circuit for AQL.  
We note that each two-qubit unitary $G_i$ is further decomposed into hardware-native gates via the KAK decomposition~\cite{tucci2005introductioncartanskakdecomposition}, resulting in $2$ CNOT (or CZ) gates for real matrices and $3$ gates for general complex matrices.  

Within our framework, MPS-based AQL corresponds to sequentially updating both the circuit architecture and parameters: 
\begin{equation}
U(\bmt;\mathcal{A}) \rightarrow U(\bmt \cup \bmt_{\rm new}; \mathcal{A} \cup \mathcal{A}_{\rm new}),
\end{equation}
while previously added gates remain fixed. Increasing the number of iterations systematically reduces the infidelity with respect to the target state.

\subsubsection{Circuit-Based Methods}

\textbf{Variational:}  
Here we provide an example of the variational methods, i.e., Hardware-Efficient (HE) circuits~\cite{PhysRevResearch.4.023136}, which are commonly used as AQL circuits. The HE circuit $U_{\rm HE}(\bmt)$ consists of repeated layers of parameterized single-qubit rotations and CNOT gates acting on adjacent qubits. Each layer $i$ applies a rotation layer $R_Y$ followed by $R_Z$, and then a CNOT layer. The CNOT pattern alternates between even-odd and odd-even qubit pairs: in the $i$-th layer, CNOT gates are applied on pairs $(2n, (2n+1)\%N)$ for $i \bmod 2 = 0$ and on pairs $(2n+1, (2n+2)\%N)$ for $i \bmod 2 = 1$, where $0 \le 2n \le N-1$ or $0 \le 2n \le N-2$ as appropriate. The parameters $\bmt$ are typically updated using gradient-based optimizers to minimize the loss function
\begin{equation}
\ell(\bmt) = 1 - |\langle \bm{v}_{\rm target} | U_{\rm HE}(\bmt) |\psi_{0}\rangle|^2.
\end{equation}

Within our general framework, HE circuits represent a variational approach in which only the parameters $\bmt$ are optimized, while the circuit architecture $\mathcal{A}$ is fixed.

\textbf{Non-variational:}  
The Automatic Quantum Circuit Encoding (AQCE) method~\cite{shirakawa2021automatic} is a non-variational approach that iteratively updates two-qubit unitaries in a prescribed circuit in a forward-backward fashion. Suppose the current encoding unitary list is $\{G_1, \dots, G_M\}$. The locally optimal update for the $m$-th two-qubit gate is obtained as follows. For a given choice of the two-qubit subsystem $\mathcal{Q}_m$, define
\begin{equation}
F_m = \mathop{\Tr}_{[N] \backslash \mathcal{Q}_m} \Big[ G_{m+1}^\dag \cdots G_M^\dag |\bm{v}_{\rm target}\rangle \langle \psi_0| G_1^\dag \cdots G_{m-1}^\dag \Big],
\end{equation}
where $|\psi_0\>$ is the initial product state.
Performing a singular value decomposition (SVD)
\begin{equation}
F_m = X D Y,
\end{equation}
the updated two-qubit unitary is
\begin{equation}
G_m^{\rm new} = X Y.
\end{equation}

In practice, the algorithm considers all possible choices of the two-qubit subsystem $\mathcal{Q}_m$ and selects the one that maximizes $|\mathrm{Tr}[D]|$. This choice can be understood as follows: the fidelity between the encoding state corresponding to the updated unitary and the target state is
\begin{align*}
F ={}& \left|\langle \psi_0 | G_1^\dag \cdots G_{m-1}^\dag {G_{m}^{\rm new}}^\dag G_{m+1}^\dag \cdots G_{M}^\dag | \bm{v}_{\rm target} \> \right|^2 \\
={}& \left|  \Tr \left[ {G_{m}^{\rm new}}^\dag G_{m+1}^\dag \cdots G_{M}^\dag | \bm{v}_{\rm target} \> \< \psi_0 | G_1^\dag \cdots G_{m-1}^\dag  \right] \right|^2 \\
={}& \left|  \Tr_{\mathcal{Q}_m} \left[ {G_{m}^{\rm new}}^\dag \Tr_{[N] \backslash \mathcal{Q}_m}  \left[  G_{m+1}^\dag \cdots G_{M}^\dag | \bm{v}_{\rm target} \> \< \psi_0 | G_1^\dag \cdots G_{m-1}^\dag  \right] \right]  \right|^2 \\
={}& \left|  \Tr_{\mathcal{Q}_m} \left[ {G_{m}^{\rm new}}^\dag F_m \right]  \right|^2 \\
={}& \left|  \Tr \left[ Y^\dag X^\dag X D Y \right]  \right|^2 \\
={}& \left|  \Tr \left[ D \right]  \right|^2 .
\end{align*}
Thus, maximizing $|\mathrm{Tr}[D]|$ corresponds to maximizing the fidelity $F$.

After several forward-backward sweeps, new gates are added to the circuit following the same procedure by treating $G_{M+1}=I$. The final sequence of two-qubit unitaries $\{G_i\}$ is then further decomposed into hardware-friendly gates via the KAK decomposition~\cite{tucci2005introductioncartanskakdecomposition}, resulting in $2$ CNOT (or CZ) gates for real matrices and $3$ gates for general complex matrices.  

Within our framework, AQCE fits naturally as a non-variational method that sequentially updates both the circuit parameters $\bmt$, obtained from the KAK decomposition of the updated two-qubit unitaries, and the circuit architecture $\mathcal{A}$, determined by the qubit pairs on which the updated unitaries act.

\section{Proof of Theorems}

\label{AQER_app_proof}

\subsection{Technical Lemmas}

\label{AQER_app_proof_lemma}

Before the proof of main theorems, we provide a technical lemma, which gives the maximum fidelity to approximate a mixed single-qubit state by any single-qubit pure state. 

\begin{lemma}\label{AQER_method_rdm_approximation}
Denote by $\rho$ a single-qubit mixed state. Then
\begin{align}
\max_{|\phi\>} \Tr [|\phi\> \< \phi| \rho] = \frac{1+\sqrt{2^{1-\mathcal{S}(\rho)}-1}}{2}, \label{AQER_method_rdm_approximation_eq}
\end{align}
where $\mathcal{S}(\rho)$ denotes the Renyi-2 entropy of the state $\rho$.
\end{lemma}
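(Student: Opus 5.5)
The approach is to diagonalize $\rho$ and reduce the optimization to its largest eigenvalue, then express that eigenvalue through the purity. Write the spectral decomposition $\rho = \lambda_{\max}|e_1\>\<e_1| + \lambda_{\min}|e_2\>\<e_2|$ with $\lambda_{\max}\ge\lambda_{\min}\ge 0$ and $\lambda_{\max}+\lambda_{\min}=1$.

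\textbf{Step 1: reduce to the top eigenvalue.} For any unit vector $|\phi\>$ we have $\<\phi|\rho|\phi\> = \lambda_{\max}|\<e_1|\phi\>|^2+\lambda_{\min}|\<e_2|\phi\>|^2$. This is a convex combination of the two eigenvalues, since $|\<e_1|\phi\>|^2+|\<e_2|\phi\>|^2=1$, so it is at most $\lambda_{\max}$. Equality holds at $|\phi\>=|e_1\>$. This is the Rayleigh--Ritz characterization, so $\max_{|\phi\>}\Tr[|\phi\>\<\phi|\rho]=\lambda_{\max}$.

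\textbf{Step 2: express $\lambda_{\max}$ through the entropy.} The definition of Renyi-2 entropy gives $\Tr[\rho^2]=2^{-\mathcal{S}(\rho)}$. We also have $\Tr[\rho^2]=\lambda_{\max}^2+\lambda_{\min}^2$ and $\lambda_{\max}+\lambda_{\min}=1$. Setting $\lambda_{\max}=\tfrac{1+r}{2}$ and $\lambda_{\min}=\tfrac{1-r}{2}$ with $r\in[0,1]$, the purity becomes $\tfrac{1+r^2}{2}$. Equating this to $2^{-\mathcal{S}(\rho)}$ gives $r^2=2^{1-\mathcal{S}(\rho)}-1$. Taking the nonnegative root, because $r\ge 0$ by ordering the eigenvalues, yields $\lambda_{\max}=\tfrac{1+\sqrt{2^{1-\mathcal{S}(\rho)}-1}}{2}$, which is exactly Eq.~(\ref{AQER_method_rdm_approximation_eq}).

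\textbf{Step 3: check the square root is well defined.} For a qubit, $\Tr[\rho^2]\in[\tfrac12,1]$. Hence $\mathcal{S}(\rho)\in[0,1]$, and $2^{1-\mathcal{S}(\rho)}-1\in[0,1]$. An equivalent Bloch-vector route is also available: write $\rho=\tfrac12(I+\vec r\cdot\vec\sigma)$, so that the maximum is $\tfrac{1+|\vec r|}{2}$ and $\Tr\rho^2=\tfrac{1+|\vec r|^2}{2}$.

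\textbf{Main obstacle.} There is no real obstacle here. The only points needing care are choosing the nonnegative root and confirming the maximizer is the top eigenvector. The latter matters because the same eigenvector will later be used to build the explicit $R_Z R_Y$ angles in Corollary~\ref{AQER_method_coro_vt_beta_gamma}.
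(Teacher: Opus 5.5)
Your proposal is correct and follows essentially the same route as the paper. Both arguments take the spectral decomposition, identify the maximum fidelity with the top eigenvalue $\lambda_1$ (attained at its eigenvector), and then solve $\lambda_1^2+\lambda_2^2=2^{-\mathcal{S}(\rho)}$ together with $\lambda_1+\lambda_2=1$; your explicit convex-combination argument for the maximization and your choice of the nonnegative root fill in details the paper leaves implicit.
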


\begin{proof}

For convenience, we denote by $\rho=\sum_{j=1}^{2} \lambda_j |\psi_j\>\<\psi_j|$ the spectral decomposition of $\rho$, where $\lambda_{1,2} \in \R$ and $\lambda_1+\lambda_2=\Tr[\rho]=1$ due to the density matrix property. Without loss of generality, we assume $\lambda_1 \geq \lambda_2$. Thus, the left side of Eq.~(\ref{AQER_method_rdm_approximation_eq}) achieves the largest value when $|\phi\>=|\psi_1\>$, i.e.
\begin{align}
\max_{|\phi\>} \Tr [|\phi\> \< \phi| \rho] ={}& \Tr [|\psi_1\> \< \psi_1| \rho] ={} \lambda_1 . \label{AQER_method_rdm_approximation_fid_lambda}
\end{align}

Additionally, the spectral decomposition could also be employed in the Renyi entropy
\begin{align}
\mathcal{S}(\rho)={}& -\log_2 \Tr[\rho^2] \notag \\
={}& -\log_2 \Tr \left[ \sum_{j=1}^{2} \sum_{k=1}^{2} \lambda_j \lambda_k |\psi_j\>\<\psi_j|\psi_k\>\<\psi_k| \right] \notag \\
={}& -\log_2 \Tr \left[ \sum_{j=1}^{2} \lambda_j^2 |\psi_j\>\<\psi_j| \right] \notag \\
={}& -\log_2 \left( \lambda_1^2 + \lambda_2^2 \right) . \label{AQER_method_rdm_approximation_1_3}
\end{align}

By combining Eq.~(\ref{AQER_method_rdm_approximation_1_3}) with $\lambda_1+\lambda_2=1$, we could obtain the formulation of $\lambda_1$ with respect to
the Renyi entropy $\mathcal{S}(\rho)$ as follows,
\begin{align}
\lambda_1 ={}& \frac{1+\sqrt{2^{1-\mathcal{S}(\rho)}-1}}{2}. \label{AQER_method_rdm_approximation_lambda_S}
\end{align}

Comparing Eqs.(\ref{AQER_method_rdm_approximation_fid_lambda}) and (\ref{AQER_method_rdm_approximation_lambda_S}), we have
\begin{align}
\max_{|\phi\>} \Tr [|\phi\> \< \phi| \rho] ={}& \frac{1+\sqrt{2^{1-\mathcal{S}(\rho)}-1}}{2} . \notag
\end{align}
Thus, Lemma~\ref{AQER_method_rdm_approximation} is proved.

\end{proof}

\begin{lemma}\label{AQER_method_rdm_beta_gamma_value}
Denote by $\rho_{ij}$ the $(i,j)$-th element in the density matrix form of $\rho$. Then, the maximum fidelity in Lemma~\ref{AQER_method_rdm_approximation} could be achieved by $|\phi\>=R_Z(\beta) R_Y(\gamma)|0\>$, where $\beta = \arg (\rho_{10})$ and $\gamma = \frac{\pi}{2}-\arcsin \frac{\rho_{00}-\rho_{11}}{\sqrt{4|\rho_{10}|^2+(\rho_{00}-\rho_{11})^2}}$.
\end{lemma}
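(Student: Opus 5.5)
The plan is a direct computation. We parametrize the candidate state, write its fidelity with $\rho$ as a trigonometric function of $(\beta,\gamma)$, maximize over each angle, and compare the result with the value $\lambda_1$ from Lemma~\ref{AQER_method_rdm_approximation}.

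First, write out the state. Using $R_Y(\gamma)|0\>=\cos\frac{\gamma}{2}|0\>+\sin\frac{\gamma}{2}|1\>$ and $R_Z(\beta)=\mathrm{diag}(e^{-i\beta/2},e^{i\beta/2})$, we get $|\phi\>=e^{-i\beta/2}\cos\frac{\gamma}{2}|0\>+e^{i\beta/2}\sin\frac{\gamma}{2}|1\>$. Next, expand $\<\phi|\rho|\phi\>$ in the computational basis, using $\rho_{01}=\overline{\rho_{10}}$, $\rho_{00}+\rho_{11}=1$ and the half-angle identities:
\begin{align*}
\<\phi|\rho|\phi\> = \frac{1}{2}\left[1+(\rho_{00}-\rho_{11})\cos\gamma + 2\sin\gamma\,\Re\!\left(e^{-i\beta}\rho_{10}\right)\right].
\end{align*}

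Second, maximize over $\beta$. For $\gamma\in[0,\pi]$ we have $\sin\gamma\ge 0$, so the last term is maximized by aligning phases. The choice $\beta=\arg(\rho_{10})$ gives $\Re(e^{-i\beta}\rho_{10})=|\rho_{10}|$. Set $r:=\sqrt{4|\rho_{10}|^2+(\rho_{00}-\rho_{11})^2}$. Then the fidelity becomes $\frac12[1+r\cos(\gamma-\gamma_0)]$, where $\cos\gamma_0=(\rho_{00}-\rho_{11})/r$ and $\sin\gamma_0=2|\rho_{10}|/r$. This is maximized at $\gamma=\gamma_0$, with value $\frac{1+r}{2}$. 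I then check that the stated $\gamma=\frac{\pi}{2}-\arcsin\frac{\rho_{00}-\rho_{11}}{r}$ equals $\gamma_0$. Since $\arcsin$ takes values in $[-\pi/2,\pi/2]$, this $\gamma$ lies in $[0,\pi]$ (consistent with the sign assumption above). Its cosine is $(\rho_{00}-\rho_{11})/r$, and its sine is $\sqrt{1-((\rho_{00}-\rho_{11})/r)^2}=2|\rho_{10}|/r\ge0$, as required. No value of $\gamma$ outside $[0,\pi]$ can do better, because $\frac{1+r}{2}$ is already the largest value of the trigonometric expression.

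Third, match this to Lemma~\ref{AQER_method_rdm_approximation}. Compute
\begin{align*}
r^2=(\rho_{00}-\rho_{11})^2+4|\rho_{10}|^2 = 2\left(\rho_{00}^2+\rho_{11}^2+2|\rho_{10}|^2\right)-(\rho_{00}+\rho_{11})^2 = 2\Tr[\rho^2]-1 = 2^{1-\mathcal{S}(\rho)}-1 .
\end{align*}
Hence $\frac{1+r}{2}$ equals the maximum in Eq.~(\ref{AQER_method_rdm_approximation_eq}).

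The only real obstacle is bookkeeping: getting the phase conventions of $R_Z$ and $R_Y$ right so that the cross term comes out as $\Re(e^{-i\beta}\rho_{10})$ rather than with $\rho_{01}$. A second, minor point is the degenerate cases. If $r=0$ (i.e. $\rho=I/2$), $\gamma$ is undefined, but every pure state achieves $1/2$, so any choice works. If $\rho_{10}=0$, then $\arg(\rho_{10})$ is arbitrary, which is harmless because the cross term vanishes. I would state these conventions explicitly.
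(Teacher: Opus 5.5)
Your proposal is correct and follows the paper's proof essentially step for step. Both parametrize $|\phi\rangle$, reduce the fidelity to $\frac12 + \frac{\rho_{00}-\rho_{11}}{2}\cos\gamma + \Re(e^{-i\beta}\rho_{10})\sin\gamma$, substitute the stated angles to obtain $\frac{1+r}{2}$, and identify $r^2 = 2\Tr[\rho^2]-1 = 2^{1-\mathcal{S}(\rho)}-1$. Your explicit check that the stated $\gamma$ satisfies $\sin\gamma = 2|\rho_{10}|/r \ge 0$, and your handling of the degenerate cases $r=0$ and $\rho_{10}=0$, are small additions that the paper leaves implicit.
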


\begin{proof}

First, we expand the state $|\phi\>$ in terms of $\beta$ and $\gamma$:
\begin{equation}\label{AQER_method_rdm_beta_gamma_value_phi_beta_gamma}
|\phi\> ={} e^{-\beta Z/2} e^{-\gamma Y/2} |0\> = \begin{bmatrix}
    e^{-i\frac{\beta}{2}} & 0 \\
   0 & e^{i\frac{\beta}{2}}
\end{bmatrix} \begin{bmatrix}
    \cos \frac{\gamma}{2} & - \sin \frac{\gamma}{2} \\
    \sin \frac{\gamma}{2} & \cos \frac{\gamma}{2}
\end{bmatrix} \begin{bmatrix}
    1 \\
    0
\end{bmatrix} = \begin{bmatrix}
 e^{-i\frac{\beta}{2}} \cos \frac{\gamma}{2} \\
 e^{i\frac{\beta}{2}} \sin \frac{\gamma}{2}
 \end{bmatrix}. 
\end{equation}

Then, the fidelity between $\rho$ and $|\phi\>$ is
\begin{align}
\Tr [|\phi\>\<\phi| \rho] ={}& \Tr \left[ \begin{bmatrix}
    \frac{1}{2} + \frac{1}{2} \cos \gamma & \frac{1}{2} e^{-i\beta} \sin \gamma \\
    \frac{1}{2} e^{i\beta} \sin \gamma & \frac{1}{2} - \frac{1}{2} \cos \gamma
\end{bmatrix} \begin{bmatrix}
    \rho_{00} & \rho_{01} \\
    \rho_{10} & \rho_{11}
\end{bmatrix} \right] \notag \\
={}& \left( \frac{1}{2} + \frac{1}{2} \cos \gamma \right) \rho_{00} + \frac{1}{2} e^{-i\beta} \sin \gamma \rho_{10} + \frac{1}{2} e^{i\beta} \sin \gamma \rho_{01} + \left( \frac{1}{2} - \frac{1}{2} \cos \gamma \right) \rho_{11} \notag \\
={}& \frac{1}{2} + \frac{\rho_{00}-\rho_{11}}{2} \cos \gamma + \Re \left[e^{-i\beta} \rho_{10} \right] \sin \gamma , \label{AQER_method_rdm_beta_gamma_value_phi_rho_2}
\end{align}
where Eq.~(\ref{AQER_method_rdm_beta_gamma_value_phi_rho_2}) follows from density matrix properties $\Tr[\rho]=1$ and $\rho=\rho^\dag$. 
Subsequently, by considering the formulation of $\beta$ and $\gamma$ with respect to $\rho$, we have
\begin{align}
\Tr [|\phi\>\<\phi| \rho] ={}& \frac{1}{2} + \frac{\rho_{00}-\rho_{11}}{2} \cos \gamma + |\rho_{10}| \sin \gamma \notag \\
={}& \frac{1}{2} + \sqrt{\left( \frac{\rho_{00}-\rho_{11}}{2} \right)^2 + |\rho_{10}|^2}
. \label{AQER_method_rdm_beta_gamma_value_phi_rho_4}
\end{align}
Moreover, the Renyi entropy $\mathcal{S}(\rho)$ could be formulated in terms of $\rho$ as follows
\begin{align}
\mathcal{S}(\rho) ={}& - \log_2 \Tr[\rho^2] \notag \\
={}& -\log_2 \left( \rho_{00}^2+\rho_{11}^2 + 2|\rho_{10}|^2 \right) . \notag
\end{align}
Therefore, we have
\begin{align}
2^{1-S(\rho)} -1 ={}& 2 \left( \rho_{00}^2+\rho_{11}^2 + 2|\rho_{10}|^2 \right) -1 \notag \\
={}& 2 \rho_{00}^2 + 2\rho_{11}^2 - \left( \rho_{00} + \rho_{11} \right)^2 + 4 |\rho_{10}|^2 \notag \\
={}& \left( \rho_{00} - \rho_{11} \right)^2 + 4 |\rho_{10}|^2 . \label{AQER_method_rdm_beta_gamma_value_S_rho}
\end{align}

Thus, the Lemma~\ref{AQER_method_rdm_beta_gamma_value} is proved by combining Eqs.~(\ref{AQER_method_rdm_beta_gamma_value_phi_rho_4}) and (\ref{AQER_method_rdm_beta_gamma_value_S_rho}) with the maximum fidelity value in Lemma~\ref{AQER_method_rdm_approximation}.

\end{proof}

\begin{lemma}\label{AQER_method_noisy_state_1norm_bound}
Let $\mathcal{N}$ be an $N$-qubit CPTP map and let
$\mathcal{U}_\ell(\rho) := U_\ell \rho U_\ell^\dag$ denote the unitary channel
associated with an $N$-qubit unitary $U_\ell$ for each $\ell\in[L]$.
Given an initial state $\rho_0$, define the ideal and noisy output states by
\begin{align}
\rho &:= \big( \mathcal{U}_L \circ \cdots \circ \mathcal{U}_1 \big)(\rho_0), \\
\hat{\rho} &:= \big( \mathcal{N} \circ \mathcal{U}_L \circ \mathcal{N} \circ \cdots
\circ \mathcal{N} \circ \mathcal{U}_1 \circ \mathcal{N} \big)(\rho_0),
\end{align}
where $\hat{\rho}$ is obtained from $\rho_0$ by inserting $L+1$ layers of noise
$\mathcal{N}$ alternating with the $L$ unitaries.
Then,
\begin{equation}
\|\hat{\rho} - \rho\|_1
\le (L+1)\,\|\mathcal{N}- {\rm id}\|_\diamond,
\end{equation}
where ${\rm id}$ denotes the identity channel and $\|\cdot\|_\diamond$ is the diamond norm.
\end{lemma}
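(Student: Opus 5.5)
We prove the bound by telescoping: split the total deviation $\hat\rho-\rho$ into $L+1$ one-step differences, each of which swaps a single noise layer for the identity. Then we bound each difference by the diamond norm, using that channels do not increase the trace norm.

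\textbf{Notation.} Write $\mathcal{N}_0=\mathcal{N}_1=\cdots=\mathcal{N}_L:=\mathcal{N}$ for the noise layers, placed as
\[
\hat\rho=\mathcal{N}_L\circ\mathcal{U}_L\circ\cdots\circ\mathcal{N}_1\circ\mathcal{U}_1\circ\mathcal{N}_0(\rho_0).
\]
For $k\in\{0,\dots,L+1\}$, let $\sigma_k$ be the state obtained from the same composition with the first $k$ noise layers $\mathcal{N}_0,\dots,\mathcal{N}_{k-1}$ replaced by ${\rm id}$. Then $\sigma_0=\hat\rho$ and $\sigma_{L+1}=\rho$. The triangle inequality gives
\[
\|\hat\rho-\rho\|_1\le\sum_{k=0}^{L}\|\sigma_k-\sigma_{k+1}\|_1 .
\]

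\textbf{Bounding one term.} The states $\sigma_k$ and $\sigma_{k+1}$ differ only in layer $k$. Write $\Phi_k:=\mathcal{N}_L\circ\mathcal{U}_L\circ\cdots\circ\mathcal{N}_{k+1}\circ\mathcal{U}_{k+1}$ for the CPTP map applied after layer $k$ (taking $\Phi_L={\rm id}$). Write $\tau_k:=\mathcal{U}_k\circ\cdots\circ\mathcal{U}_1(\rho_0)$ for the noiseless state arriving at layer $k$ (taking $\tau_0=\rho_0$). Then
\[
\sigma_k-\sigma_{k+1}=\Phi_k\big((\mathcal{N}-{\rm id})(\tau_k)\big).
\]
CPTP maps are contractive in trace norm, so $\|\Phi_k(X)\|_1\le\|X\|_1$ for every Hermitian $X$. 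The diamond norm dominates the induced trace norm: take $d=1$ in its definition, or note that $\|(\Phi\otimes{\rm id}_d)(X)\|_1\ge\|\Phi(X)\|_1$ by padding. Since $\|\tau_k\|_1=1$, we get
\[
\|\sigma_k-\sigma_{k+1}\|_1\le\|(\mathcal{N}-{\rm id})(\tau_k)\|_1\le\|\mathcal{N}-{\rm id}\|_\diamond .
\]

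\textbf{Conclusion.} Summing the $L+1$ terms yields $\|\hat\rho-\rho\|_1\le(L+1)\|\mathcal{N}-{\rm id}\|_\diamond$.

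The only step needing care is the trace-norm contractivity of CPTP maps on Hermitian, non-positive inputs. For this, decompose $X=P-Q$ with $P,Q\ge0$ and $\|X\|_1=\Tr P+\Tr Q$. Then
\[
\|\Phi(X)\|_1\le\Tr\Phi(P)+\Tr\Phi(Q)=\Tr P+\Tr Q=\|X\|_1 .
\]
This is standard, so no real obstacle arises.
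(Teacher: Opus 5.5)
Your proof is correct and uses the same hybrid (telescoping) argument as the paper: swap one of the $L+1$ noise layers for the identity at a time, and bound each swap by $\|\mathcal{N}-{\rm id}\|_\diamond$ because the surrounding maps are CPTP. The only difference is the level at which you telescope. You work with output states and use trace-norm contractivity of CPTP maps, so only the $d=1$ case of the diamond norm is needed. The paper telescopes at the channel level, using submultiplicativity of the diamond norm with $\|\Lambda\|_\diamond=1$, and only then passes to states.
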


\begin{proof}
For convenience, we denote the noiseless channel by
\begin{equation}
\mathcal{E}_0^{(L)} := \mathcal{U}_L \circ \cdots \circ \mathcal{U}_1
\end{equation}
and the fully noisy channel by
\begin{equation}
\mathcal{E}_\mathcal{N}^{(L)} := \mathcal{N} \circ \mathcal{U}_L \circ \mathcal{N} \circ \cdots
\circ \mathcal{N} \circ \mathcal{U}_1 \circ \mathcal{N}.
\end{equation}
By definition,
\begin{equation}
\rho = \mathcal{E}_0^{(L)}(\rho_0), \qquad
\hat{\rho} = \mathcal{E}_\mathcal{N}^{(L)}(\rho_0).
\end{equation}
Thus,
\begin{equation}
\|\hat{\rho} - \rho\|_1
= \big\| \mathcal{E}_\mathcal{N}^{(L)}(\rho_0) - \mathcal{E}_0^{(L)}(\rho_0) \big\|_1
\le \big\| \mathcal{E}_\mathcal{N}^{(L)} - \mathcal{E}_0^{(L)} \big\|_\diamond,
\label{AQER_method_noisy_state_1norm_bound_proof_step1}
\end{equation}
where the inequality follows from the definition of the diamond norm, which upper bounds the
trace-norm difference of the outputs on any input state.

In the following, we derive an upper bound on
$\big\| \mathcal{E}_\mathcal{N}^{(L)} - \mathcal{E}_0^{(L)} \big\|_\diamond$.
There are $L+1$ noise positions in the noisy channel
$\mathcal{E}_\mathcal{N}^{(L)}$: one before $U_1$, one between each consecutive pair
$U_j$ and $U_{j+1}$, and one after $U_L$.
We construct a sequence of intermediate channels
$\{\Phi^{(k)}\}_{k=0}^{L+1}$ such that
\begin{equation}
\Phi^{(0)} = \mathcal{E}_0^{(L)}, \qquad
\Phi^{(L+1)} = \mathcal{E}_\mathcal{N}^{(L)},
\end{equation}
and for each $k\in\{1,\dots,L\}$, the channel $\Phi^{(k)}$ is obtained by applying the first $k$ noise layers.
Accordingly, we have the telescoping decomposition
\begin{equation}
\mathcal{E}_\mathcal{N}^{(L)} - \mathcal{E}_0^{(L)}
= \Phi^{(L+1)} - \Phi^{(0)}
= \sum_{k=1}^{L+1} \big( \Phi^{(k)} - \Phi^{(k-1)} \big).
\end{equation}

Each $\Phi^{(k)} - \Phi^{(k-1)}$ corresponds to replacing a single identity
channel by $\mathcal{N}$, while all other positions remain
unchanged.
Therefore, there exist CPTP maps $\Lambda^{(k)}_{\rm L}$ and $\Lambda^{(k)}_{\rm R}$ (given
by compositions of unitary channels $\mathcal{U}_\ell$ and noise/identity channels) such that
\begin{equation}
\Phi^{(k)} - \Phi^{(k-1)}
= \Lambda^{(k)}_{\rm R} \circ (\mathcal{N}- {\rm id} ) \circ \Lambda^{(k)}_{\rm L}.
\end{equation}
Since both $\Lambda^{(k)}_{\rm L}$ and $\Lambda^{(k)}_{\rm R}$ are CPTP maps, their diamond norms
satisfy
\begin{equation}
\|\Lambda^{(k)}_{\rm L}\|_\diamond = \|\Lambda^{(k)}_{\rm R}\|_\diamond = 1.
\end{equation}
Thus, we have
\begin{equation}
\big\|\Phi^{(k)} - \Phi^{(k-1)}\big\|_\diamond
\le \|\Lambda^{(k)}_{\rm R}\|_\diamond \,\|\mathcal{N}- {\rm id} \|_\diamond\,
\|\Lambda^{(k)}_{\rm L}\|_\diamond
= \|\mathcal{N}- {\rm id} \|_\diamond.
\end{equation}

Using the triangle inequality and summing over all $(L+1)$ noise positions, we obtain
\begin{align}
\big\| \mathcal{E}_\mathcal{N}^{(L)} - \mathcal{E}_0^{(L)} \big\|_\diamond
&\le \sum_{k=1}^{L+1} \big\|\Phi^{(k)} - \Phi^{(k-1)}\big\|_\diamond \notag \\
&\le (L+1)\,\|\mathcal{N}-{\rm id}\|_\diamond.
\label{AQER_method_noisy_state_1norm_bound_proof_step2}
\end{align}

Combining Eqs.~\eqref{AQER_method_noisy_state_1norm_bound_proof_step1}
and~\eqref{AQER_method_noisy_state_1norm_bound_proof_step2}, we obtain
\begin{equation}
\|\hat{\rho} - \rho\|_1
\le \big\| \mathcal{E}_\mathcal{N}^{(L)} - \mathcal{E}_0^{(L)} \big\|_\diamond
\le (L+1)\,\|\mathcal{N}-{\rm id}\|_\diamond.
\end{equation}
Thus, we have proved Lemma~\ref{AQER_method_noisy_state_1norm_bound}.
\end{proof}

\subsection{Proof of Main Theorem}
\label{AQER_method_tn_init_projection_prop_proof}

Here we present the full versions of the theorem in the main text along with their proofs.

\begin{theorem}\label{AQER_method_tn_init_projection_prop_app}
Denote the entanglement measure for a $N$-qubit state $|\psi\>$ as $\mathcal{S}(|\psi\>)=\sum_{i=1}^{N} \mathcal{S}_{\{i\}}(|\psi\>)$.
Then for the state $|\bm{v}_{\rm target}\>$ and the circuit $U$ with $\mathcal{S}(U^\dag |\bm{v}_{\rm target}\>)=S$, the infidelity between $|\bm{v}_{\rm target}\>$ and the state generated from $U$ on any product state $|\psi_{\rm product}\>$ is lower bounded as
\begin{equation}
1-\left|\< \bm{v}_{\rm target}  | U | \psi_{\rm product} \> \right|^2 \geq{} f_1 (S) := \frac{1-\sqrt{2^{1- \frac{1}{N} S }-1}}{2}.
\end{equation}
Moreover, given access to $U^\dag |\bm{v}_{\rm target}\>$, we can construct a product state $|\psi'_{\rm product}\>$, such that the infidelity is upper bounded as 
\begin{equation}
 1-\left|\< \bm{v}_{\rm target} | U | \psi'_{\rm product} \> \right|^2 \leq f_2(S) :=  \frac{1}{2} \left( 1 - \sqrt{2^{1-S+\lfloor S \rfloor} -1} + \lfloor S \rfloor  \right),
\end{equation}
where $\lfloor \cdot \rfloor$ denotes the floor function.
We remark that $f_1(S)=\frac{\ln2}{2N}S + \O(S^3)$ and $f_2(S)=\frac{\ln2}{2}S + \O(S^3)$ by calculating the Taylor expansion.
\end{theorem}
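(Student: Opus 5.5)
Write $|\bm{w}\> := U^\dag |\bm{v}_{\rm target}\>$. By unitarity, $|\< \bm{v}_{\rm target} | U | \psi_{\rm product}\>|^2 = |\<\bm{w}|\psi_{\rm product}\>|^2$, so both bounds become statements about how well $|\bm{w}\>$ overlaps with product states. Let $\rho_i$ be the single-qubit reduced state of $|\bm{w}\>$ on qubit $i$, with $S_i := \mathcal{S}_{\{i\}}(|\bm{w}\>)$, so that $\sum_i S_i = S$ and $S_i\in[0,1]$. Define
$$g(s) := \frac{1-\sqrt{2^{1-s}-1}}{2}.$$
By Lemma~\ref{AQER_method_rdm_approximation}, $1-g(S_i)$ is the largest eigenvalue $\lambda_{\max}(\rho_i)$, and it equals the best fidelity of $\rho_i$ with any pure single-qubit state. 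The function $g$ is increasing on $[0,1]$, with $g(0)=0$ and $g(1)=1/2$.

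\textbf{Lower bound.} For a product state $|\psi_{\rm product}\> = \otimes_i |\phi_i\>$ and any fixed $i$, the projector $|\psi_{\rm product}\>\<\psi_{\rm product}|$ is dominated by $|\phi_i\>\<\phi_i|\otimes I$. Hence
$$|\<\bm{w}|\psi_{\rm product}\>|^2 \le \<\phi_i|\rho_i|\phi_i\> \le 1-g(S_i).$$
Take $i$ to maximize $S_i$. Then $S_i \ge S/N$, and since $g$ is increasing, the infidelity is at least $g(S/N)=f_1(S)$.

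\textbf{Upper bound.} Let $|\phi_i\>$ be the top eigenvector of $\rho_i$; it is explicitly constructible from $\rho_i$ via Lemma~\ref{AQER_method_rdm_beta_gamma_value}. Set $|\psi'_{\rm product}\> = \otimes_i|\phi_i\>$ and $P_i := |\phi_i\>\<\phi_i|$ acting on qubit $i$. The $P_i$ are commuting projectors, so the operator inequality $I-\prod_i P_i \le \sum_i (I-P_i)$ holds; this can be checked in the joint eigenbasis, where it is a union bound. Taking the expectation in $|\bm{w}\>$ gives
$$1-|\<\bm{w}|\psi'_{\rm product}\>|^2 \le \sum_i \big(1-\<\phi_i|\rho_i|\phi_i\>\big) = \sum_i g(S_i).$$
It then remains to maximize $\sum_i g(S_i)$ over the polytope $\{S_i\in[0,1],\ \sum_i S_i=S\}$.

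The key claim is that $g$ is convex on $[0,1]$. If so, the maximum of this convex function is attained at a vertex of the polytope. Each vertex has $\lfloor S\rfloor$ coordinates equal to $1$, at most one coordinate equal to the fractional part $S-\lfloor S\rfloor$, and the rest equal to $0$. This yields the bound
$$\frac{\lfloor S\rfloor}{2} + g(S-\lfloor S\rfloor) = f_2(S).$$

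\textbf{Main obstacle: convexity of $g$.} I would substitute $x=2^{1-s}-1\in[0,1]$, so that $g = (1-\sqrt{x})/2$, and then differentiate twice in $s$. Up to positive factors, $g''(s)\ge 0$ reduces to an elementary inequality of the form $(\ln 2)^2\, 2^{1-s}\,(2^{1-s}-2)\cdot(\text{positive}) \le \dots$, which I expect to hold because $\sqrt{x}$ is concave in $x$ while $x$ is convex and decreasing in $s$. If this direct route gets messy, the fallback is to check $g'' \ge 0$ numerically-free via the explicit form
$$g'(s) = \frac{\ln 2\; 2^{-s}}{2\sqrt{2^{1-s}-1}},$$
which is visibly increasing because the numerator's decay is dominated by the denominator's decay as $s\to 1$.

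\textbf{Expansions.} The asymptotic forms follow from Taylor expansion: $2^{1-s}-1 = 1-s\ln 2 + \O(s^2)$, so $g(s) = \frac{\ln 2}{2}s + \O(s^2)$, and the stated higher-order terms come from carrying the series further. Then $f_1(S)=g(S/N)$ and $f_2(S)=g(S)$ when $S<1$, which give the stated linear coefficients $\frac{\ln 2}{2N}$ and $\frac{\ln 2}{2}$.
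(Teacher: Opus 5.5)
Your proof is correct and is essentially the paper's proof. The lower bound via the qubit with the largest $S_i$ is the same argument. Your commuting-projector inequality $I-\prod_i P_i\le\sum_i(I-P_i)$ is the paper's union bound over the events $x_n=0$. Your extreme-point argument on $\{S_i\in[0,1],\ \sum_i S_i=S\}$ is a tidier version of the paper's pairwise ``push to the boundary'' step, and both rest on convexity of $g$.

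There are two loose ends.

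First, your heuristic for convexity does not settle it. $-\sqrt{x}$ is convex but decreasing, while $x(s)=2^{1-s}-1$ is convex, so the two effects compete and the sign has to be computed. The paper computes it directly: the second derivative of $\sqrt{2^{1-s}-1}$ has numerator $(\ln 2)^2\,2^{1-s}(2^{1-s}-2)\le 0$ on $[0,1]$. Your fallback also works once made precise. With $u=2^{-s}$ you get $g'\propto u/\sqrt{2u-1}$, whose $u$-derivative is $(u-1)/(2u-1)^{3/2}\le 0$. Since $u$ decreases in $s$, $g'$ is increasing on the whole interval, not just near $s=1$.

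Second, the expansion has a slip: $2^{1-s}-1=1-2s\ln 2+(\ln 2)^2s^2-\tfrac{(\ln 2)^3}{3}s^3+\cdots$, not $1-s\ln 2+\cdots$. Its square root is $1-s\ln 2-\tfrac{(\ln 2)^3}{6}s^3+\mathcal{O}(s^4)$. The quadratic term vanishes, and that is what gives the stated $\mathcal{O}(S^3)$ remainder for $f_1$ and $f_2$. Your $\mathcal{O}(s^2)$ estimate does not establish it.
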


\begin{proof}
For convenience, we denote by $\rho$ the density matrix formulation of the state $U^\dag |\bm{v}_{\rm target}\>$. We denote by $\rho_{n}:=\Tr_{[N]/\{n\}} \left[ \rho \right]$ the density matrix of the $n$-th single-qubit subsystem of $\rho$ and denote the Renyi entropy $S_n:=\mathcal{S}(\rho_n)$ for simplicity. Since $\rho_n$ is a single-qubit density matrix, $S_n \in [0,1]$.
We construct a general set of single-qubit projectors $\{|\psi_n\>\}_{n=1}^{N}$ with $|\psi\>=\otimes_{n=1}^{N} |\psi_n\>$. Accordingly, we define a set of random variables $\{y_n\}_{n=1}^{N}$, where $y_n=1$ if the measurement on the $n$-th qubit yields $|\psi_n\>$ and $y_n=0$ otherwise. Thus, we can reformulate the infidelity between $|\bm{v}_{\rm target}\>$ and  $U|\psi\>$ as
\begin{align}
1-\left| \<\bm{v}_{\rm target} | U | \psi \> \right|^2 ={}& 1- \Tr \left[ \rho \left(\otimes_{n=1}^{N} |\psi_n\>\<\psi_n| \right) \right] = 1 - {\rm Pr} \big( y_n=1 , \ \forall n\in [N] \big) . \label{AQER_method_tn_init_projection_prop_proof_Tr_P}
\end{align}
First, we derive the lower bound on the infidelity in Eq.~(\ref{AQER_method_tn_init_projection_prop_proof_Tr_P}) as follows. We have
\begin{align}
1 - {\rm Pr} \big( y_n=1 , \ \forall n\in [N] \big) \geq{}& 1 - \min_{n\in[N]} {\rm Pr} \big( y_n=1 \big) \notag \\
={}&  1 - \min_{n\in[N]} \Tr \left[ \rho \left( I^{\otimes (n-1)} \otimes |\psi_n\>\<\psi_n| \otimes I^{\otimes (N-n)} \right) \right] \notag \\
={}& 1 - \min_{n\in[N]} \Tr \left[ \rho_n |\psi_n\>\<\psi_n| \right] \notag \\
\geq{}& 1 - \min_{n\in[N]} \frac{1+\sqrt{2^{1-S_n}-1}}{2},
\label{AQER_method_tn_init_projection_prop_proof_6_6}
\end{align}
where the last inequality follows from Lemma~\ref{AQER_method_rdm_approximation}.
We further proceed to deal with the terms in Eq.~\ref{AQER_method_tn_init_projection_prop_proof_6_6}. In particular, the function $f(x)=\sqrt{2^{1-x}-1}$ decreases for $x \in [0,1]$. Thus,
\begin{align}
\min_{n\in[N]} \sqrt{2^{1-S_n}-1} ={}& \min_{n \in [N]} f(S_n) \leq{} f \left( \frac{1}{N} \sum_{n=1}^{N} S_n \right) ={} f \left( \frac{S}{N} \right) ={} \sqrt{2^{1- \frac{S}{N} }-1} . \label{AQER_method_tn_init_projection_prop_proof_7}
\end{align}
Combining Eqs.~(\ref{AQER_method_tn_init_projection_prop_proof_6_6}) and (\ref{AQER_method_tn_init_projection_prop_proof_7}), we obtain 
\begin{align}
1-\left|\< \bm{v}_{\rm target} | U | \psi \> \right|^2 \geq{}& \frac{1-\sqrt{2^{1- \frac{S}{N}  }-1}}{2} . \label{AQER_method_tn_init_projection_prop_proof_8}
\end{align}
For the case of $S\rightarrow 0$, Eq.~(\ref{AQER_method_tn_init_projection_prop_proof_8}) tends to $\frac{\ln2}{2N}S + \O(S^3)$ by calculating the Taylor expansion.

Next, we derive the upper bound of the infidelity. We consider a specific choice of projectors $\{|\phi_n\>\}_{n=1}^{N}$, where each $|\phi_n\>$ is the pure state approximation of $\rho_{\{n\}}$ with the largest fidelity given by Lemmas~\ref{AQER_method_rdm_approximation} and \ref{AQER_method_rdm_beta_gamma_value}.
Similarly, we define a set of random variables $\{x_n\}_{n=1}^{N}$, where $x_n=1$ if measuring the $n$-th qubit of $\rho$ yields the state $|\phi_n\>$ and $x_n=0$ otherwise. Based on Lemma~\ref{AQER_method_rdm_approximation}, we have
\begin{equation}\label{AQER_method_tn_init_projection_prop_proof_0}
{\rm Pr} \big( x_n=1 \big) = \frac{1+\sqrt{2^{1-S_n}-1}}{2},\ {\rm Pr} \big( x_n=0 \big) = \frac{1-\sqrt{2^{1-S_n}-1}}{2}.
\end{equation}
Thus, the infidelity between states $|\bm{v}_{\rm target}\>$ and $U|\phi\>=U(\otimes_{n=1}^{N} |\phi_n\>)$ can be reformulated as
\begin{align}
1-\left|\< \bm{v}_{\rm target} | U | \phi \> \right|^2 ={}& 1- \Tr \left[ \rho \left(\otimes_{n=1}^{N} |\phi_n\>\<\phi_n| \right) \right] = 1- {\rm Pr} \big( x_n=1 , \ \forall n\in [N] \big) . \label{AQER_method_tn_init_projection_prop_proof_1}
\end{align}
In the following, we proceed with the derivation from the probability in Eq.~(\ref{AQER_method_tn_init_projection_prop_proof_1}) to obtain the upper bound of the infidelity. We have
\begin{align}
{\rm Pr} \big( x_n=1 , \ \forall n\in [N] \big) ={}& 1 - {\rm Pr} \big( \exists \ n \in [N],\ {\rm s.t. }\ x_n=0 \big) \notag \\
\geq{}& 1 - \sum_{n=1}^{N} {\rm Pr} \big( x_n=0 \big) \notag \\
={}& 1 - \sum_{n=1}^{N} \frac{1-\sqrt{2^{1-S_n}-1}}{2} , \label{AQER_method_tn_init_projection_prop_proof_2_3} 
\end{align}
where Eq.~(\ref{AQER_method_tn_init_projection_prop_proof_2_3}) yields from Eq.~(\ref{AQER_method_tn_init_projection_prop_proof_0}). The function $f(x)=\sqrt{2^{1-x}-1}$ is concave for $x \in [0,1]$ since it has the second-order derivative $f''(x) = \frac{(\ln 2)^2 \, 2^{1-x} \left(2^{1-x}-2\right)}{4 \left(2^{1-x}-1\right)^{3/2}} \leq 0$ for $x \in [0,1]$.  Therefore, we have
\begin{equation}\label{AQER_method_tn_init_projection_prop_proof_3}
f(a)+f(b) \geq{} \left\{ 
\begin{aligned}
f (0) + f(a+b) \quad {}&{\rm when }\ a, b, a+b \in [0,1], \\
f (a+b-1) + f(1) \quad {}&{\rm when }\ a, b \in [0,1]\ {\rm and}\ a+b \in [1,2].
\end{aligned} \right.
\end{equation}
due to the property of concave functions.
Eq.~(\ref{AQER_method_tn_init_projection_prop_proof_3}) can be employed to Eq.~(\ref{AQER_method_tn_init_projection_prop_proof_2_3}) by selecting qubit pair $(i,j)$ such that $S_i, S_j \notin \{0,1\}$ and pushing the new value $S'_i, S'_j$ towards the boundary of $[0,1]$. By conducting the above procedure for at most $N-1$ times, we could bound the probability 
as follows
\begin{align}
{\rm Pr} \big( x_n=1 , \ \forall n\in [N] \big) \geq{}& 1 - \frac{N}{2} + \frac{1}{2} \bigg( \lfloor S_{\bmt} \rfloor f(1) + f(S- \lfloor S \rfloor) + \left( N- \lfloor S \rfloor - 1 \right) f(0) \bigg) \notag \\
={}& \frac{1}{2} \left( \sqrt{2^{1-S+\lfloor S \rfloor} -1} - \lfloor S \rfloor + 1 \right), \label{AQER_method_tn_init_projection_prop_proof_4}
\end{align}
where Eq.~(\ref{AQER_method_tn_init_projection_prop_proof_4}) yields from calculating the value of function $f$. Thus, we obtain the upper bound for the infidelity:
\begin{align}
1-\left|\< \bm{v}_{\rm target} | U | \phi \> \right|^2 ={}& 1 - {\rm Pr} \big( x_n=1 , \ \forall n\in [N] \big) \leq \frac{1}{2} \left( 1 - \sqrt{2^{1-S+\lfloor S \rfloor} -1} + \lfloor S \rfloor  \right). \label{AQER_method_tn_init_projection_prop_proof_5}
\end{align}
For the case of $S\rightarrow 0$, Eq.~(\ref{AQER_method_tn_init_projection_prop_proof_5}) tends to $\frac{\ln2}{2}S + \O(S^3)$ by calculating the Taylor expansion.
Thus, we have proved Theorem~\ref{AQER_method_tn_init_projection_prop_app}.
\end{proof}

\section{Theoretical results of noisy cases}

\label{AQER_app_noisy}

While AQL is designed for both noiseless and noisy regimes, its behavior in the noisy case is particularly critical for near-term applications. In practical NISQ settings, hardware noise perturbs the circuit evolution, and its impact on the error beyond the noiseless bounds needs to be analyzed. Here, we extend the results in Theorem~\ref{AQER_method_tn_init_projection_prop_app} into the general CPTP channel case.

\begin{theorem}\label{AQER_method_mixed_state_projection_prop_app}
Denote the entanglement measure for an $N$-qubit (pure or mixed) state $\rho$ as
$\mathcal{S}(\rho)=\sum_{n=1}^{N} \mathcal{S}_{\{n\}}(\rho)$. Let $\mathcal{S}(\mathcal{E}(\rho_{\rm target}))=S$, where $\mathcal{E}$ is a CPTP map. Then, for any product state
$|\psi_{\rm product}\>$, the infidelity between $\mathcal{E}(\rho_{\rm target})$ and
$|\psi_{\rm product}\>$ is lower bounded as
\begin{equation}
1-\< \psi_{\rm product} | \mathcal{E}(\rho_{\rm target}) | \psi_{\rm product} \> \geq{} f_1 (S) := \frac{1-\sqrt{2^{1- \frac{1}{N} S }-1}}{2}.
\end{equation}
Moreover, given access to $\mathcal{E}(\rho_{\rm target})$, we can construct a product state $| \psi'_{\rm product} \>$, such that the infidelity is upper bounded as 
\begin{equation}
 1-\< \psi'_{\rm product} | \mathcal{E}(\rho_{\rm target}) | \psi'_{\rm product} \> \leq f_2(S) :=  \frac{1}{2} \left( 1 - \sqrt{2^{1-S+\lfloor S \rfloor} -1} + \lfloor S \rfloor  \right),
\end{equation}
where $\lfloor \cdot \rfloor$ denotes the floor function.
We remark that $f_1(S)=\frac{\ln2}{2N}S + \O(S^3)$ and $f_2(S)=\frac{\ln2}{2}S + \O(S^3)$ by calculating the Taylor expansion.
\end{theorem}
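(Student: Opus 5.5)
The plan is to rerun the proof of Theorem~\ref{AQER_method_tn_init_projection_prop_app} with the pure state $U^\dag|\bm{v}_{\rm target}\>$ replaced by the (possibly mixed) state $\sigma:=\mathcal{E}(\rho_{\rm target})$. That proof never used purity. It only used three facts: the infidelity equals one minus the probability of a joint projective outcome on $\sigma$; each single-qubit marginal $\sigma_n:=\Tr_{[N]\setminus\{n\}}[\sigma]$ is a valid qubit density matrix; and the Rényi-2 entropy $S_n:=\mathcal{S}(\sigma_n)$ lies in $[0,1]$. All three hold because $\mathcal{E}$ is CPTP, so $\sigma$ is a density operator and so are its partial traces.

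First, for a product state $|\psi\>=\otimes_n|\psi_n\>$, I would write
\begin{equation}
\<\psi|\sigma|\psi\>=\Tr\left[\sigma \left(\otimes_{n}|\psi_n\>\<\psi_n|\right)\right].
\end{equation}
This is the probability, when measuring $\sigma$ with the commuting local projectors $\{|\psi_n\>\<\psi_n|, I-|\psi_n\>\<\psi_n|\}$, that every qubit returns the outcome $y_n=1$. These projectors define a joint distribution on $(y_1,\dots,y_N)$ for any density matrix, so the probabilistic language carries over unchanged.

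For the lower bound on infidelity, I would use
\begin{equation}
\Pr(\forall n,\ y_n=1)\le\min_n \Pr(y_n=1)=\min_n\Tr[\sigma_n|\psi_n\>\<\psi_n|].
\end{equation}
Lemma~\ref{AQER_method_rdm_approximation}, which already covers mixed single-qubit states, bounds the right-hand side by $\min_n \frac{1}{2}\bigl(1+\sqrt{2^{1-S_n}-1}\bigr)$. Since $f(x)=\sqrt{2^{1-x}-1}$ is decreasing on $[0,1]$ and the minimum of the $S_n$-dependent quantity is at most its value at the average $S/N$, we get $f_1(S)$ exactly as in Eq.~(\ref{AQER_method_tn_init_projection_prop_proof_7}).

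For the upper bound, I would pick each $|\phi_n\>$ to be the top eigenvector of $\sigma_n$, given explicitly by Lemma~\ref{AQER_method_rdm_beta_gamma_value} from the entries of $\sigma_n$. This is where access to $\mathcal{E}(\rho_{\rm target})$ is needed. A union bound then gives
\begin{equation}
\Pr(\forall n,\ x_n=1)\ge 1-\sum_n \tfrac12\bigl(1-f(S_n)\bigr).
\end{equation}
Finally, I would minimize $\sum_n f(S_n)$ subject to $\sum_n S_n=S$ and $S_n\in[0,1]$. Because $f$ is concave on $[0,1]$, the minimum is attained at an extreme point of this polytope: $\lfloor S\rfloor$ coordinates equal to $1$, one coordinate equal to $S-\lfloor S\rfloor$, and the rest equal to $0$. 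Substituting gives $f_2(S)$. The Taylor expansions as $S\to0$ are the same as in the noiseless case.

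The only step I expect to need care is confirming that nothing in the earlier argument relied on purity. In particular, $S$ here may be large even when $\mathcal{E}$ is unitary-close, because a mixed $\sigma$ has nonzero local entropy without entanglement. This does not affect the bounds, which are stated purely in terms of $S$. The concavity and extreme-point step is where I would write the argument more carefully than the pairwise pushing in the original proof, by invoking the fact that a concave function on a polytope attains its minimum at a vertex.
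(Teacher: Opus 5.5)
Your proposal is correct and follows the paper's proof essentially step for step. The paper likewise reruns the pure-state argument on $\mathcal{E}(\rho_{\rm target})$, using $\min_n \Pr(y_n=1)$, Lemma~\ref{AQER_method_rdm_approximation} and the monotonicity of $f$ for the lower bound, and the top eigenvectors from Lemma~\ref{AQER_method_rdm_beta_gamma_value}, the union bound and the concavity of $f$ for the upper bound; your vertex-of-the-polytope argument is just a cleaner version of the paper's pairwise ``push to the boundary'' step, giving the same extremal configuration ($\lfloor S\rfloor$ ones, one entry $S-\lfloor S\rfloor$, the rest zeros).
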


Next, we consider a layered noisy circuit model and derive noise-dependent bounds, as stated in Theorem~\ref{AQER_method_layered_noisy_backward_state_prop_app}.

\begin{theorem}\label{AQER_method_layered_noisy_backward_state_prop_app}
Denote the entanglement measure for an $N$-qubit state $\rho$ as
$\mathcal{S}(\rho)=\sum_{n=1}^{N} \mathcal{S}_{\{n\}}(\rho)$. Let $|\bm{v}_{\rm target}\>$ be an $N$-qubit pure target state and set $\rho_{\rm target} = |\bm{v}_{\rm target}\>\<\bm{v}_{\rm target}|$. Let $\mathcal{M}, \mathcal{N}$ be two CPTP noise channels and $\mathcal{U}_\ell(\rho) := U_\ell \rho U_\ell^\dag$ for unitary $U_\ell$. Consider two $L$-layer noisy circuits $\mathcal{E}^{(L)}:=\mathcal{M} \circ \mathcal{U}_L \circ \mathcal{M} \circ \cdots \circ \mathcal{M} \circ \mathcal{U}_1 \circ \mathcal{M}$ and $\mathcal{F}^{(L)}:=\mathcal{N} \circ \mathcal{U}_{1}^\dag \circ \mathcal{N} \circ \cdots \circ \mathcal{N} \circ \mathcal{U}_{L}^\dag \circ \mathcal{N}$. Suppose $\mathcal{S}(\mathcal{F}^{(L)}(\rho_{\rm target}))=S$. Then for any product state $|\psi_{\rm product}\>$, the infidelity between the target $\rho_{\rm target}$ and the state $\mathcal{E}^{(L)}(|\psi_{\rm product}\>\<\psi_{\rm product}|)$ is lower bounded as
\begin{align}
1 - \Tr [\rho_{\rm target} \mathcal{E}^{(L)} (|\psi_{\rm product}\>\<\psi_{\rm product}|)] \geq{}& f_1(S) - (L+1) \left( \| \mathcal{M} - {\rm id} \|_{\diamond} + \| \mathcal{N} - {\rm id} \|_{\diamond} \right). \label{AQER_method_layered_noisy_backward_state_prop_app_eq1}
\end{align}
Moreover, given access to $\mathcal{F}^{(L)}(\rho_{\rm target})$, we can construct a product state $| \psi'_{\rm product} \>$, such that the infidelity is upper bounded as 
\begin{align}
1 - \Tr [\rho_{\rm target} \mathcal{E}^{(L)} (|\psi'_{\rm product}\>\<\psi'_{\rm product}|)] \leq{}& f_2(S) + (L+1) \left( \| \mathcal{M} - {\rm id} \|_{\diamond} + \| \mathcal{N} - {\rm id} \|_{\diamond} \right). \label{AQER_method_layered_noisy_backward_state_prop_app_eq2}
\end{align}
Functions $f_1$ and $f_2$ follow the definitions in Theorem~\ref{AQER_method_mixed_state_projection_prop_app}.
\end{theorem}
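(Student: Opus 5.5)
The plan is to reduce the noisy statement to Theorem~\ref{AQER_method_mixed_state_projection_prop_app} applied to the backward state $\sigma:=\mathcal{F}^{(L)}(\rho_{\rm target})$, and to pay for the noise with two applications of Lemma~\ref{AQER_method_noisy_state_1norm_bound}. Write $V:=U_L\cdots U_1$ and $P:=|\psi\>\<\psi|$ for a product state $|\psi\>$. The quantity to control is $\Tr[\rho_{\rm target}\,\mathcal{E}^{(L)}(P)]$. We compare it first with the noiseless value $\Tr[\rho_{\rm target} VPV^\dag]$, and then with $\Tr[\sigma P]$, since $\Tr[\sigma P]$ is governed by $S$.

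\emph{Step 1 (forward noise).} By Lemma~\ref{AQER_method_noisy_state_1norm_bound} with $\rho_0=P$, we have $\|\mathcal{E}^{(L)}(P)-VPV^\dag\|_1\le (L+1)\|\mathcal{M}-{\rm id}\|_\diamond$. Using $|\Tr[A(X-Y)]|\le\|A\|_\infty\|X-Y\|_1$ with $A=\rho_{\rm target}$ and $\|\rho_{\rm target}\|_\infty\le 1$, we get
$|\Tr[\rho_{\rm target}\mathcal{E}^{(L)}(P)]-\Tr[\rho_{\rm target}VPV^\dag]|\le (L+1)\|\mathcal{M}-{\rm id}\|_\diamond$.

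\emph{Step 2 (backward noise).} Rewrite $\Tr[\rho_{\rm target}VPV^\dag]=\Tr[V^\dag\rho_{\rm target}V\,P]$. The operator $V^\dag\rho_{\rm target}V$ is exactly the noiseless version of $\mathcal{F}^{(L)}(\rho_{\rm target})$, namely $\mathcal{U}_1^\dag\circ\cdots\circ\mathcal{U}_L^\dag(\rho_{\rm target})$. Applying Lemma~\ref{AQER_method_noisy_state_1norm_bound} to the unitaries $U_L^\dag,\dots,U_1^\dag$ with noise $\mathcal{N}$ and initial state $\rho_{\rm target}$ gives $\|\sigma-V^\dag\rho_{\rm target}V\|_1\le (L+1)\|\mathcal{N}-{\rm id}\|_\diamond$. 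Since $\|P\|_\infty=1$, it follows that $|\Tr[V^\dag\rho_{\rm target}VP]-\Tr[\sigma P]|\le (L+1)\|\mathcal{N}-{\rm id}\|_\diamond$.

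\emph{Step 3 (combine).} The triangle inequality yields
$|\Tr[\rho_{\rm target}\mathcal{E}^{(L)}(P)]-\<\psi|\sigma|\psi\>|\le (L+1)(\|\mathcal{M}-{\rm id}\|_\diamond+\|\mathcal{N}-{\rm id}\|_\diamond)$ for every product $|\psi\>$.
\begin{itemize}
\item For the lower bound \eqref{AQER_method_layered_noisy_backward_state_prop_app_eq1}, apply Theorem~\ref{AQER_method_mixed_state_projection_prop_app} with $\mathcal{E}=\mathcal{F}^{(L)}$, which gives $1-\<\psi|\sigma|\psi\>\ge f_1(S)$, and subtract the error term.
\item For the upper bound \eqref{AQER_method_layered_noisy_backward_state_prop_app_eq2}, take $|\psi'_{\rm product}\>$ to be the product state that Theorem~\ref{AQER_method_mixed_state_projection_prop_app} constructs from $\sigma$. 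By Lemmas~\ref{AQER_method_rdm_approximation} and~\ref{AQER_method_rdm_beta_gamma_value}, each factor of this state is the top eigenvector of the corresponding single-qubit marginal of $\sigma$. This gives $1-\<\psi'|\sigma|\psi'\>\le f_2(S)$, and we add the error term.
\end{itemize}

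The main obstacle is not the inequalities but the bookkeeping in Step 2. One has to check that the ordering of $\mathcal{F}^{(L)}$ (starting with $\mathcal{U}_L^\dag$) matches the hypothesis of Lemma~\ref{AQER_method_noisy_state_1norm_bound}, and that the adjoint identity $\Tr[\rho\,\mathcal{U}(P)]=\Tr[\mathcal{U}^\dag(\rho)P]$ is only used for the noiseless unitary channel, never for the noisy ones. Note also that Theorem~\ref{AQER_method_mixed_state_projection_prop_app} is used with a mixed $\sigma$. Its proof carries over verbatim from the pure case, because the proof of Theorem~\ref{AQER_method_tn_init_projection_prop_app} only uses single-qubit marginals and the probabilistic union and min arguments. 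I would state this explicitly rather than re-derive it.
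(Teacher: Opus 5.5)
Your proposal is correct and follows the paper's proof essentially line by line. Both arguments use the same triangle inequality through the noiseless backward state $V^\dag\rho_{\rm target}V$, H\"older's inequality with $\|\cdot\|_\infty\le 1$, and two applications of Lemma~\ref{AQER_method_noisy_state_1norm_bound} (once for $\mathcal{M}$, once for $\mathcal{N}$), then invoke Theorem~\ref{AQER_method_mixed_state_projection_prop_app} on $\mathcal{F}^{(L)}(\rho_{\rm target})$. The paper already states and proves that theorem for mixed states and general CPTP maps, so you can cite it directly instead of arguing that the pure-state proof carries over.
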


The above results show that the entanglement-governed terms in Theorem~\ref{AQER_method_tn_init_projection_prop_app} persist in the noisy setting, with an additional noise term that scales linearly with the depth $L$ and the noise strength. More precisely, the noise strength is captured by the accumulated quantity $(L+1)\big(\|\mathcal{M}-{\rm id}\|_\diamond + \|\mathcal{N}-{\rm id}\|_\diamond\big)$. For concrete CPTP noise models, this correction is easy to interpret. For example, a depolarizing channel $\mathcal{D}_p(\rho) = (1-p)\rho + p I/d$ with error rate $p$ satisfies $\|\mathcal{D}_p - {\rm id}\|_\diamond = \O(p)$, leading to a correction of order $(L+1)p$. Similar linear scalings hold for other common noise models such as dephasing and amplitude-damping channels. When both the entanglement measure and the accumulated noise remain moderate, AQL can achieve a small approximation error on NISQ hardware.

We remark that for noisy quantum channels, the entanglement measure $\mathcal{S}$ is generally higher than that in the corresponding noiseless case. Here we provide an example of the depolarizing channel in Theorem~\ref{AQER_depolarchannel_S}. Therefore, due to the exactly same bounds formulations in Theorems~\ref{AQER_method_tn_init_projection_prop_app} and \ref{AQER_method_mixed_state_projection_prop_app}, the infidelity achieved by noisy circuits would be worse than that of noiseless circuits.

\begin{theorem}\label{AQER_depolarchannel_S}
Denote the entanglement measure for an $N$-qubit (pure or mixed) state $\rho$ as
$\mathcal{S}(\rho)=\sum_{n=1}^{N} \mathcal{S}_{\{n\}}(\rho)$. Let $\mathcal{D}_p{\rho}= (1-p)\rho + pI/d$ be the depolarizing channel with error rate $p$. Then
\begin{equation}\label{AQER_depolarchannel_S_eq}
\left( 1 - \frac{p}{\ln4}\right) \mathcal{S}(\rho) + \frac{Np}{\ln4} \le \mathcal{S}(\mathcal{D}_p(\rho)) \le \mathcal{S}(\rho) + N \log_2 \frac{2}{1+(1-p)^2}.
\end{equation}
\end{theorem}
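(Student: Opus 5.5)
The proof reduces both inequalities to a statement about a single qubit and then sums over the $N$ qubits. I read $\mathcal{D}_p$ as the global depolarizing channel on all $N$ qubits with $d=2^N$.

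\textbf{Step 1: reduce to single qubits.} Partial trace commutes with the affine structure of the channel, and $\Tr_{[N]\setminus\{n\}}[I/2^N]=I/2$. Hence the reduced state on qubit $n$ is $\Tr_{[N]\setminus\{n\}}[\mathcal{D}_p(\rho)] = (1-p)\rho_n + pI/2$. Write $P_n:=\Tr[\rho_n^2]\in[1/2,1]$ and $q:=(1-p)^2$. A direct expansion, using $\Tr[\rho_n]=1$, gives
\begin{equation*}
P_n' := \Tr\big[((1-p)\rho_n + pI/2)^2\big] = q P_n + \frac{1-q}{2}.
\end{equation*}
Both claimed inequalities are sums over $n$ of per-qubit statements. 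So it suffices to prove, for each single-qubit entropy $s=-\log_2 P\in[0,1]$ and its image $s'=-\log_2 P'$,
\begin{equation*}
\Big(1-\frac{p}{\ln 4}\Big)s+\frac{p}{\ln4}\le s'\le s+\log_2\frac{2}{1+q}.
\end{equation*}

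\textbf{Step 2: upper bound.} The upper bound is equivalent to $P'\ge P(1+q)/2$. The difference is
\begin{equation*}
P'-\frac{(1+q)P}{2} = \frac{(1-q)(1-P)}{2},
\end{equation*}
which is nonnegative since $q\le 1$ and $P\le 1$. Summing over $n$ gives the right-hand inequality in Eq.~(\ref{AQER_depolarchannel_S_eq}).

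\textbf{Step 3: lower bound.} This is the main step. View $s'$ as a function of $s$:
\begin{equation*}
g(s)=-\log_2\big(q2^{-s}+\tfrac{1-q}{2}\big).
\end{equation*}
Its derivative is $q2^{-s}/(q2^{-s}+(1-q)/2)$, which is decreasing in $s$. So $g$ is concave on $[0,1]$ and lies above its chord, $g(s)\ge g(0)(1-s)+g(1)s$.

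The endpoint values are $g(1)=-\log_2\big(\tfrac{q+1-q}{2}\big)=1$ and $g(0)=-\log_2\big(1-p+p^2/2\big)$. The chord $c(1-s)+s$ is increasing in $c$, so it only remains to show $g(0)\ge p/\ln4$, i.e. $-\ln(1-p+p^2/2)\ge p/2$.

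Set $x=p-p^2/2\in[0,1/2]$ and use $-\ln(1-x)\ge x$. This gives $-\ln(1-p+p^2/2)\ge p-p^2/2\ge p/2$, since $p\le 1$. Then $g(s)\ge \frac{p}{\ln4}(1-s)+s$, which is exactly the per-qubit lower bound. Summing over $n$ yields $\big(1-\frac{p}{\ln4}\big)\mathcal{S}(\rho)+\frac{Np}{\ln4}\le\mathcal{S}(\mathcal{D}_p(\rho))$.

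The only delicate points are the concavity argument and this elementary logarithm estimate; everything else is direct computation.
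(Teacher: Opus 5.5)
Your proof is correct. The single-qubit reduction and the upper bound match the paper's argument; the paper writes your step $(1-q)(1-P)/2\ge 0$ as $p(1-\frac{p}{2})\ge p(1-\frac{p}{2})\Tr[\rho_n^2]$.

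The lower bound, however, follows a different route. The paper first rewrites $\mathcal{S}(\sigma_n)=\mathcal{S}(\rho_n)-\log_2\{1-p(1-\frac{p}{2})(2-2^{\mathcal{S}(\rho_n)})\}$. It then linearizes the logarithm with $\log_2(1-x)\le -x/\ln 2$, uses $1-\frac{p}{2}\ge\frac{1}{2}$, and finally removes the exponential via the chord bound $2^{s}\le 1+s$ on $[0,1]$. So the paper applies its convexity argument to $2^s$, after the logarithm has been linearized.

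You instead prove concavity of the exact transfer map $g$ and take its chord between $s=0$ and the fixed point $s=1$. The logarithmic estimate is then needed only once, at $s=0$.

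This gives you a sharper and more transparent intermediate result. With $\Delta:=\log_2\frac{2}{1+(1-p)^2}$ you obtain $s+(1-s)\Delta\le g(s)\le s+\Delta$. The left side is the best affine lower bound on $g$ over $[0,1]$, and the two sides coincide at $s=0$. The theorem's lower bound is exactly this one after the single relaxation $\Delta\ge p/\ln 4$.

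In exchange, the paper's chain avoids differentiating $g$ and uses only three standard scalar inequalities.
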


\subsection{Proof of Theorem~\ref{AQER_method_mixed_state_projection_prop_app}.}

\begin{proof}
For convenience, we denote by $\rho$ the $N$-qubit state $\mathcal{E}(\rho_{\rm target})$ and by $\rho_{n}:=\Tr_{[N]/\{n\}} \left[ \rho \right]$ the density matrix of the $n$-th single-qubit subsystem of $\rho$.
We denote the Renyi entropy $S_n:=\mathcal{S}(\rho_n)$ for simplicity.
Since $\rho_n$ is a single-qubit density matrix, $S_n \in [0,1]$.
We construct a general set of single-qubit projectors $\{|\psi_n\>\}_{n=1}^{N}$ with $|\psi_{\rm product}\>=\otimes_{n=1}^{N} |\psi_n\>$.
Accordingly, we define a set of random variables $\{y_n\}_{n=1}^{N}$, where $y_n=1$ if the measurement on the $n$-th qubit yields $|\psi_n\>$ and $y_n=0$ otherwise.
Thus, we can reformulate the infidelity between $\rho$ and $|\psi_{\rm product}\>$ as
\begin{align}
1-\< \psi_{\rm product} | \rho | \psi_{\rm product} \>
={}& 1- \Tr \left[ \rho \left(\otimes_{n=1}^{N} |\psi_n\>\<\psi_n| \right) \right] \notag \\
={}& 1 - {\rm Pr} \big( y_n=1 , \ \forall n\in [N] \big) . \label{AQER_method_mixed_state_projection_prop_proof_Tr_P}
\end{align}

First, we derive the lower bound on the infidelity in Eq.~(\ref{AQER_method_mixed_state_projection_prop_proof_Tr_P}) as follows.
We have
\begin{align}
1 - {\rm Pr} \big( y_n=1 , \ \forall n\in [N] \big)
\geq{}& 1 - \min_{n\in[N]} {\rm Pr} \big( y_n=1 \big) \notag \\
={}&  1 - \min_{n\in[N]} \Tr \left[ \rho \left( I^{\otimes (n-1)} \otimes |\psi_n\>\<\psi_n| \otimes I^{\otimes (N-n)} \right) \right] \notag \\
={}& 1 - \min_{n\in[N]} \Tr \left[ \rho_n |\psi_n\>\<\psi_n| \right] \notag \\
\geq{}& 1 - \min_{n\in[N]} \frac{1+\sqrt{2^{1-S_n}-1}}{2},
\label{AQER_method_mixed_state_projection_prop_proof_6_6}
\end{align}
where the last inequality follows from Lemma~\ref{AQER_method_rdm_approximation}, which gives the maximal fidelity between a single-qubit density matrix and a pure state in terms of its R\'enyi entropy.

We further proceed to deal with the terms in Eq.~\eqref{AQER_method_mixed_state_projection_prop_proof_6_6}.
In particular, the function $f(x)=\sqrt{2^{1-x}-1}$ decreases for $x \in [0,1]$.
Thus,
\begin{align}
\min_{n\in[N]} \sqrt{2^{1-S_n}-1}
={}& \min_{n \in [N]} f(S_n)
\leq{} f \left( \frac{1}{N} \sum_{n=1}^{N} S_n \right)
= f \left( \frac{S}{N} \right)
= \sqrt{2^{1- \frac{S}{N} }-1} . \label{AQER_method_mixed_state_projection_prop_proof_7}
\end{align}
Combining Eqs.~(\ref{AQER_method_mixed_state_projection_prop_proof_6_6}) and (\ref{AQER_method_mixed_state_projection_prop_proof_7}), we obtain 
\begin{align}
1-\< \psi_{\rm product} | \rho | \psi_{\rm product} \>
\geq{}& \frac{1-\sqrt{2^{1- \frac{S}{N}  }-1}}{2}
=: f_1(S) . \label{AQER_method_mixed_state_projection_prop_proof_8}
\end{align}
For the case of $S\rightarrow 0$, Eq.~(\ref{AQER_method_mixed_state_projection_prop_proof_8}) tends to $\frac{\ln2}{2N}S + \O(S^3)$ by calculating the Taylor expansion.

Next, we derive the upper bound of the infidelity.
We consider a specific choice of projectors $\{|\phi_n\>\}_{n=1}^{N}$, where each $|\phi_n\>$ is the pure-state approximation of $\rho_{\{n\}}$ with the largest fidelity given by Lemmas~\ref{AQER_method_rdm_approximation} and~\ref{AQER_method_rdm_beta_gamma_value}.
That is, $|\phi_n\>$ is chosen to maximize $\Tr[\rho_n|\phi_n\>\<\phi_n|]$.
Similarly, we define a set of random variables $\{x_n\}_{n=1}^{N}$, where $x_n=1$ if measuring the $n$-th qubit of $\rho$ yields the state $|\phi_n\>$ and $x_n=0$ otherwise.
Based on Lemma~\ref{AQER_method_rdm_approximation}, we have
\begin{equation}\label{AQER_method_mixed_state_projection_prop_proof_0}
{\rm Pr} \big( x_n=1 \big) = \frac{1+\sqrt{2^{1-S_n}-1}}{2},\quad
{\rm Pr} \big( x_n=0 \big) = \frac{1-\sqrt{2^{1-S_n}-1}}{2}.
\end{equation}
Thus, the infidelity between $\rho$ and the product state $|\phi\>=\otimes_{n=1}^{N} |\phi_n\>$ can be reformulated as
\begin{align}
1-\< \phi | \rho | \phi \>
={}& 1- \Tr \left[ \rho \left(\otimes_{n=1}^{N} |\phi_n\>\<\phi_n| \right) \right] \notag \\
={}& 1- {\rm Pr} \big( x_n=1 , \ \forall n\in [N] \big) . \label{AQER_method_mixed_state_projection_prop_proof_1}
\end{align}
In the following, we proceed with the derivation from the probability in Eq.~(\ref{AQER_method_mixed_state_projection_prop_proof_1}) to obtain the upper bound of the infidelity.
We have
\begin{align}
{\rm Pr} \big( x_n=1 , \ \forall n\in [N] \big)
={}& 1 - {\rm Pr} \big( \exists \ n \in [N],\ {\rm s.t. }\ x_n=0 \big) \notag \\
\geq{}& 1 - \sum_{n=1}^{N} {\rm Pr} \big( x_n=0 \big) \notag \\
={}& 1 - \sum_{n=1}^{N} \frac{1-\sqrt{2^{1-S_n}-1}}{2} , \label{AQER_method_mixed_state_projection_prop_proof_2_3} 
\end{align}
where Eq.~(\ref{AQER_method_mixed_state_projection_prop_proof_2_3}) follows from Eq.~(\ref{AQER_method_mixed_state_projection_prop_proof_0}) and the union bound.

The function $f(x)=\sqrt{2^{1-x}-1}$ is concave for $x \in [0,1]$ since it has the second-order derivative
\begin{equation}
f''(x) = \frac{(\ln 2)^2 \, 2^{1-x} \left(2^{1-x}-2\right)}{4 \left(2^{1-x}-1\right)^{3/2}} \leq 0
\end{equation}
for $x \in [0,1]$.
Therefore, we have
\begin{equation}\label{AQER_method_mixed_state_projection_prop_proof_3}
f(a)+f(b) \geq{} \left\{ 
\begin{aligned}
f (0) + f(a+b) \quad {}&{\rm when }\ a, b, a+b \in [0,1], \\
f (a+b-1) + f(1) \quad {}&{\rm when }\ a, b \in [0,1]\ {\rm and}\ a+b \in [1,2],
\end{aligned} \right.
\end{equation}
due to the property of concave functions.
Eq.~(\ref{AQER_method_mixed_state_projection_prop_proof_3}) can be employed to Eq.~(\ref{AQER_method_mixed_state_projection_prop_proof_2_3}) by selecting a qubit pair $(i,j)$ such that $S_i, S_j \notin \{0,1\}$ and pushing the new values $S'_i, S'_j$ towards the boundary of $[0,1]$ while keeping $S'_i+S'_j=S_i+S_j$.
By conducting the above procedure for at most $N-1$ times, we can bound the probability 
as follows:
\begin{align}
{\rm Pr} \big( x_n=1 , \ \forall n\in [N] \big)
\geq{}& 1 - \frac{N}{2} + \frac{1}{2} \bigg( \lfloor S \rfloor f(1) + f(S- \lfloor S \rfloor) + \left( N- \lfloor S \rfloor - 1 \right) f(0) \bigg) \notag \\
={}& \frac{1}{2} \left( \sqrt{2^{1-S+\lfloor S \rfloor} -1} - \lfloor S \rfloor + 1 \right), \label{AQER_method_mixed_state_projection_prop_proof_4}
\end{align}
where Eq.~(\ref{AQER_method_mixed_state_projection_prop_proof_4}) yields from calculating the value of the function $f$ at $0$ and $1$.
Thus, we obtain the upper bound for the infidelity:
\begin{align}
1-\< \phi | \rho | \phi \>
={}& 1 - {\rm Pr} \big( x_n=1 , \ \forall n\in [N] \big) \notag \\
\leq{}& \frac{1}{2} \left( 1 - \sqrt{2^{1-S+\lfloor S \rfloor} -1} + \lfloor S \rfloor  \right)
=: f_2(S). \label{AQER_method_mixed_state_projection_prop_proof_5}
\end{align}
For the case of $S\rightarrow 0$, Eq.~(\ref{AQER_method_mixed_state_projection_prop_proof_5}) tends to $\frac{\ln2}{2}S + \O(S^3)$ by calculating the Taylor expansion.
Thus, we have proved Theorem~\ref{AQER_method_mixed_state_projection_prop_app}.
\end{proof}

\subsection{Proof of Theorem~\ref{AQER_method_layered_noisy_backward_state_prop_app}.}

\begin{proof}

For convenience, we denote by $\hat{\rho}:=\mathcal{F}^{(L)}(\rho_{\rm target})$ and $\rho:=\mathcal{U}_1^{\dag} \circ \cdots \circ \mathcal{U}_L^{\dag} (\rho_{\rm target})$ the state obtained from $\rho_{\rm target}$ with and without noise channels, respectively. Next, we focus on the lower bound in Eq.~(\ref{AQER_method_layered_noisy_backward_state_prop_app_eq1}), while the upper bound in Eq.~(\ref{AQER_method_layered_noisy_backward_state_prop_app_eq2}) can be derived similarly.  
By employing Theorem~\ref{AQER_method_mixed_state_projection_prop_app}, we have
\begin{equation}\label{AQER_method_layered_noisy_backward_state_prop_app_hatrhosigma_f1}
1 - \< \psi_{\rm product} | \hat{\rho} | \psi_{\rm product} \> = 1 - \Tr[\hat{\rho} \sigma_{\rm product}] \geq f_1(S)
\end{equation}
for all product state $| \psi_{\rm product} \>$, where $\sigma_{\rm product} := | \psi_{\rm product} \> \< \psi_{\rm product} | $. 
Thus, the infidelity between the target state $\rho_{\rm target}$ and the state recovered by $\mathcal{E}^{(L)}$ is 
\begin{align}
{}& 1 - \Tr [\rho_{\rm target} \mathcal{E}^{(L)} (\sigma_{\rm product})] \notag \\
={}& 1 - \Tr[\hat{\rho} \sigma_{\rm product}] + \Tr[\hat{\rho} \sigma_{\rm product}] - \Tr[{\rho} \sigma_{\rm product}] + \Tr[{\rho} \sigma_{\rm product}] - \Tr [\rho_{\rm target} \mathcal{E}^{(L)} (\sigma_{\rm product})] \notag \\
\geq{}& f_1(S) - \left| \Tr[\hat{\rho} \sigma_{\rm product}] - \Tr[{\rho} \sigma_{\rm product}] \right| - \left| \Tr[{\rho} \sigma_{\rm product}] - \Tr [\rho_{\rm target} \mathcal{E}^{(L)} (\sigma_{\rm product})] \right| . \label{AQER_method_layered_noisy_backward_state_prop_app_1_3}
\end{align}

The second term in Eq.~(\ref{AQER_method_layered_noisy_backward_state_prop_app_1_3}) can be bounded as
\begin{align}
\left| \Tr[\hat{\rho} \sigma_{\rm product}] - \Tr[{\rho} \sigma_{\rm product}] \right| \leq{}& \| \hat{\rho} - \rho \|_1 \|\sigma_{\rm product}\|_{\infty} \leq (L+1) \| \mathcal{N} - {\rm id} \|_{\diamond}, \label{AQER_method_layered_noisy_backward_state_prop_app_2}
\end{align}
where the last inequality yields from Lemma~\ref{AQER_method_noisy_state_1norm_bound} and the operator norm $\|\sigma\|_{\infty}\leq 1$.

The third term in Eq.~(\ref{AQER_method_layered_noisy_backward_state_prop_app_1_3}) is bounded in the similar way
\begin{align}
{}& \left| \Tr[{\rho} \sigma_{\rm product}] - \Tr [\rho_{\rm target} \mathcal{E}^{(L)} (\sigma_{\rm product})] \right| \notag \\
={}& \left| \Tr[ \mathcal{U}_1^{\dag} \circ \cdots \circ \mathcal{U}_L^{\dag} (\rho_{\rm target}) \sigma_{\rm product}] - \Tr [\rho_{\rm target} \mathcal{E}^{(L)} (\sigma_{\rm product})] \right| \notag \\
={}& \left| \Tr[ \rho_{\rm target} \mathcal{U}_L \circ \cdots \circ \mathcal{U}_1 (\sigma_{\rm product})] - \Tr [\rho_{\rm target} \mathcal{E}^{(L)} (\sigma_{\rm product})] \right| \notag \\
\leq{}& \left\| \mathcal{U}_L \circ \cdots \circ \mathcal{U}_1 (\sigma_{\rm product}) - \mathcal{E}^{(L)} (\sigma_{\rm product}) \right\|_1 \left\| \rho_{\rm target} \right\|_{\infty} \notag \\
\leq{}& (L+1) \| \mathcal{M} - {\rm id} \|_{\diamond} . \label{AQER_method_layered_noisy_backward_state_prop_app_3_5}
\end{align}
Combining Eqs.~(\ref{AQER_method_layered_noisy_backward_state_prop_app_2}) and (\ref{AQER_method_layered_noisy_backward_state_prop_app_3_5}) with Eq.~(\ref{AQER_method_layered_noisy_backward_state_prop_app_1_3}), we obtain 
\begin{align*}
1 - \Tr [\rho_{\rm target} \mathcal{E}^{(L)} (\sigma_{\rm product})] \geq{}& f_1(S) - (L+1) \left( \| \mathcal{M} - {\rm id} \|_{\diamond} + \| \mathcal{N} - {\rm id} \|_{\diamond} \right),
\end{align*}
which is Eq.~(\ref{AQER_method_layered_noisy_backward_state_prop_app_eq1}). Eq.~(\ref{AQER_method_layered_noisy_backward_state_prop_app_eq2}) can be derived similarly. Thus, we have proved Theorem~\ref{AQER_method_layered_noisy_backward_state_prop_app}.

\end{proof}

\subsection{Proof of Theorem~\ref{AQER_depolarchannel_S}}

\begin{proof}

For convenience, we denote by $\rho_n$ and $\sigma_n$ the reduced density matrix of $\rho$ and $\sigma=\mathcal{D}_p(\rho)$ at the $n$-th qubit, respectively. Then, we have
\begin{align}
\sigma_n ={}& \Tr_{[N]/\{n\}} \left[ \mathcal{D}_p(\rho) \right] = \Tr_{[N]/\{n\}} \left[ (1-p)\rho + pI/{2^N} \right] = (1-p) \rho_n + \frac{p}{2} I.
\end{align}
Thus, the Renyi entropy of the state $\sigma_n$ can be obtained as
\begin{align}
\mathcal{S}(\sigma_n) ={}& - \log_2 \Tr [\sigma_n^2] = - \log_2 \Tr \left[ ((1-p) \rho_n + \frac{p}{2} I)^2 \right] \notag \\
={}& - \log_2  \left\{ (1-p)^2 \Tr [\rho_n^2] + p(1-p) \Tr [\rho_n] + \frac{p^2}{4} \Tr[I] \right\} \notag \\
={}& - \log_2  \left\{ (1-p)^2 \Tr [\rho_n^2] + p \left(1-\frac{p}{2} \right)  \right\} . \label{AQER_depolarchannel_S_2_3}
\end{align}

Next, we derive inequalities for $\mathcal{S}(\sigma_n)$ based on Eq.~(\ref{AQER_depolarchannel_S_2_3}). We have
\begin{align}
\mathcal{S}(\sigma_n) ={}& - \log_2  \left\{ (1-p)^2 \Tr [\rho_n^2] + p \left(1-\frac{p}{2} \right)  \right\} \notag \\
\leq{}& - \log_2  \left\{ (1-p)^2 \Tr [\rho_n^2] + p \left(1-\frac{p}{2} \right) \Tr [\rho_n^2]  \right\} \notag  \\
={}& \mathcal{S}(\rho_n) - \log_2 \left( 1-p+\frac{1}{2}p^2 \right) \notag \\
={}& \mathcal{S}(\rho_n) + \log_2 \frac{2}{1+(1-p)^2} ,
\end{align}
where the inequality follows from $\Tr [\rho_n^2] \leq 1$.

On the other side,  
\begin{align}
\mathcal{S}(\sigma_n) ={}& - \log_2  \left\{ (1-p)^2 \Tr [\rho_n^2] + p \left(1-\frac{p}{2} \right)  \right\} \notag \\
={}& - \log_2 \Tr [\rho_n^2] - \log_2 \left\{ (1-p)^2 + \frac{p \left(1-\frac{p}{2} \right)}{\Tr [\rho_n^2] } \right\} \notag \\
={}& \mathcal{S}(\rho_n) - \log_2 \left\{ 1 - p \left( 1 - \frac{p}{2} \right) \left( 2 - \frac{1}{\Tr[\rho_n^2]} \right) \right\} \notag  \\
\geq{}& \mathcal{S}(\rho_n) + \frac{1}{\ln 2} p \left( 1 - \frac{p}{2} \right) \left( 2 - \frac{1}{\Tr[\rho_n^2]} \right)  \notag \\
\geq{}& \mathcal{S}(\rho_n) + \frac{p}{\ln 4} \left( 2 - \frac{1}{\Tr[\rho_n^2]} \right)  \notag \\
={}& \mathcal{S}(\rho_n) + \frac{p}{\ln 4} \left( 2 - 2^{\mathcal{S}(\rho_n)} \right)  \notag \\
\geq{}& \mathcal{S}(\rho_n) + \frac{p}{\ln 4} \left( 2 - (1+  \mathcal{S}(\rho_n) ) \right)  \notag \\
={}& \left( 1 - \frac{p}{\ln 4} \right)  \mathcal{S}(\rho_n) + \frac{p}{\ln 4}, \label{AQER_depolarchannel_S_3_8}
\end{align}
where the first inequality follows from $\log_2 (1-x) \leq - \frac{x}{\ln 2}$, the second inequality follows from $p \in [0,1]$, and the third inequality follows from $2^x \leq 1+x$ for $x \in [0,1]$.
By summing Eqs.~(\ref{AQER_depolarchannel_S_2_3}) and (\ref{AQER_depolarchannel_S_3_8}) over $n \in [N]$, respectively, we obtain Eq.~(\ref{AQER_depolarchannel_S_eq}).

\end{proof}

\section{More implementation details of the AQER Algorithm}
\label{AQER_app_discuss_remark}

In this section, we provide a detailed discussion of the technical aspects of AQER, including the computation and optimization of the entanglement measure $\mathcal{S}$, the explicit construction of product-state approximations, and the trainability of variational circuit optimization in AQER. We also discuss why $\mathcal{S}$ serves as an efficient proxy for the global approximation error and how it facilitates scalable training on large quantum systems. Finally, we explain how to employ AQER with classical data.

\subsection{Step I: Computation and Optimization of the entanglement measure}
The computation and optimization of the entanglement measure $\mathcal{S}$ in Step I of AQER requires only a limited amount of quantum resources. Since $\mathcal{S}$, which is defined as the sum of single-qubit Renyi entropies, consists solely of local terms, it can be efficiently evaluated using local measurements. Specifically, the Renyi entanglement entropy of any subsystem of constant size can be estimated via quantum state tomography on its reduced density matrix, which requires only $\mathcal{O}(1/\epsilon^2)$ measurements to achieve an $\epsilon$-precision estimate. Besides, for a candidate 2-qubit gate block in each iteration of Step I, only the single-qubit Renyi entropies of the two involved qubits are affected. Therefore, the change in $\mathcal{S}$ due to adding a new gate block equals the change in the sum of these two single-qubit entropies. Thus, in each iteration of Step I, the gate block that induces the largest decrease in the relevant single-qubit entropies is selected as the new structure. The parameters within these blocks is optimized using classical methods such as the Nelder-Mead algorithm. Through this procedure, entanglement is systematically reduced using only local information. This reduction allows efficient scaling to large quantum systems.

\subsection{Step II: Parameter Computation}
The parameters for single-qubit rotations in Step II of AQER can be computed explicitly and efficiently via single-qubit state tomography.
Based on Theorem~\ref{AQER_method_tn_init_projection_prop}, the state $|\bm{v}_T\>$ after Step I has reduced entanglement and can be well approximated by a product state, which is prepared from $|0\>^{\otimes N}$ by using single-qubit rotations. In Step II, the parameters $(\bm{\beta}_i, \bm{\gamma}_i)$ for the single-qubit rotations $R_Z(\bm{\beta}_i)R_Y(\bm{\gamma}_i)$ can be explicitly computed from the elements of the single-qubit reduced density matrices obtained via tomography. Specifically, according to Lemma~\ref{AQER_method_rdm_beta_gamma_value}, each $\beta_i$ and $\gamma_i$ is directly computed from the matrix elements $\rho_{00}$, $\rho_{11}$, and $\rho_{10}$ of the corresponding reduced density matrix. As mentioned in the discussion regarding Step I, the tomography for each qubit requires $\mathcal{O}(1/\epsilon^2)$ measurements to achieve an $\epsilon$-precision estimate, which is independent of the total system size $N$. Once obtained, these parameters provide an explicit product-state approximation of $|\bm{v}_T\>$ without any iterative optimization. This yields a high-fidelity initialization for Step III.

\subsection{Step III: The Trainability of Parameter Optimization}
By reducing entanglement in Step I and explicitly constructing single-qubit gates in Step II, AQER mitigates the effects of barren plateaus on subsequent parameter optimization in Step III. Generally, in the training of VQAs, barren plateau phenomena occur when the loss function is initialized near its average value with exponentially small gradients. This makes parameter optimization extremely challenging. For loss functions defined as the infidelity with respect to an $N$-qubit target state, the average loss for randomly initialized parameterized circuits scales as $1 - \mathcal{O}(1/2^N)$. In contrast, in Step III of AQER, the entanglement measure $\mathcal{S}$ is suppressed before full optimization. By Theorem~3.1, smaller $\mathcal{S}$ values correspond to states with smaller infidelity loss. Therefore, the initial point of the variational circuit is effectively positioned away from the barren plateau region. This ensures that subsequent optimization starts from a well-conditioned region where gradients are meaningful, mitigating the impact of barren plateaus and improving trainability. This mechanism differs from previous strategies for avoiding or mitigating barren plateaus that primarily reshape the optimization landscape or amplify gradient signals (e.g., via architectural constraints~\cite{PhysRevX.11.041011,zhang2022trainabilitydeepquantumneural}, smart parameter initialization strategies~\cite{Grant2019initialization,NEURIPS2022_7611a3cb,PhysRevApplied.22.054005,peng2025titan}, or local cost functions~\cite{cerezo2021cost}) while not necessarily guaranteeing a low-loss starting state. Moreover, AQER mainly mitigates {parameter-space} barren plateaus, which does not resolve {data-induced} barren plateaus that can still cause severe trainability and generalization issues~\cite{Wang2024,zhang2024curserandomquantumdata}.

\subsection{AQER: Entanglement Measure as a Proxy for Approximation Error}

Using the entanglement measure $\mathcal{S}$ as a proxy for the global approximation error provides multiple advantages. Directly optimizing the infidelity loss requires global measurements, which are computationally expensive and sensitive to statistical variance, especially in large systems. In contrast, $\mathcal{S}$ can be efficiently evaluated and reduced using only local measurements. By first reducing $\mathcal{S}$, AQER not only lowers the approximation error but also prepares the circuit in a favorable regime for further global optimization. This strategy enhances both the efficiency of evaluation and the overall trainability and scalability of AQER on large-qubit quantum systems.

\subsection{AQER with Classical Data}

For classical data, AQER is implemented through classical simulation while maintaining the same three-step structure. In particular, the quantum state corresponding to classical data is represented in the form of classical vectors or tensor networks.

\textbf{Classical Simulation:} In Step I, the entanglement measure $\mathcal{S}$ is computed by classically evaluating reduced density matrices and their Renyi entropies. Gate optimization proceeds through classical simulation of candidate blocks, with parameters optimized using gradient-based methods with exact gradient information. In Step II, single-qubit rotation parameters $(\bm{\beta}_i, \bm{\gamma}_i)$ are extracted directly from computed reduced density matrices according to Lemma~\ref{AQER_method_rdm_beta_gamma_value}. Step III utilizes classical simulation for exact loss evaluation and gradient computation without statistical noise.

\section{Experimental Setup}
\label{aqer_app_experiment_setting}

In this section, we first describe the construction and preprocessing of both classical and quantum datasets. 
We then present the hyperparameter configurations of the reference quantum data loaders employed in our experiments.

\textbf{Numerical simulation settings.} All experiments presented in this work were performed using classical simulators. For computational tasks involving fewer than 20 qubits, we employed PennyLane~\cite{bergholm2018pennylane}, a Python-based package. For larger quantum systems with 20 or more qubits, we utilized PastaQ~\cite{pastaq}, a Julia-based package that employs tensor network techniques to support efficient simulation of large-scale quantum systems.

\subsection{Construction of classical and quantum datasets}
\label{aqer_app_experiment_preprocess}

\textbf{Classical data.} For MNIST, each image is zero-padded to $32\times32$, flattened into a $1024$-dimensional vector $\bm{v}$, and normalized to have unit $\ell_2$ norm. The corresponding target state is defined by the amplitude encoding $|\bm{v}\> = \sum_{j=1}^{2^N} \bm{v}_j |j\>$ with $N=10$ qubits. For CIFAR-10, each $32\times 32$ RGB image is flattened by concatenating the three color channels into a $3072$-dimensional vector, zero-padded to 4096 dimensions, and then normalized.
The second half of the vector is further treated as the imaginary part and combined with the first half to construct the compact encoding~\cite{blank2022compact}: $|\bm{v}\> = \sum_{j=1}^{2^N} (\bm{v}_j + i\, \bm{v}_{j+2048}) |j\>$ with $N=11$ qubits, which serves as the target to evaluate AQER under different encoding schemes. For SST-2, each sentence is initially represented by a $1024$-dimensional embedding vector obtained from a pretrained Sentence-BERT model. The embeddings are then normalized, and the corresponding target states are defined via amplitude encoding using $N=10$ qubits. 

\textbf{Quantum data.} 
We consider two types of quantum datasets: synthetic states generated from random quantum circuits (RQCs) and ground states of the one-dimensional transverse-field Ising model (1D TFIM).
These datasets serve as typical examples of physically relevant states arising from quantum circuit evolutions or quantum many-body systems.
Specifically, the first dataset is consists of $M=50$ states generated from different RQCs on the state $|0\>$. Each RQC is sampled from the set ${ RandomShuffle} \left(\{CZ_{p_k, q_k}\}_{k=1}^{W} \cup \{{R}_{\sigma_k, {p_k}}(\theta_k)\}_{k=W+1}^{4W} \right)$, where $1\leq p_k \neq q_k \leq N$ are randomly sampled from $[N]$. Each $R_{\sigma,{p_k}}$ is a single-qubit rotation on the qubit $p_k$ with $\sigma_k \sim {\rm Uniform}\{X,Y,Z\}$ and $\theta_k \sim {\rm Uniform}[0,2\pi]$. Such circuits produce highly entangled states when the number of two-qubit gates $W \geq \mathcal{O}(N)$ that serve as representative synthetic quantum data. In practice, we use $N=10$ and $W=40$. The second dataset consists of ground states of the 1D TFIM, which is defined by the Hamiltonian $H_{\rm TFIM} = - \sum_{i=1}^{N-1} J Z_{i} Z_{i+1} - \sum_{i=1}^{N} g X_{i}$. We consider coefficients near the phase transition point, i.e. $g=1$ and $J \in \{0.8, 0.9, 1, 1.1, 1.2\}$ to construct datasets with the size $M=5$ for $N\in \{10,20,30,40,50\}$.

\subsection{Hyperparameter settings of reference quantum data loaders}
\label{AQER_app_hyper_setting_reference}

Here, we provide the hyperparameter settings of the reference quantum data loaders used in our experiments. The descriptions of these methods can be found in Section~\ref{aqer_app_aql}.

\textbf{Automatic quantum circuit encoding (AQCE).} The AQCE method performs the two-qubit unitary updation sequentially in a forward–backward manner several times before adding new gates into the circuit. In the experiment, we add $5$ new unitaries for each adding-gate step, followed by $200$ rounds of forward–backward gate updation.

\textbf{Hardware-efficient circuits (HEC).} We adopt a layered hardware-efficient (HE) architecture, where each layer consists of an $R_Y$ rotation layer, an $R_Z$ rotation layer, and a CNOT layer acting on adjacent qubit pairs. Specifically, for the $i$-th layer, CNOT gates are applied to qubit pairs $\left( 2n , (2n+1)\%N \right),\ 0 \leq 2n \leq N-1$ when $i\%2=0$ and $\left( 2n+1 , (2n+2)\%N \right),\ 0 \leq 2n \leq N-2$ when $i\%2=1$. Parameters in HEC are initialized randomly from the uniform distribution over $[0,2\pi]$. Training is performed using the Adam optimizer for $2000$ iterations, which is consistent with the setting in AQER.

We remark that reference AQL methods introduce different constraints on the number of hardware-available two-qubit gates, such as CNOT and CZ. We summarize the feasible two-qubit gate count under these methods in Tab.~\ref{AQER_table_n2}. Specifically, the decomposition of an arbitrary two-qubit unitary requires two and three CNOT/CZ gates in the real and general complex cases, respectively~\cite{PhysRevA.69.032315}.

\begin{table}
\centering
\caption{{Feasible numbers of CNOT/CZ gates used in different encoding methods. The term $N$ is the number of qubits, and the term $k \in \mathbb{N}$ can be any positive integer.}}
{
\begin{tabular}{l|l}
\toprule
Method & two-qubit gate count $G$ \\
\midrule
\makecell[l]{AQCE ($\C$)\\ AQCE ($\R$)\\ MPS ($\C$)\\ MPS ($\R$)\\ HEC } 
& 
\makecell[l]{
$G = 15k$\\
$G = 10k$\\
$G = 3(N-1)k$\\
$G = 2(N-1)k$\\
$G = \lceil \tfrac{1}{2}Nk \rceil$
} \\
\midrule
AQER (Ours) & $G = k$ \\
\bottomrule
\end{tabular}
}
\label{AQER_table_n2}
\end{table}

\subsection{Quantum kernel method with cross-validation}
\label{AQER_app_QKM}

We employ the quantum kernel method (QKM)~\cite{PhysRevLett.113.130503,wang2021towards} for binary classification. QKM employs a quantum feature map to embed classical input vectors $\bm{x}^{(i)}$ into quantum states $|\psi(\bm{x}^{(i)})\>$, which are used to construct a kernel matrix $K_{ij}=|\<\psi(\bm{x}^{(i)})|\psi(\bm{x}^{(j)})\>|^2$ that quantifies pairwise similarities between data points. Denote by $y_i \in \{-1,1\}$ the label of data $\bm{x}^{(i)}$. Then, the training of support vector machines (SVM) corresponds to solving the dual optimization problem
\begin{equation*}
\max_{\bm{\alpha}} \left[ \sum_{i} \bm{\alpha}_i - \frac{1}{2} \sum_{ij} \bm{\alpha}_i \bm{\alpha}_j y_i y_j K_{ij} \right], \ {\rm s.t.} \ 0 \leq \bm{\alpha}_i \leq C, \sum_{i} \bm{\alpha}_i y_i =0 ,
\end{equation*}
where $C$ is the penalty parameter that is set to $1$ by default. Once the parameter $\bm{\alpha}$ of the QKM is trained, the predicted label for a new sample $\bm{x}$ is computed using the decision function
\begin{equation*}
\hat{y} = \text{sign} \Bigg[ \sum_{i=1}^N \bm{\alpha}_i y_i K(\bm{x}^{(i)}, \bm{x}) \Bigg],
\end{equation*}
where $K(\bm{x}^{(i)}, \bm{x}) = |\langle \psi(\bm{x}^{(i)}) | \psi(\bm{x}) \rangle|^2$ is the kernel between the training sample $\bm{x}^{(i)}$ and the new sample $\bm{x}$. 

In the experiment, we compare the approximate encoding from AQER with the exact amplitude encoding. We employ stratified $k$-fold cross-validation (with $k=5$) on the dataset with size $200$ to estimate classification performance. For each fold, the SVM is trained on the training subset using the precomputed kernel matrix, and predictions are made on the held-out validation subset using the corresponding submatrix of $K$. The reported error is the mean over all folds.

\section{Additional Experimental Results}
\label{aqer_app_experiment}

In this section, we present additional numerical results about the performance of the AQER on both classical and quantum datasets.

\begin{figure*}[t]
\centering
\includegraphics[width=.95\linewidth]{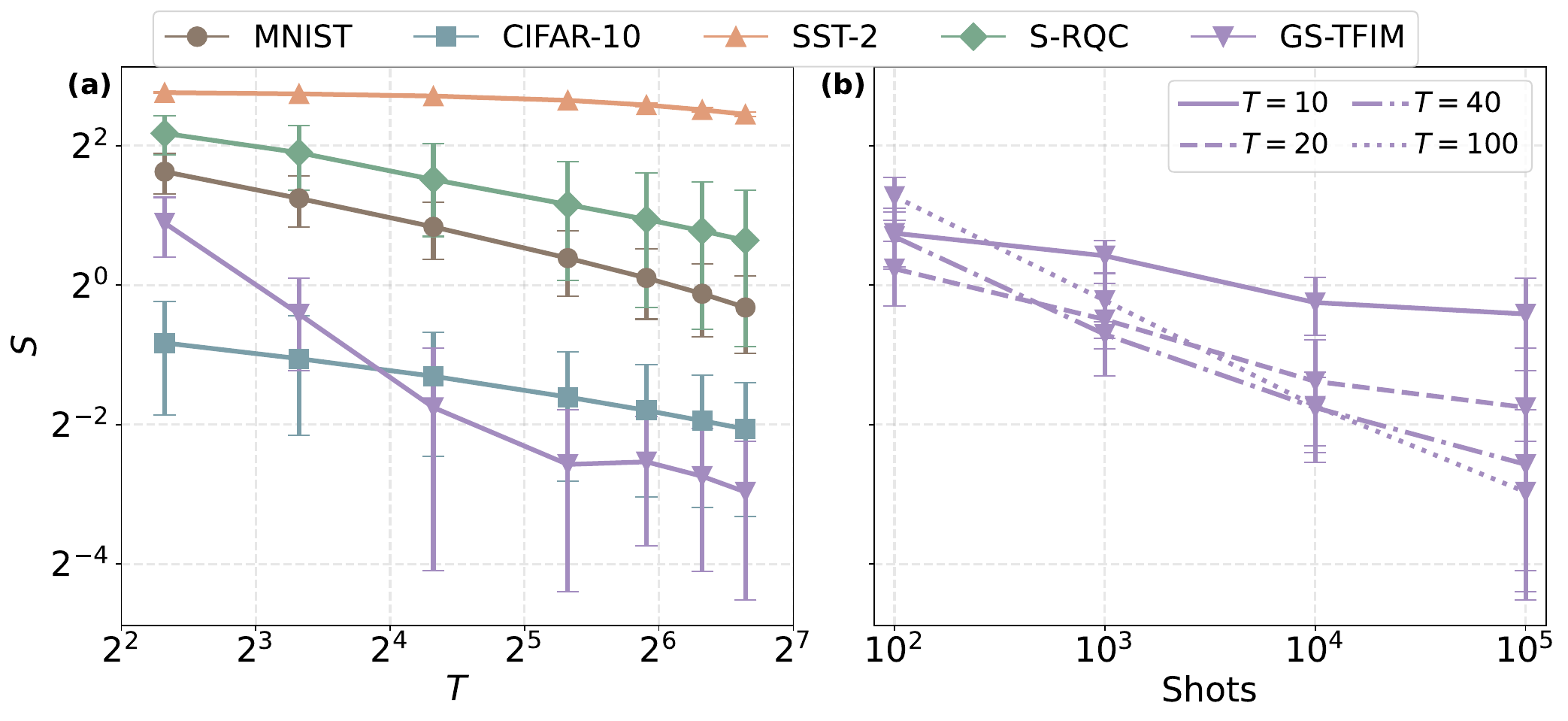}
\caption{{Performance of AQER on entanglement suppression across MNIST, CIFAR-10, SST-2, S-RQC, and GS-TFIM datasets, distinguished by different colors and markers. (a) Entanglement measure value $S$ versus different $T \in \{5, 10, 20, 40, 60, 80, 100\}$ after Step III of AQER across all datasets. (b) Entanglement measure value $S$ versus different measurement shots for the GS-TFIM dataset, with different $T \in \{10,20,40,100\}$. }}
\label{AQER_app_fig_entropy}
\end{figure*}

\textbf{Step I of AQER effectively reduces entanglement with increasing iteration times.}
We first examine how AQER progressively lowers the entanglement measure value $S$ for MNIST, CIFAR-10, SST-2, S-RQC, and GS-TFIM datasets. Fig.~\ref{AQER_app_fig_entropy}(a) shows $S$ after Step III of AQER for all datasets with varying $T \in \{5, 10, 20, 40, 60, 80, 100\}$. For most datasets, increasing $T$ consistently decreases $S$, indicating that AQER effectively reduces the entanglement of the target state by gradually expanding the circuit. The GS-TFIM dataset exhibits minor fluctuations due to statistical noise caused by a limited number of measurements, but the overall decreasing trend remains clear. These results corroborate the effectiveness of AQER in progressively mitigating entanglement as $T$ grows.

\textbf{Effect of shot number on entanglement reduction.}
We next study the effect of different measurement shots on entanglement reduction. Fig.~\ref{AQER_app_fig_entropy}(b) illustrates the entanglement measure $S$ for the GS-TFIM dataset, with $T \in \{10,20,40,100\}$. Increasing the number of shots generally reduces $S$, mirroring the trend observed in the main text for infidelity. This reduction is more pronounced for larger $T$. For instance, increasing the measurement shots from $10^2$ to $10^5$ reduces $S$ by roughly a factor of 2 for $T=10$, whereas for $T=100$ the reduction reaches approximately 16-fold. These results indicate that sufficient measurement shots significantly enhance the capability of AQER to suppress entanglement.

\begin{figure*}[t]
\centering
\includegraphics[width=.95\linewidth]{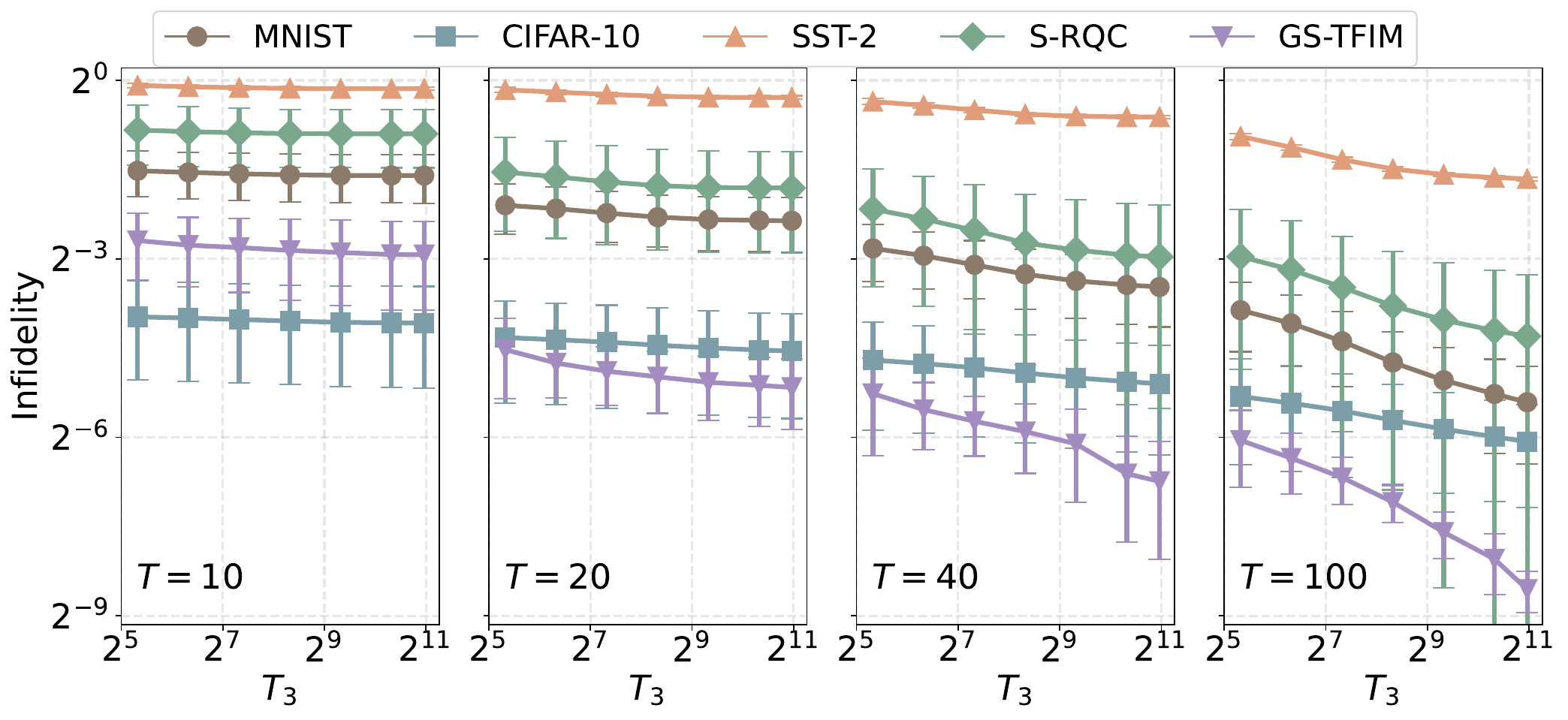}
\caption{{AQER infidelity with different optimization iterations in Step III, i.e. $T_3 \in \{40, 80, 160, 320, 640, 1280, 2000\}$. Each subfigure corresponds to $T = 10, 20, 40,$ and $100$, respectively. }}
\label{AQER_app_fig_T3}
\end{figure*}

\textbf{Step III of AQER effectively reduces infidelity with increasing iteration steps.}
We then investigate the effect of different numbers of Step III iterations $T_3 \in \{40, 80, 160, 320, 640, 1280, 2000\}$ on AQER performance. Fig.~\ref{AQER_app_fig_T3} shows the infidelity for MNIST, CIFAR-10, SST-2, S-RQC, and GS-TFIM datasets, with each subfigure corresponding to $T = 10, 20, 40,$ and $100$. For all cases, increasing $T_3$ steadily decreases the infidelity, indicating that longer Step III optimization effectively improves the approximation quality. This reduction is more pronounced for larger $T$, demonstrating that AQER benefits from combining a larger initial circuit with more extensive optimization in Step III.

\begin{figure*}[t]
  \centering
  \includegraphics[width=0.6\linewidth]{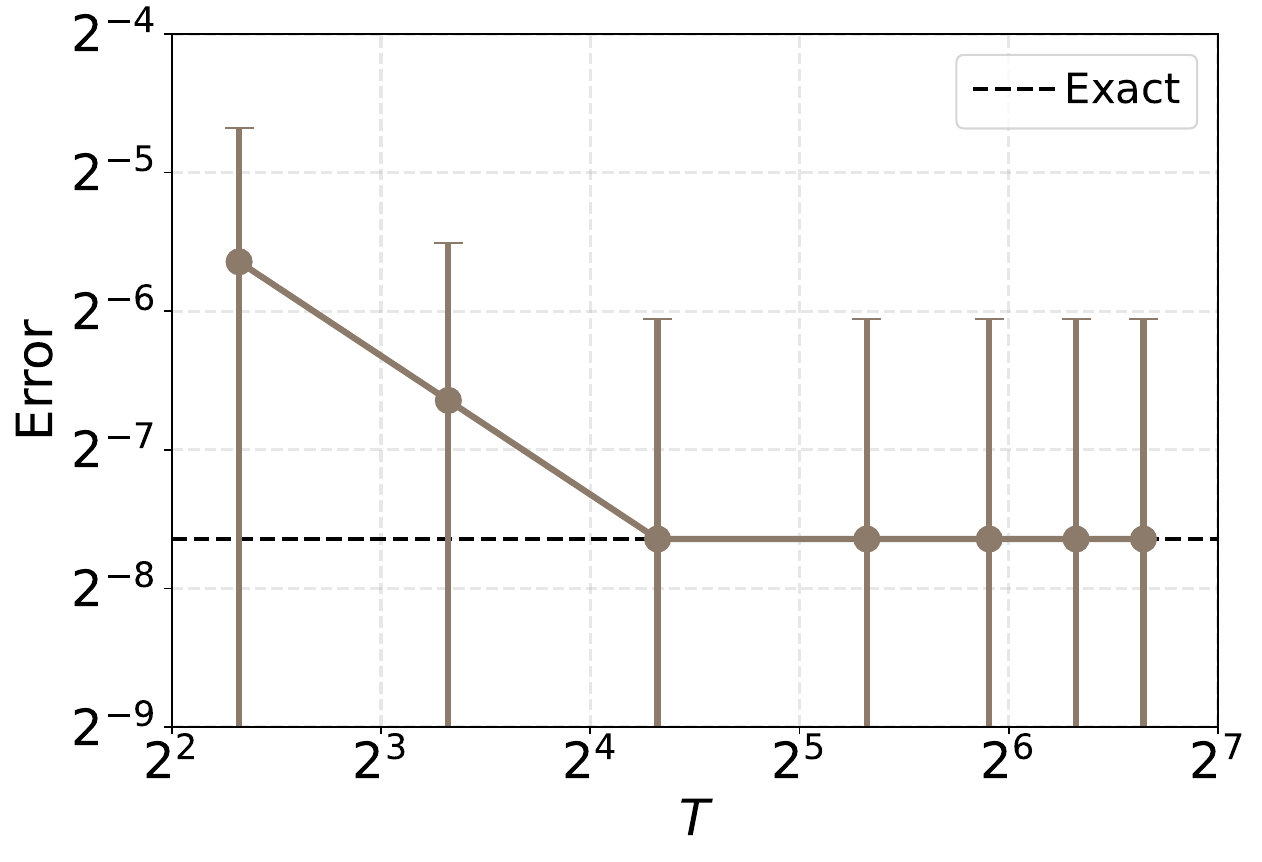}
  \caption{{Binary classification error on MNIST digits 0 vs 1 using AQER-loaded states with $T \in \{5, 10,20,40,60, 80, 100\}$, compared to exact loading (black dashed line). }}
  \label{AQER_app_fig_mnist01}
\end{figure*}

\textbf{Binary classification with AQER.}
We further examine the classification performance of QKM using AQER-loaded states. Specifically, we consider a binary task on MNIST digits 0 vs 1 using AQER-loaded states with different $T$ values. The results are shown in Fig.~\ref{AQER_app_fig_mnist01}, where the error rates for $T \in \{5, 10,20,40,60, 80, 100\}$ are compared against exact loading. We observe that the classification error decreases steadily as $T$ increases, and from $T \geq 20$ onward the performance already matches the exact-loading error. This indicates that AQER can achieve near-exact downstream performance on classification tasks with relatively small circuit sizes.

\begin{figure*}[t]
  \centering
  \includegraphics[width=0.98\linewidth]{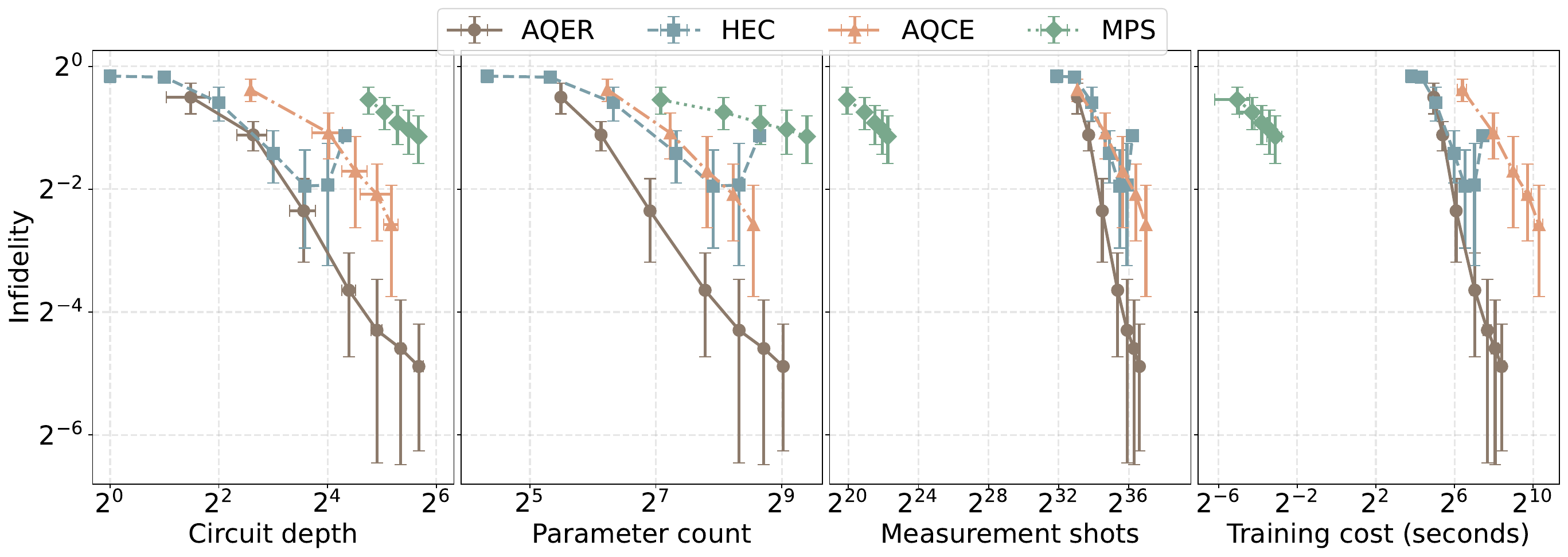}
  \caption{{ Infidelity as the function of metrics for different AQL methods on the S-RQC dataset. We consider four metrics: the circuit depth, the parameter count, the number of measurement shots, and the training cost in seconds. Each data point is the average over $5$ samples. }}
  \label{AQER_app_fig_aql_metrics}
\end{figure*}

\textbf{Resource–accuracy trade-offs of AQER versus AQL baselines.}
We further compare the accuracy–resource trade-offs of different AQL methods on the S-RQC dataset. Fig.~\ref{AQER_app_fig_aql_metrics} plots the infidelity achieved by AQER, HEC, AQCE, and MPS as a function of four resource metrics: circuit depth, number of trainable parameters, total number of measurement shots, and classical training time in seconds. Specifically, the circuit depth is defined as the number of sequential gate layers in the compiled quantum circuit under full parallelization of commuting gates on different qubits. We count only layers that contain at least one two-qubit gate, since in practice two-qubit gates are typically an order of magnitude slower than single-qubit gates. The parameter count is the number of all independent continuous parameters in the quantum circuit (e.g., rotation angles). The measurement shots are the number of quantum measurements used in the entire AQL procedure. The training time is measured on a laptop equipped with an Apple M2 chip and 8~GB of RAM. Overall, AQER achieves lower infidelity than the other methods under comparable or lower circuit depth and parameter counts, indicating a more efficient use of circuit expressivity. For measurement shots and training time, the behavior is more nuanced. The MPS baseline, owing to the simplicity of its algorithm, requires fewer shots and shorter training time, but this comes at the cost of substantially worse performance in terms of circuit depth, two-qubit gate count, and parameter count when realized as a circuit, as well as significantly higher infidelity. When compared against circuit-based baselines (AQCE and HEC), AQER consistently achieves lower infidelity with comparable or lower measurement overhead and training cost. Therefore, AQER achieves a favorable resource-accuracy trade-off among circuit-based AQL methods.

\begin{figure}[t]
\centering
\includegraphics[width=.6\linewidth]{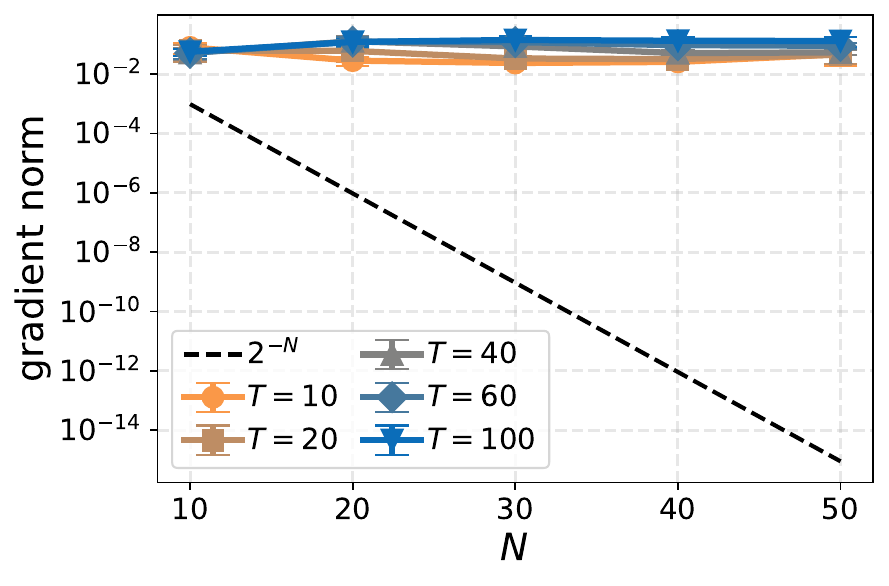}
\caption{{Scaling of AQER gradient norms with the number of qubits for the TFIM dataset. The plot shows the gradient norm at the initial stage of Step III training, as a function of system size $N \in \{10,20,30,40,50\}$, for different circuit sizes $T \in \{10,20,40,60,100\}$. The black dashed line depicts an exponentially decaying reference curve proportional to $2^{-N}$.}}
\label{AQER_app_fig_gradient_scaling}
\end{figure}

\textbf{Gradient scaling of AQER with large qubit number.}
We analyze the scaling of AQER gradient norms on the TFIM dataset with large qubit numbers to demonstrate that the Step III optimization in AQER is free from the barren plateau issue. For system sizes $N \in \{ 10,20,30,40,50\}$ and circuit sizes $T \in \{10,20,40,60,100\}$, we compute the gradient norm with respect to all trainable parameters at the beginning of the optimization. The result is shown in Fig.~\ref{AQER_app_fig_gradient_scaling}. Across all system sizes and circuit sizes, the initial gradient norms remain on the order of $10^{-2}$ and do not exhibit any exponential decay with $N$, in clear contrast to the exponentially vanishing reference curve $2^{-N}$ shown in the figure. These observations provide numerical evidence that the trainability of AQER remains stable as the system size increases from $N=10$ to $N=50$.

\begin{figure}[t]
\centering
\includegraphics[width=.75\linewidth]{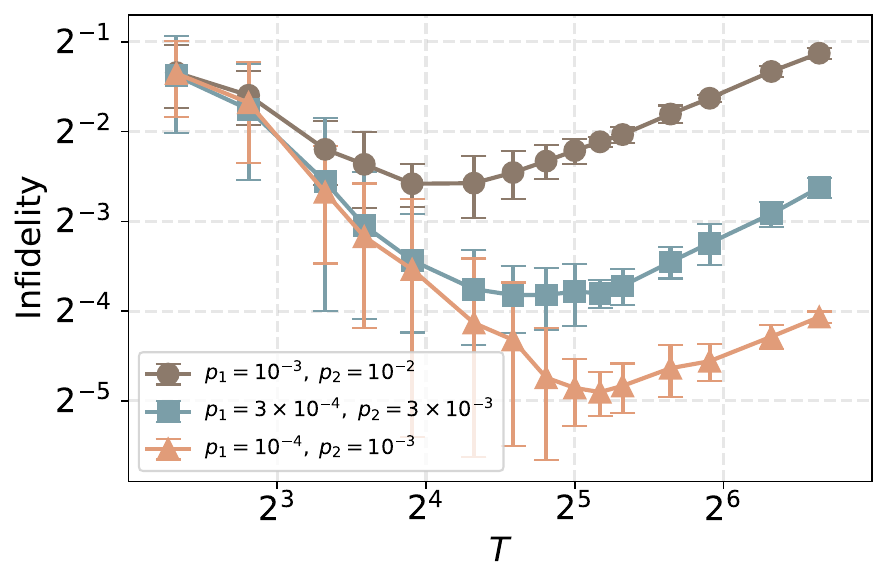}
\caption{{Infidelity of AQER under global depolarizing noise for the $N=10$ TFIM task. The plot shows the final infidelity as a function of the two-qubit gate count $T$ for three noise levels, where single-qubit and two-qubit gate error rates are $(p_1,p_2) \in \{(10^{-3},10^{-2}), (3\times10^{-4},3\times10^{-3}), (10^{-4},10^{-3})\}$. }}
\label{AQER_app_fig_noise_scaling}
\end{figure}

\textbf{Robustness of AQER under noisy channels.}
We perform noisy simulations on the $N=10$ TFIM task to assess the performance of AQER in realistic NISQ regimes. In particular, after each single-qubit and two-qubit gate layer, we apply a depolarizing channel with error rates $p_1 \in \{10^{-3}, 3\times10^{-4}, 10^{-4}\}$ and $p_2 = 10 p_1$, respectively, which corresponds to representative error ranges reported for current quantum devices. For each noise setting and two-qubit gate count $T \in \{5,7,10,12,15,20,24,28,32,36,40,50,60,80,100\}$, we run AQER and record the final infidelity. The result is shown in Fig.~\ref{AQER_app_fig_noise_scaling}. Across all three noise levels, we observe a noise-dependent optimal circuit size $T^*$: as $T$ increases from very small values, the infidelity decreases, which indicates that entanglement-guided circuit growth improves approximation quality before noise accumulation becomes dominant. Beyond $T^*$, further increasing $T$ leads to a gradual increase in infidelity as the effect of noise outweighs the expressivity gains. Moreover, as the physical error rates decrease from $(p_1,p_2)=(10^{-3},10^{-2})$ to $(10^{-4},10^{-3})$, the best achievable infidelity improves from above $2^{-3}$ to around $2^{-5}$, and the optimal $T^*$ increases. Altogether, these results demonstrate that AQER remains effective and robust in NISQ-level noise regimes.

\begin{figure}[t]
\centering
\includegraphics[width=.8\linewidth]{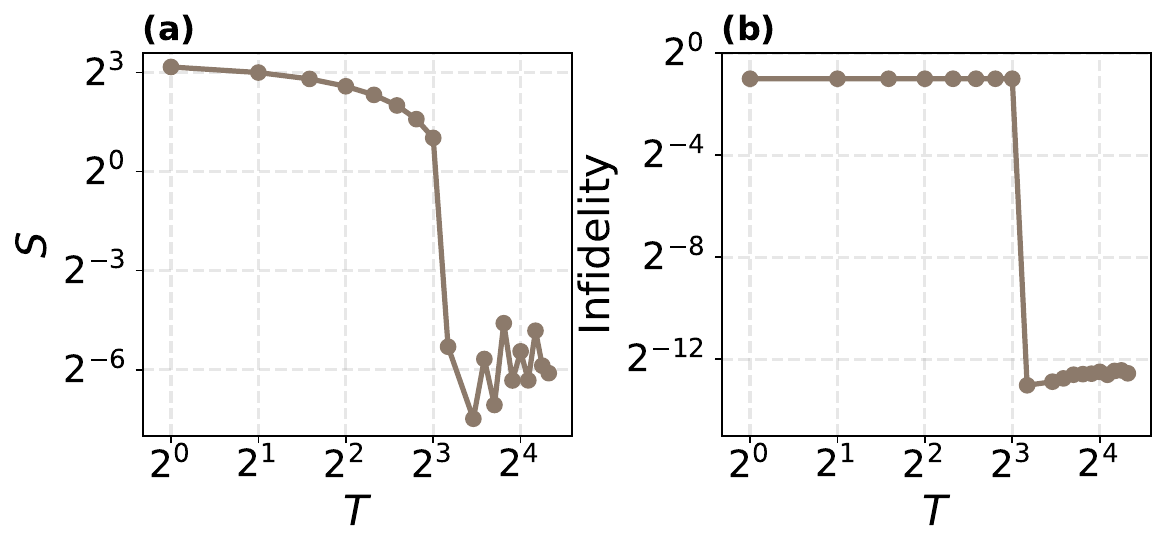}
\caption{{Performance of AQER on the $N=10$ GHZ state. (a) Entanglement measure $\mathcal{S}$ as a function of the two-qubit gate count $T$. (b) Infidelity as a function of $T$.}}
\label{AQER_app_fig_ghz}
\end{figure}

\textbf{AQER for preparing the GHZ state.}
We evaluate AQER on the $N=10$ GHZ state as a case study of its performance on highly entangled yet structurally simple states. The GHZ state is highly entangled in the sense that each single-qubit reduced density matrix is maximally mixed and the initial entanglement measure satisfies $\mathcal{S}=N$. Nonetheless, AQER is able to reduce this entanglement very efficiently. As shown in Fig.~\ref{AQER_app_fig_ghz}, the entanglement measure $\mathcal{S}$ decays rapidly with the two-qubit gate count $T$ and drops below $2^{-3}$ already at $T=9$. The corresponding infidelity is reduced to below $2^{-12}$ using only $9$ two-qubit gates. This example highlights that a quantum state can be highly entangled while still being structurally simple, and that in such cases AQER can efficiently find low-depth circuits that achieve very low infidelity despite the large initial value of $\mathcal{S}$.

\begin{figure}[t]
\centering
\includegraphics[width=.85\linewidth]{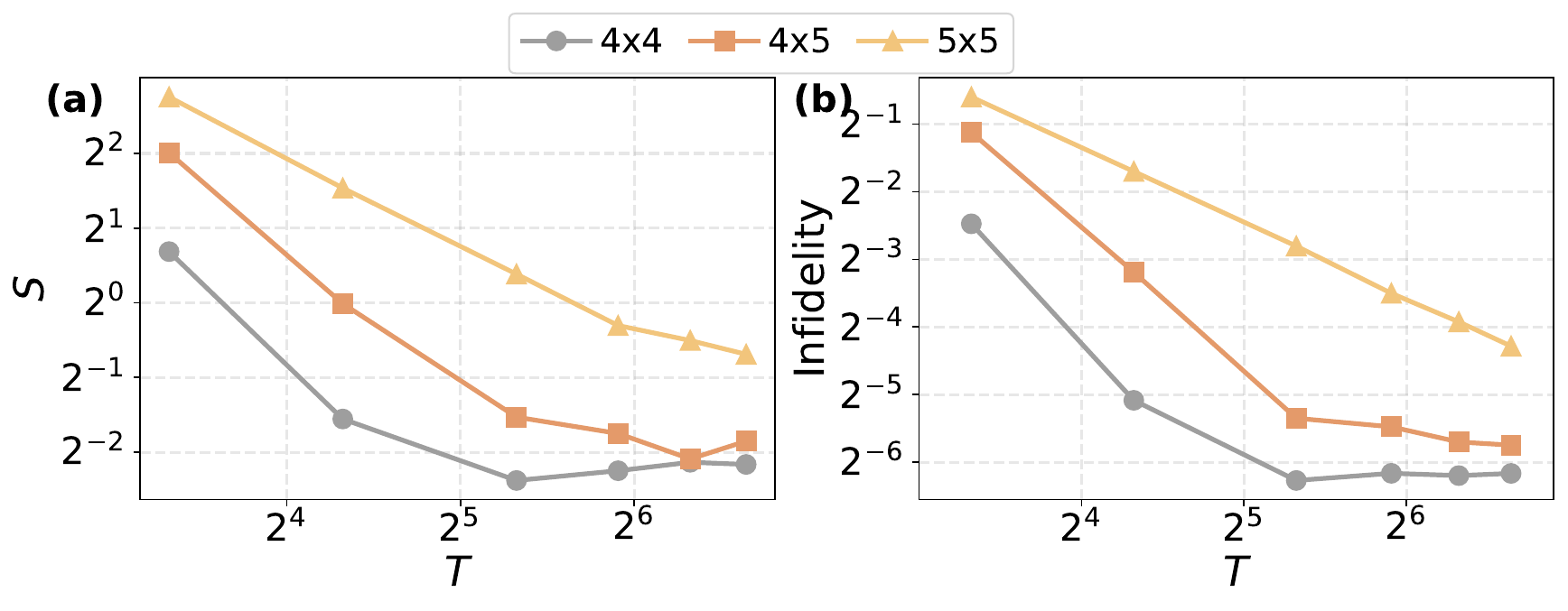}
\caption{{Performance of AQER on 2D random circuit states. Target states are generated by a depth-4 nearest-neighbor circuit on $4\times 4$, $4\times 5$, and $5\times 5$ lattices. (a) Entanglement measure $\mathcal{S}$ versus the two-qubit gate count $T \in \{10,20,40,60,80,100\}$. (b) Infidelity versus $T$ for the same set of 2D random circuit states.}}
\label{AQER_app_fig_2d_rqc}
\end{figure}

\textbf{AQER for preparing 2D random circuit states.}
We consider preparing 2D random circuit states on rectangular lattices to evaluate AQER on physically relevant targets. The target state is generated by a depth-4 circuit composed of alternating single-qubit and CZ layers. In each single-qubit layer, every qubit is acted on by a randomly chosen rotation from $\{R_X, R_Y, R_Z\}$ with a uniformly sampled angle. In each CZ layer, CZ gates are applied to nearest neighbors along horizontal or vertical directions according to one of four distinct tilings of the 2D grid (two horizontal and two vertical patterns). These four CZ patterns are cycled over the four layers so that, across the entire circuit, all nearest-neighbor couplings are activated. We consider $4\times 4$, $4\times 5$, and $5\times 5$ lattices and apply AQER with two-qubit gate counts $T \in \{10,20,40,60,80,100\}$. The resulting entanglement measures and infidelities are shown in Fig.~\ref{AQER_app_fig_2d_rqc}. For all lattice sizes, the entanglement measure $\mathcal{S}$ decreases by at least a factor of $8$ as $T$ increases from $10$ to $100$, indicating that the entanglement-reduction strategy remains effective for the 2D random circuit state. The corresponding infidelities also decrease as $T$ increases. For example, on the $4\times 4$ lattice the infidelity drops from above $2^{-3}$ at $T=10$ to around $2^{-6}$ at $T=100$, and on the $5\times 5$ lattice it decreases from above $2^{-1}$ to below $2^{-4}$. These results demonstrate that AQER can substantially and consistently decrease both the entanglement measure and the approximation error for shallow 2D random circuit states as the circuit size $T$ increases.

\begin{figure}[t]
\centering
\includegraphics[width=.85\linewidth]{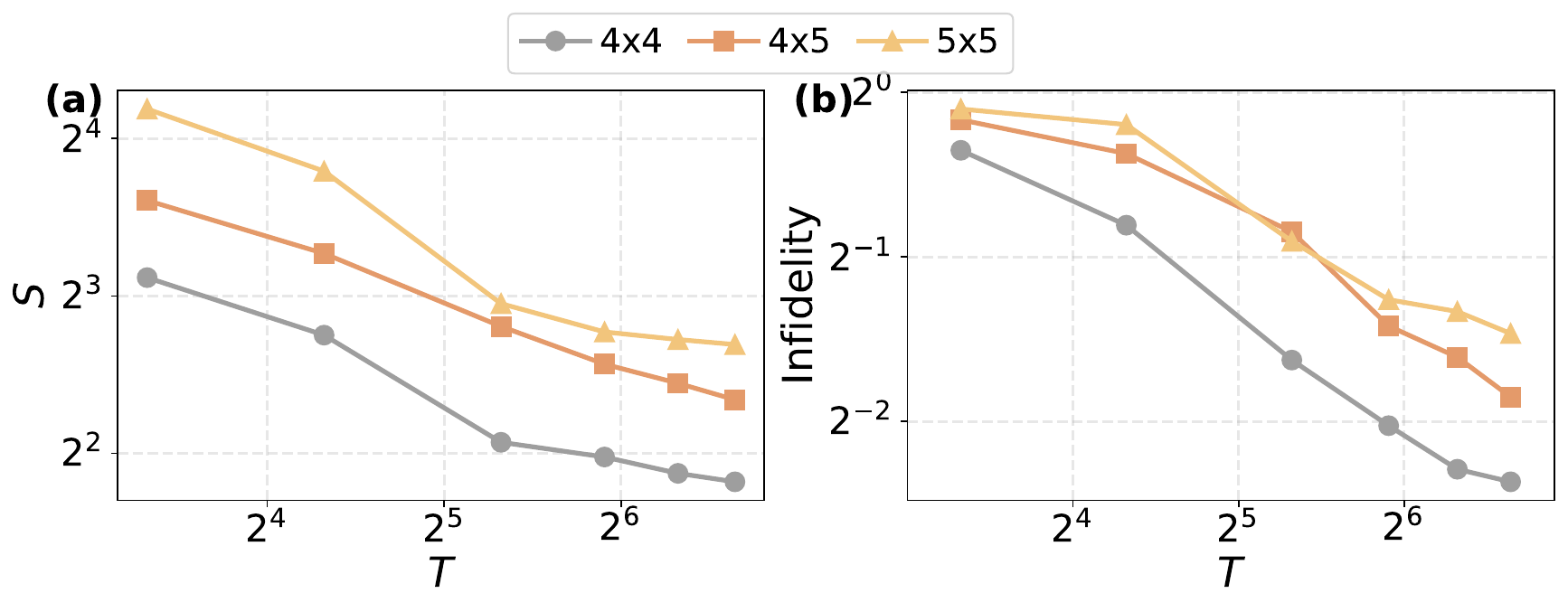}
\caption{{Performance of AQER on 2D XXZ ground states at the critical point. (a) Entanglement measure $\mathcal{S}$ as the function of the two-qubit gate count $T \in \{10,20,40,60,80,100\}$ for $4\times 4$, $4\times 5$, and $5\times 5$ lattices. (b) The infidelity as the function of $T$.}}
\label{AQER_app_fig_xxz_2d}
\end{figure}

\textbf{AQER for preparing 2D XXZ ground states at the critical point.}
We consider preparing the ground state of the spin-$\frac{1}{2}$ XXZ model on lattices to evaluate AQER on 2D many-body quantum states. The Hamiltonian is given by
\begin{equation}
H_{\mathrm{XXZ}} \;=\; \sum_{\langle i,j\rangle}
\Bigl[
J_{xy} \bigl( \sigma_i^x \sigma_j^x + \sigma_i^y \sigma_j^y \bigr)
+ J_z \,\sigma_i^z \sigma_j^z
\Bigr],
\end{equation}
where $\langle i,j\rangle$ runs over nearest-neighbor pairs on the 2D grid and $\sigma^{x,y,z}$ denote Pauli matrices. We focus on the critical point $J_{xy}=J_z$ and consider $4\times 4$, $4\times 5$, and $5\times 5$ lattices. For each lattice, we apply AQER with two-qubit gate counts $T \in \{10,20,40,60,80,100\}$. The entanglement measures and infidelities are shown in Fig.~\ref{AQER_app_fig_xxz_2d}. For all lattice sizes, the entanglement measure $\mathcal{S}$ decreases steadily as $T$ increases, typically by a factor of about $2$–$3$ when $T$ grows from $10$ to $100$ (e.g., from above $2^3$ to below $2^2$ on the $4\times 4$ lattice and from above $2^4$ to below $2^3$ on the $5\times 5$ lattice). The corresponding infidelities also decrease monotonically with $T$ by more that a factor of $2$. These results show that, AQER can reliably compress entanglement and reduce the approximation error for 2D quantum many-body systems, with both $\mathcal{S}$ and the infidelity decrease systematically as the circuit size $T$ increases.

\begin{figure*}[t]
  \centering
  \includegraphics[width=0.98\linewidth]{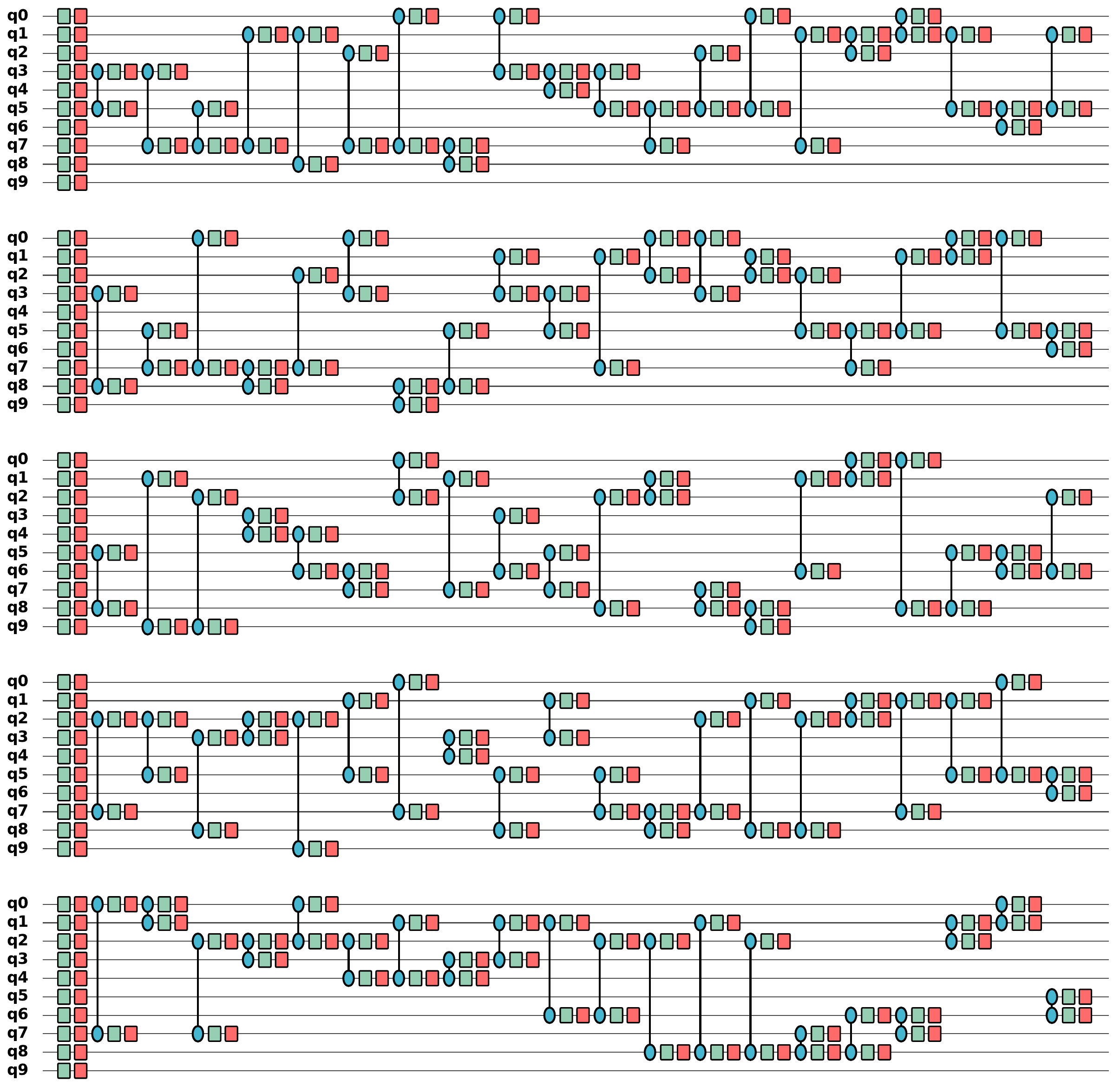}
  \caption{{Loading circuits generated from AQER with $T=20$ for the MNIST dataset. Green and red blocks denote $R_Y$ and $R_Z$ gates, respectively. Blue circles connected by lines denote $R_{ZZ}$ gates. }}
  \label{AQER_app_fig_circuit_mnist}
\end{figure*}

\textbf{Quantum circuit visualization.} We visualize the quantum circuits generated by the AQER algorithm with $T=10$ for the MNIST, CIFAR-10, SST-2, S-RQC, and GS-TFIM datasets in Figures~\ref{AQER_app_fig_circuit_mnist}-\ref{AQER_app_fig_circuit_gstfim}, respectively. Each figure shows the circuit layouts of five examples, including single-qubit rotation gates ${R_Y, R_Z}$ and two-qubit $R_{ZZ}$ gates.

\begin{figure*}[t]
  \centering
  \includegraphics[width=0.98\linewidth]{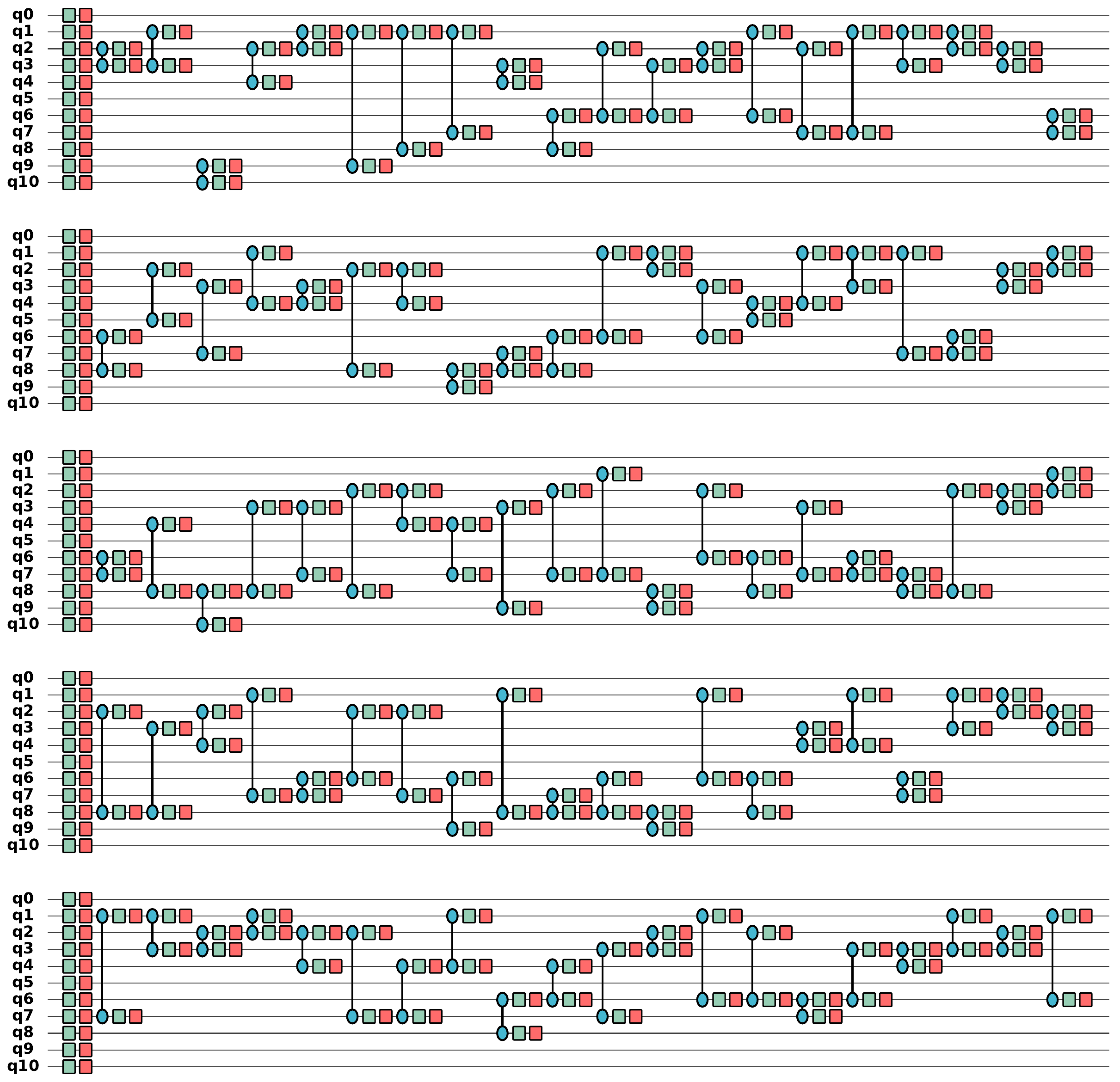}
  \caption{{Loading circuits generated from AQER with $T=20$ for the CIFAR-10 dataset. Green and red blocks denote $R_Y$ and $R_Z$ gates, respectively. Blue circles connected by lines denote $R_{ZZ}$ gates. }}
  \label{AQER_app_fig_circuit_cifar10}
\end{figure*}

\begin{figure*}[t]
  \centering
  \includegraphics[width=0.98\linewidth]{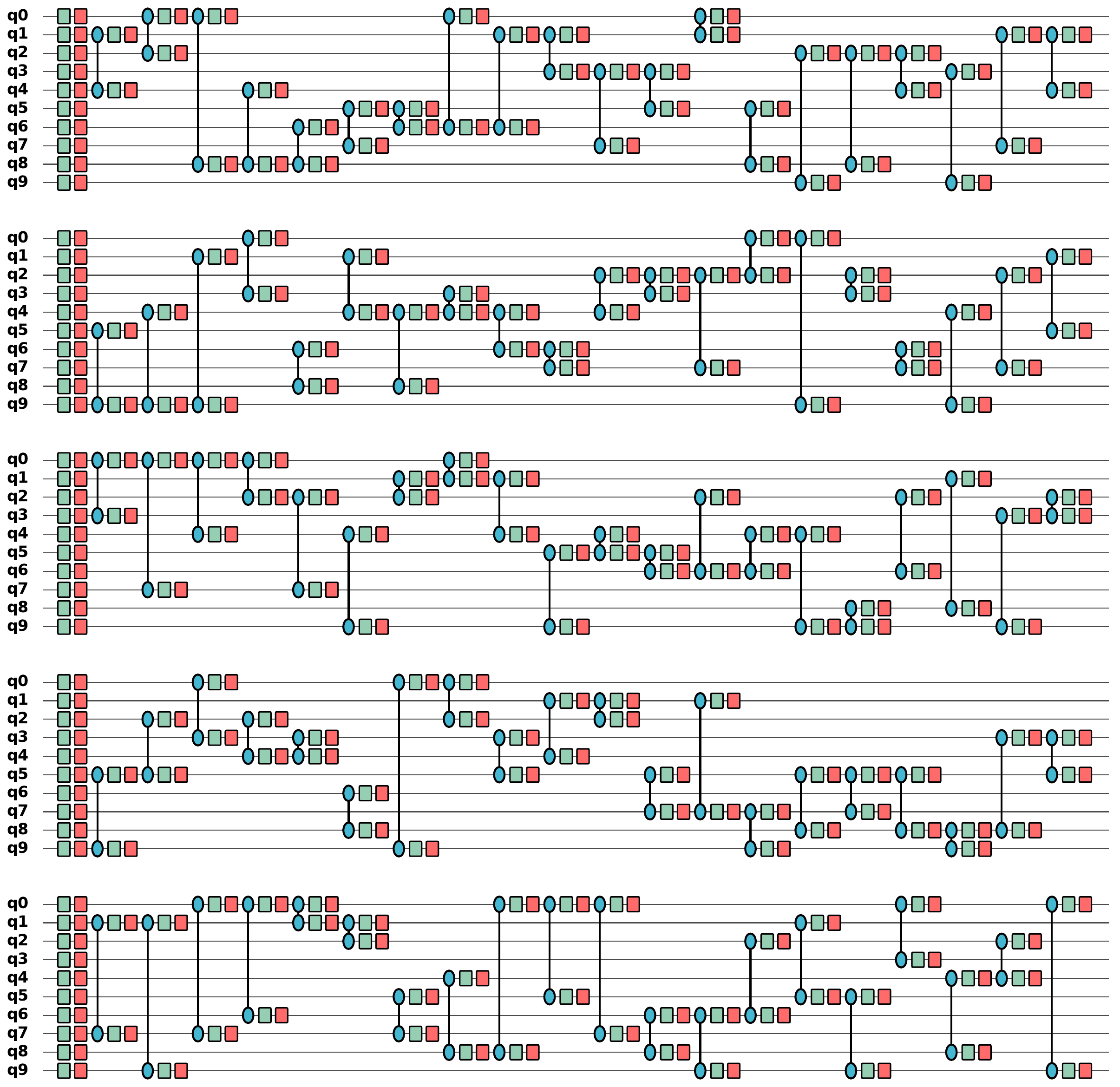}
  \caption{{Loading circuits generated from AQER with $T=20$ for the SST-2 dataset. Green and red blocks denote $R_Y$ and $R_Z$ gates, respectively. Blue circles connected by lines denote $R_{ZZ}$ gates. }}
  \label{AQER_app_fig_circuit_sst2}
\end{figure*}

\begin{figure*}[t]
  \centering
  \includegraphics[width=0.98\linewidth]{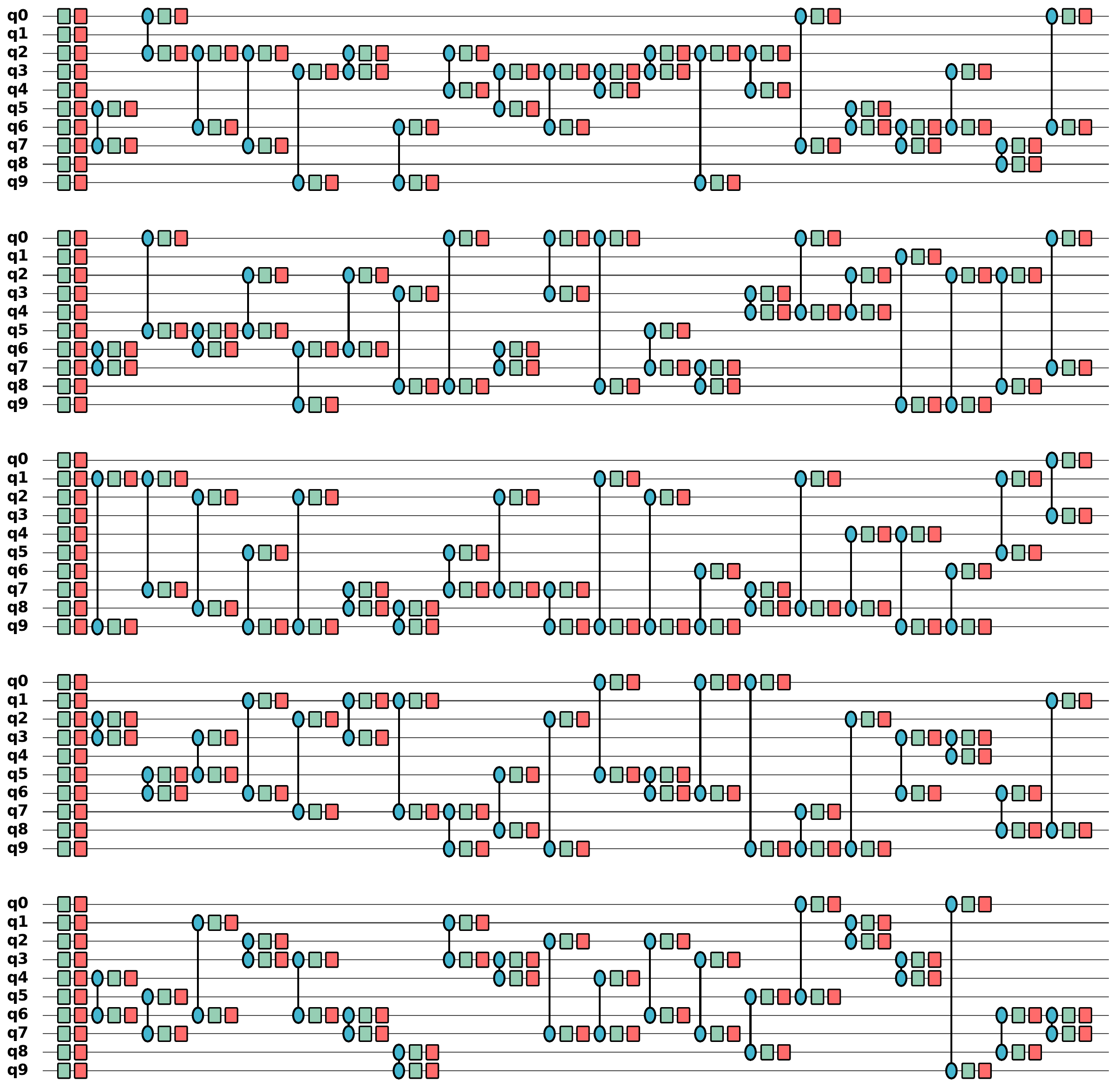}
  \caption{{Loading circuits generated from AQER with $T=20$ for the S-RQC dataset. Green and red blocks denote $R_Y$ and $R_Z$ gates, respectively. Blue circles connected by lines denote $R_{ZZ}$ gates. }}
  \label{AQER_app_fig_circuit_srqc}
\end{figure*}

\begin{figure*}[t]
  \centering
  \includegraphics[width=0.98\linewidth]{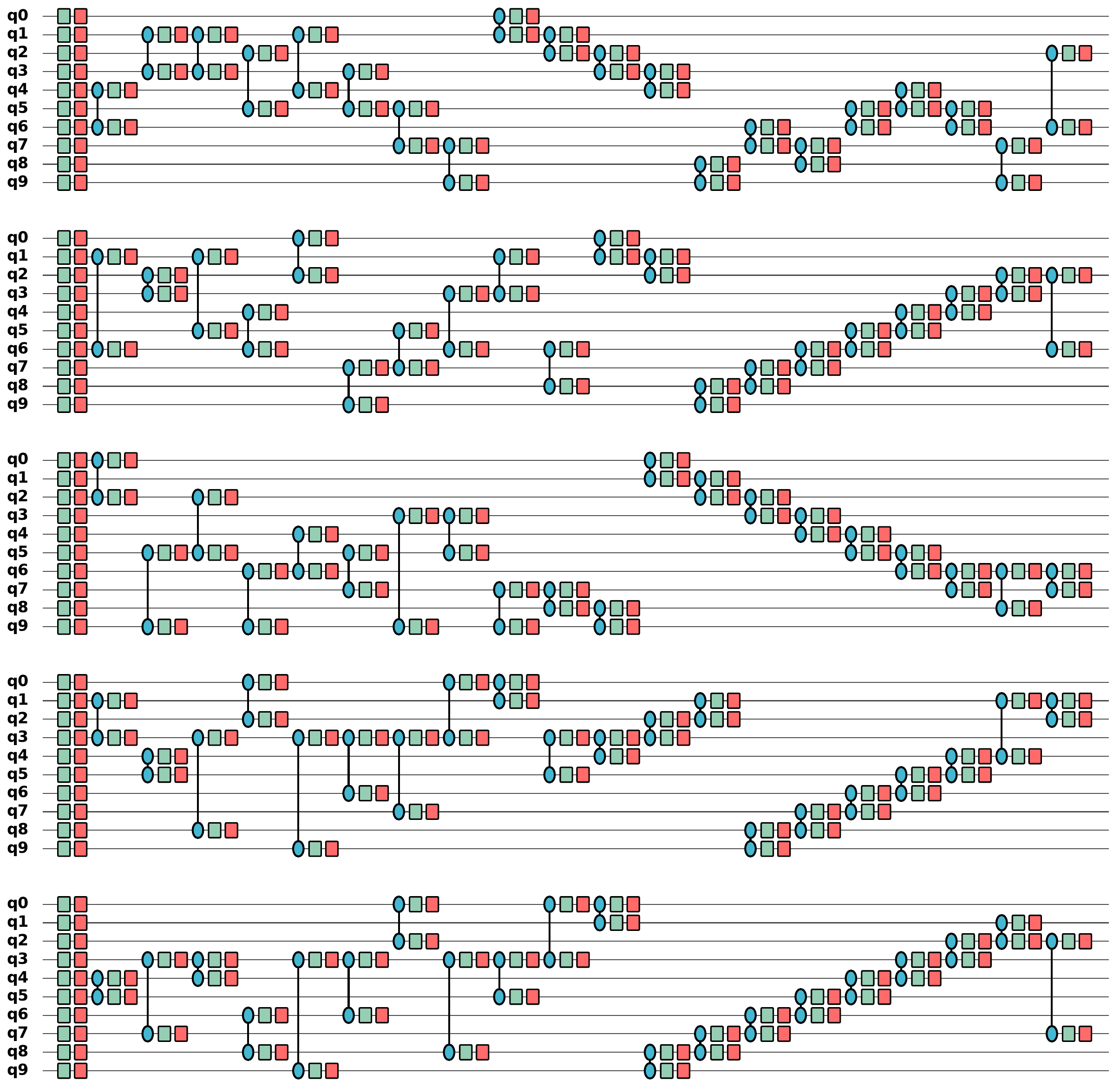}
  \caption{{Loading circuits generated from AQER with $T=20$ for the GS-TFIM dataset ($N=10$). Green and red blocks denote $R_Y$ and $R_Z$ gates, respectively. Blue circles connected by lines denote $R_{ZZ}$ gates. }}
  \label{AQER_app_fig_circuit_gstfim}
\end{figure*}

\section{Computational complexity of AQER}
\label{app:aqer_complexity}

Here we provide the computational complexity of AQER.
We separate the analysis into the classical data setting, where the target state is represented
explicitly as a state vector on a classical computer, and the quantum data setting, where one
has access only to copies of a target quantum state prepared on a quantum device.
Throughout, we denote by $N$ the number of qubits and by $d=2^N$ the dimension of classical data vector.
The number of iterations in the Step I of AQER is denoted by $T$,
and the number of iterations in the Step III of AQER is denoted by $T_3$.

\subsection{Classical data}
\label{app:complexity_classical}

We first analyze the cost of AQER when the target state is given as an explicit state vector
$\boldsymbol{v} \in \mathbb{C}^d$ on a classical computer. In this regime, all quantities
required by AQER (reduced density matrices, entropies, gradients) can be computed exactly
from $\boldsymbol{v}$.

\paragraph{Cost of applying gates.}
Each gate used in AQER is either diagonal ($R_Z$ and $R_{ZZ}$) or $2$-sparse (for $R_Y$). Therefore, applying a single- or two-qubit gate $U$ to the state vector $\boldsymbol{v} \in \mathbb{C}^d$ can be implemented as a matrix-vector multiplication
with a diagonal or $2$-sparse matrix, and thus costs $\mathcal{O}(d)$ arithmetic operations. A forward pass of the circuit with $\mathcal{O}(T)$ then costs $\mathcal{O}(d T)$.

\paragraph{Cost of computing reduced density matrices and entropies.}
Let the state vector in the computational basis be written as
\begin{equation}
  \boldsymbol{v}
  =
  \sum_{i_1=0}^{1} \cdots \sum_{i_N=0}^{1}
  v_{i_1, \ldots, i_N} \,
  \boldsymbol{e}_{i_1, \ldots, i_N}.
\end{equation}
Then, the reduced density matrix (RDM) of qubits $(1,2)$ is given by
\begin{equation}
  \rho_{j_1,j_2;k_1,k_2}
  =
  \sum_{i_3,\ldots,i_N}
  v_{j_1,j_2,i_3,\ldots,i_N}
  v^{*}_{k_1,k_2,i_3,\ldots,i_N},
  \qquad j_1,j_2,k_1,k_2 \in \{0,1\}.
\end{equation}
Computing this RDM requires summing over all $2^{N-2}$ assignments of
$(i_3,\ldots,i_N)$, and hence costs $\mathcal{O}(2^N)=\mathcal{O}(d)$ operations.
The same complexity applies to RDMs of any constant number of qubits and any choice of qubit indices. 
Once the RDMs are available, the single-qubit entropies and
the entanglement measure used in AQER can be obtained by diagonalizing
constant-size matrices, which incurs only negligible additional cost.
Thus, the cost of computing the RDMs and corresponding entropies for a constant
number of qubits is $\mathcal{O}(d)$.

\paragraph{Step I: greedy entanglement reduction.}
In Step~I, AQER greedily builds a circuit with $5T$ gates by iteratively adding single- and two-qubit gates that reduce the entanglement measure. At each iteration, the algorithm evaluates at most $\mathcal{O}(N^2)$ candidate qubit pairs, and evaluating one candidate requires $\mathcal{O}(d)$ time. Thus, the total cost of
Step~I is
\begin{equation}
  \label{eq:step1-cost}
  \mathcal{O}( d \, T \, N^2 ).
\end{equation}

\paragraph{Step II: product state approximation.}
In Step~II, AQER computes the single-qubit RDMs for all $N$ qubits and uses these to initialize the parameters of single-qubit rotations.
The cost of computing $N$ single-qubit RDMs is $\mathcal{O}(d N)$, and the subsequent calculation of the $2N$ gate parameters
requires only $\mathcal{O}(N)$ operations. Therefore, the complexity
of Step~II is
\begin{equation}
  \label{eq:step2-cost}
  \mathcal{O}(d N).
\end{equation}

\paragraph{Step III: gradient-based paramter fine-tuning.}
Step~III performs $T_3$ iterations of gradient-based optimization over all trainable parameters in the AQER circuit. The total number of
parameters is
\begin{equation}
  P = 5T + 2N,
\end{equation}
where $5T$ comes from the single- and two-qubit rotations in the $T$ iterations of Step I, and $2N$ comes from additional single-qubit rotations determined in Step~II.

To estimate the complexity of one iteration in Step III, we consider the cost of computing the gradient of the loss function with respect to all $P$ parameters. Using either the parameter-shift rule or automatic differentiation on the state-vector simulation, computing each partial derivative involves two numbers of forward evaluations of the whole circuit. Each such forward evaluation costs $\mathcal{O}(d (T+N))$, so the cost of
computing one partial derivative is $\mathcal{O}(d (T+N))$. Since there are $P = \mathcal{O}(T+N)$ parameters, a full gradient evaluation
costs $\mathcal{O} (d (T+N)^2 )$.
Performing $T_3$ gradient steps in Step~III therefore leads to a total cost of
\begin{equation}
  \label{eq:step3-cost}
  \mathcal{O} (d \, T_3 (T+N)^2 ).
\end{equation}

\paragraph{Overall complexity for classical data.}
Combining Eqs. \eqref{eq:step1-cost}, \eqref{eq:step2-cost} and \eqref{eq:step3-cost}, the overall computational complexity of AQER in the classical-data setting is
\begin{equation}
  \label{eq:overall-classical-d}
  \mathcal{O}\!\bigl(
    d T N^2
    + d N
    + d T_3 (T+N)^2
  \bigr),
\end{equation}
which can be summarized as
\begin{equation}
  \label{eq:overall-classical-summary}
  \mathcal{O}\!\bigl(
    d \, T_3 \, T^2 \, \operatorname{polylog}(d)
  \bigr).
\end{equation}
This scaling is linear in the Hilbert-space dimension $d$ (as unavoidable for exact state-vector simulation), and polynomial in the Step I iteration count $T$ and the number of gradient steps $T_3$.

\subsection{Quantum data}
\label{app:complexity_quantum}

We now consider the quantum data setting, where the target state $\rho$ is available on a quantum device, and no explicit classical description of $\rho$ is assumed. In this case, all RDMs, entropies and gradients required by AQER are estimated from a finite number of
measurement shots. For convenience, we denote by $M$ the number of shots used to estimate each expectation value or each entry of an RDM. 

\paragraph{Step I and Step II for quantum data.}
As in the classical case, Step~I adds $\mathcal{O}(T)$ gates that reduce the entanglement measure. At each iteration, the number of candidate gates is at most $N^2$. The cost per candidate is $\mathcal{O}(MT)$, and therefore the total cost of Step~I in the quantum-data setting scales as
\begin{equation}
  \label{eq:step1-quantum-cost}
  \mathcal{O}(M T^2 N^2).
\end{equation}

In Step~II, AQER estimates $N$ single-qubit RDMs. 
This requires $\mathcal{O}(M N)$ shots and $\mathcal{O}(N)$ classical operations, so the complexity of Step~II is
\begin{equation}
  \label{eq:step2-quantum-cost}
  \mathcal{O}(M N).
\end{equation}

\paragraph{Step III: gradient-based parameter fine-tuning.}
For gradient-based optimization with quantum data, we can use the
parameter-shift rule to calculate the gradient. For each parameter, estimating one partial derivative costs $\mathcal{O}(M(T+N))$ operations. As before, the total number of parameters is $P = 5T + 2N = \mathcal{O}(T+N)$. Therefore, one full gradient evaluation in Step~III requires
\begin{equation}
  \label{eq:full-gradient-quantum-cost}
  \mathcal{O}(M (T+N)^2)
\end{equation}
operations. Performing $T_3$ gradient steps then yields the total complexity of Step~III:
\begin{equation}
  \label{eq:step3-quantum-cost}
  \mathcal{O}(
    M T_3 (T+N)^2 ).
\end{equation}

\paragraph{Overall complexity for quantum data.}
Combining Eqs. \eqref{eq:step1-quantum-cost}, \eqref{eq:step2-quantum-cost} and \eqref{eq:step3-quantum-cost}, the overall computational complexity of AQER for quantum data is
\begin{equation}
  \label{eq:overall-quantum}
  \mathcal{O}(
    M T_3 T^2 \poly(N)
  ).
\end{equation}

\section{The efficient loading of IQP circuit states via AQER}
\label{AQER_IQP_discrete_entanglement_proof}

In this section, we analyze the performance of AQER on classical vectors or quantum states arising from Instantaneous Quantum Polynomial (IQP) circuits. We first show in Theorem~\ref{AQER_IQP_discrete_entanglement_thm} that, when the IQP angles lie on a known discrete grid and we have classical access to the full state vector, AQER can exactly reconstruct a loading circuit in at most $|E|$ iterations of Step~I. We then relax the discrete-grid assumption and prove in Theorem~\ref{AQER_IQP_continuous_entanglement_thm} that for generic IQP states with i.i.d.\ continuous angles, AQER can, with high probability, find a loading circuit that reduces the entanglement measure $\mathcal{S}$ to at most $\varepsilon$ using at most $|E|$ iterations of Step~I in AQER. Finally, Theorem~\ref{AQER_IQP_discrete_entanglement_thm_quantum} establishes an analogous exact-loading guarantee for a discrete IQP state family with fixed angle $\pi/8$ and unknown interaction graph, where we only have quantum access to the IQP state. Let $D$ be the degree of the graph $E$. In this case, AQER still recovers an exact loading circuit with probability at least $1-\delta$ using at most $|E|$ iterations of Step~I and a total quantum access number that is polynomial in $N$, $\log (1/\delta)$, $2^D$, and $|E|$.

\begin{theorem}[Exact loading of discrete IQP state vectors]
\label{AQER_IQP_discrete_entanglement_thm}
Let $N\ge 2$ be the number of qubits, and let $K \in \mathbb{N}$ be a known grid size. We consider a fully-connected IQP circuit state
\begin{equation}
\label{AQER_IQP_discrete_entanglement_state_def}
\ket{v(\bm{\omega})} = H^{\otimes N} \left(\prod_{\{i,j\} \in E} e^{-i\omega_{ij} Z_i Z_j/2}\right)
H^{\otimes N} \ket{0^N},
\end{equation}
where $E$ is the unknown interaction graph of the IQP circuit, and each parameter is unknown and is restricted to the discrete grid
\begin{equation}
\label{AQER_IQP_discrete_angle_grid}
\omega_{ij} = \frac{k_{ij}\pi}{2K+1},
\qquad
k_{ij} \in \{-2K,-2K+1,\dots,2K-1,2K\}.
\end{equation}
Assume we have classical access to the state vector of $\ket{v(\bm{\omega})}$. Then, up to an overall global phase, the AQER algorithm can reconstruct an exact loading circuit for the IQP state using $T$ iterations of Step I, where $T \leq|E|$.
\end{theorem}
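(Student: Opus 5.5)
The plan is to use the classical access to the state vector to track the AQER iterates in closed form. I would show that every iteration of Step~I removes exactly one edge of the interaction graph $E$, up to single-qubit basis changes. Since $HZH=X$ and $H^{\otimes N}\ket{0^N}=\ket{+}^{\otimes N}$, we can rewrite
\begin{equation*}
\ket{v(\bm{\omega})}=\prod_{\{i,j\}\in E} e^{-i\omega_{ij}X_iX_j/2}\ket{0^N}.
\end{equation*}
The single-qubit marginals of states of this form are explicit. The term $X_i$ commutes with every $X_iX_j$. For $Y_i$, every term in the Heisenberg-picture expansion carries an $X$ on some neighbour $j$ without a matching $X$, so its expectation on $\ket{0^N}$ vanishes. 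For $Z_i$, each incident edge contributes one cosine factor. This gives
\begin{equation*}
\<X_i\>=\<Y_i\>=0,\qquad \<Z_i\>=c_i:=\prod_{j:\{i,j\}\in E}\cos\omega_{ij},\qquad \mathcal{S}_{\{i\}}=-\log_2\frac{1+c_i^2}{2}.
\end{equation*}
On the grid, $\omega_{ij}=k_{ij}\pi/(2K+1)$ is never an odd multiple of $\pi/2$, because $2K+1$ is odd. Hence every factor satisfies $0<|\cos\omega_{ij}|$. If $\omega_{ij}\neq0$ we also have $|\cos\omega_{ij}|<1$, so each nontrivial edge strictly lowers the purity of both of its endpoints.

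\textbf{Induction.} The invariant is that after $t$ iterations,
\begin{equation*}
\ket{\bm{v}_t}=\Big(\bigotimes_n u_n\Big)\prod_{e\in E_t}e^{-i\omega_e X X/2}\ket{0^N},
\end{equation*}
where $E_t\subseteq E$ has $|E_t|\le|E|-t$ and each $u_n$ is a known single-qubit unitary. The single-qubit gates $R_Z,R_Y$ in the block $V_{\mathcal{I}}(\bm{\alpha})$ can rotate the local frame of qubits $i,j$ so that $u_i\otimes u_j$ is undone and $X$ is mapped to $Z$. The subsequent $R_{ZZ}(\omega_{ij})$ then cancels the factor $e^{-i\omega_{ij}XX/2}$ exactly, since it commutes with all the other edge factors. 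This choice produces $E_{t+1}=E_t\setminus\{\{i,j\}\}$ and changes $\mathcal{S}$ only through $c_i$ and $c_j$, each of which loses a factor $\cos\omega_{ij}$. The resulting decrease is
\begin{equation*}
\Delta_{ij}=\log_2\frac{1+c_i^2/\cos^2\omega_{ij}}{1+c_i^2}+\log_2\frac{1+c_j^2/\cos^2\omega_{ij}}{1+c_j^2}>0 .
\end{equation*}

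\textbf{Main obstacle.} The hard step is proving that the greedy minimizer in Eq.~(\ref{AQER_method_algorithm_ER_eq}) is such an edge cancellation, rather than some partial rotation or a block on a non-edge pair that lowers $\mathcal{S}$ as much or more. My approach is to note that a block on $(i,j)$ acts only on the two-qubit marginal $\rho_{ij}$ of $\ket{\bm{v}_t}$. The sum $\mathcal{S}_{\{i\}}+\mathcal{S}_{\{j\}}$ is therefore bounded below by a function of the spectrum of $\rho_{ij}$, because the diagonal of $\rho_{ij}$ is majorized by its eigenvalues, which caps the marginal purities that any two-qubit unitary can reach. I would compute $\rho_{ij}$ explicitly from the $X$-type stabilizer-like expansion. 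I expect that, in the rotated frame, it is $e^{-i\omega_{ij}XX/2}$ applied to a state that is diagonal up to local unitaries, with the off-diagonal dephasing produced by the other neighbours. Its spectrum would then be exactly that of the product of the two marginals after cancellation, so the bound is attained precisely by cancellation.

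\textbf{Non-edge pairs.} When $\{i,j\}\notin E_t$, the marginal $\rho_{ij}$ is already a product of commuting diagonal factors, and no block can reduce $\mathcal{S}$ on that pair. The greedy pair choice is thus always an edge. If several edges tie for the largest $\Delta_{ij}$, choosing any one of them preserves the invariant.

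\textbf{Realizing the optimum.} The grid assumption, together with exact state-vector access, lets the angle search be restricted to the finite candidate set $\{k\pi/(2K+1)\}$ for the $R_{ZZ}$ angle. The local rotations needed alongside it are multiples of $\pi/2$, so the global minimum of the non-convex block problem is attained exactly and can be found by exhaustive search.

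\textbf{Conclusion.} Each iteration removes at least one edge, so after $T\le|E|$ iterations $E_T=\emptyset$, $\mathcal{S}(\ket{\bm{v}_T})=0$, and $\ket{\bm{v}_T}$ is a product state. Step~II, via Lemma~\ref{AQER_method_rdm_approximation} and Lemma~\ref{AQER_method_rdm_beta_gamma_value}, then gives single-qubit rotations with fidelity $1$; equivalently, $f_2(0)=0$ in Theorem~\ref{AQER_method_tn_init_projection_prop}. Hence $U_{\rm AQER}\ket{0}^{\otimes N}$ equals $\ket{v(\bm{\omega})}$ up to a global phase.
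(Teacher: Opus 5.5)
\textbf{There is a genuine gap.} Your argument rests on the claim in your \emph{Main obstacle} paragraph: that the unrestricted greedy minimizer of Eq.~(\ref{AQER_method_algorithm_ER_eq}) is an exact edge cancellation. The justification you sketch does not hold.

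The spectral identity fails as soon as $i$ and $j$ share a neighbour. Take a triangle $\{1,2\},\{1,3\},\{2,3\}$ with grid angles $\omega\neq 0$. Removing the $\{1,2\}$ factor leaves a pure three-qubit state, so its two-qubit marginal $\sigma_{12}$ (isospectral to $\rho_{12}$) has rank at most $2$. The post-cancellation marginals, however, give $\sigma_1\otimes\sigma_2$ of full rank, since $|\cos\omega|<1$. Any spectrum-based lower bound therefore lies strictly below what cancellation achieves and certifies nothing. A general two-qubit unitary on the pair already reaches $-\log_2\frac{1+\cos^4\omega}{2}$, versus $-2\log_2\frac{1+\cos^2\omega}{2}$ for cancellation.

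The non-edge claim fails for the same reason. With edges $\{1,3\},\{2,3\}$ only, one finds $\langle Y_1Y_2\rangle=\sin\omega_{13}\sin\omega_{23}\neq 0$ while $\langle Y_1\rangle=\langle Y_2\rangle=0$. So $\rho_{12}$ is correlated, and a gate on the non-edge pair can raise marginal purities.

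In fact the target claim itself fails for the actual block $R_{ZZ}R_YR_Z$. In the triangle, work in the frame where the cancelled edge reads $Z_1Z_2$. At the cancellation point, perturbing the $R_Z$ angle of qubit~$1$ is generated, after conjugation through $R_{ZZ}$, by $\cos\omega\,X_1-\sin\omega\,Y_1Z_2$. This leaves $\rho_1$ and $\rho_3$ unchanged to first order. It changes $\Tr[\rho_2^2]$ at a rate proportional to $\cos\omega\sin^3\omega\neq 0$. So cancellation is not even a stationary point of the block objective, and a greedy search over continuous block parameters will not select it.

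The paper avoids this by \emph{restricting} the Step~I search rather than analysing the unrestricted greedy. The single-qubit part of each block is fixed to a Hadamard (up to phase) the first time a qubit is used, and to the identity afterwards. The $R_{ZZ}$ angle ranges only over the grid $a\pi/(2K+1)$.

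Every iterate then stays of the form $e^{-i\sum\beta_{ij}Z_iZ_j/2}\ket{+}^{\otimes N}$ up to local gates. Your own Bloch-vector computation then gives $\mathcal{S}=-\sum_n\log_2\frac{1+x_n^2}{2}$ exactly, with $x_n=\prod_{\mathcal{N}(n)}\cos\beta_{ij}$. A candidate block on $\{p,q\}$ affects $\mathcal{S}$ only through $\cos^2\beta_{pq}$:
\begin{itemize}
\item On a residual edge, the grid search attains $\cos^2\beta_{pq}=1$, removing the edge up to local gates. This strictly lowers $\mathcal{S}$ because all other cosines are nonzero, which is exactly your odd-denominator observation.
\item On a non-edge, it only multiplies $x_p,x_q$ by $\cos\alpha$ with $|\cos\alpha|\le 1$, so it cannot lower $\mathcal{S}$.
\end{itemize}
Your \emph{Realizing the optimum} paragraph already points toward such a restriction. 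Make it the definition of Step~I and drop the majorization argument; then your induction and the Step~II conclusion go through.
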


\begin{proof}

For the case where the target state is known to be an IQP state, we restrict the optimization in Step~I of AQER as follows. For convenience, let $\mathcal{I}_t \subseteq [N]$ denote the set of qubits involved in the new two-qubit gate block added at the $t$-th iteration of Step~I, and define the cumulative set $\mathcal{J}_t := \bigcup_{t'=1}^{t} \mathcal{I}_{t'}$. 
At iteration $t+1$, consider a candidate two-qubit gate block acting on a pair $\{i,j\}$. For each qubit $q\in\{i,j\}$, if $q \in \mathcal{J}_t$, we restrict the corresponding single-qubit part of the block to be the identity via $R_Z(0) R_Y(0) = I$. If $q \notin \mathcal{J}_t$, we set the single-qubit part to be $R_Z(\pi)\,R_Y(-\pi/2) = -iH$, which implements a Hadamard up to a global phase. For the entangling gate $R_{ZZ}$ on $\{i,j\}$, we restrict its rotation angle to the grid $\alpha_{ij} = a_{ij}\pi/(2K+1)$ with $a_{ij} \in \{-2K,-2K+1,\dots,2K-1,2K\}$. Therefore, the optimization of the entanglement measure is performed by a grid search over the choice of the qubit pair $\{i,j\}$ and the discrete values of $\alpha_{ij}$.

Since the entanglement measure $\mathcal{S}$ is defined as the sum of single-qubit Renyi entropies, it is invariant under arbitrary single-qubit unitaries. Under the optimization rule for Step~I described above, optimizing $\mathcal{S}$ for the target IQP state is therefore equivalent to optimizing $\mathcal{S}$ for the residual IQP state $\ket{v'(\bm{\omega})}$ defined as follows,
\begin{equation}
\label{AQER_IQP_residual_state_def}
\ket{v'(\bm{\omega})} := \left(\prod_{\{i,j\} \in E} e^{-i\omega_{ij} Z_i Z_j/2}\right)
H^{\otimes N} \ket{0^N} = \left(\prod_{\{i,j\} \in E} e^{-i\omega_{ij} Z_i Z_j/2}\right) \ket{+}^{\otimes N},
\end{equation}
where $\ket{+} = H\ket{0} = (\ket{0}+\ket{1})/\sqrt{2}$. 

Next, we focus on the first iteration of Step~I. Since $|E|>0$, there exists a candidate gate block on qubits $\{p,q\} \in E$, and the optimization problem is
\begin{align}
{}& \arg \min_{\alpha_{pq}} \mathcal{S} \left( e^{-i\alpha_{pq} Z_p Z_q/2} |{v'(\bm{\omega})}\> \right) \notag \\
={}& \arg \min_{\alpha_{pq}} \mathcal{S} \left( e^{-i\alpha_{pq} Z_p Z_q/2} \left(\prod_{\{i,j\} \in E} e^{-i\omega_{ij} Z_i Z_j/2}\right) \ket{+}^{\otimes N} \right) \notag \\
={}& \arg \min_{\alpha_{pq}} \mathcal{S} \left( e^{-i\alpha_{pq} Z_p Z_q/2 -i\sum_{\{i,j\} \in E} \omega_{ij} Z_i Z_j/2 } \ket{+}^{\otimes N} \right) \notag \\
={}& \arg \min_{\alpha_{pq}} \mathcal{S} \left( e^{ -i\sum_{\{i,j\} \in E} \beta_{ij} (\alpha_{pq}) Z_i Z_j/2 } \ket{+}^{\otimes N} \right) , \label{AQER_IQP_discrete_entanglement_thm_2_3}
\end{align}
where we define the effective angles
\begin{equation}
\label{AQER_IQP_beta_def}
\beta_{ij}(\alpha_{pq}) :=
\begin{cases}
\omega_{ij}, & \{i,j\} \neq \{p,q\},\\[2pt]
\omega_{ij} + \alpha_{pq}, & \{i,j\}=\{p,q\}.
\end{cases}
\end{equation}

The entanglement measure in Eq.~(\ref{AQER_IQP_discrete_entanglement_thm_2_3}) is defined in terms of single-qubit Renyi-2 entropies $\mathcal{S}(\rho)=\sum_{n=1}^{N}\mathcal{S}_{{n}}(\rho)$, which depend only on the single-qubit reduced density matrices (RDMs): $\mathcal{S}_{{n}}(\rho) = \mathcal{S}(\rho_n) = -\log_2 \Tr[\rho_n^2]$, where $\rho_n$ is the RDM of an $N$-qubit state $\rho$ on the $n$-th qubit. Therefore, it suffices to compute the single-qubit RDMs of the state in Eq.~(\ref{AQER_IQP_discrete_entanglement_thm_2_3}). 
We remark that any single-qubit RDM has the Bloch decomposition $\rho_n = \tfrac{1}{2}(I + x_n X_n + y_n Y_n + z_n Z_n)$, where $x_n = \Tr(\rho_n X_n)$, $y_n = \Tr(\rho_n Y_n)$, and $z_n = \Tr(\rho_n Z_n)$ are the expectation values of the corresponding Pauli operators on qubit $n$. Thus, computing the single-qubit RDMs reduces to evaluating $x_n$, $y_n$ and $z_n$. 
In particular, for the state in Eq.~(\ref{AQER_IQP_discrete_entanglement_thm_2_3}), we have
\begin{align}
x_n (\alpha_{pq})
={}& {\langle +|}^{\otimes N} e^{ i\sum_{\{i,j\} \in E} \beta_{ij} (\alpha_{pq}) Z_i Z_j/2 } X_n  e^{ -i\sum_{\{i,j\} \in E} \beta_{ij} (\alpha_{pq}) Z_i Z_j/2 } {|+\rangle}^{\otimes N} \notag \\
={}& {\langle +|}^{\otimes N} e^{ i\sum_{\{i,j\} \in \mathcal{N}(n)} \beta_{ij} (\alpha_{pq}) Z_i Z_j/2 } X_n  e^{ -i\sum_{\{i,j\} \in \mathcal{N}(n)} \beta_{ij} (\alpha_{pq}) Z_i Z_j/2 } {|+\rangle}^{\otimes N} \label{AQER_IQP_discrete_entanglement_thm_3_2} \\
={}& \prod_{\{i,j\} \in \mathcal{N}(n)} \cos \bigl(\beta_{ij} (\alpha_{pq}) \bigr) {\langle +|}^{\otimes N} X_n {|+\rangle}^{\otimes N}  \label{AQER_IQP_discrete_entanglement_thm_3_3} \\
={}& \prod_{\{i,j\} \in \mathcal{N}(n)} \cos \bigl(\beta_{ij} (\alpha_{pq}) \bigr)  \label{AQER_IQP_discrete_entanglement_thm_3_4} .
\end{align}
In Eq.~\eqref{AQER_IQP_discrete_entanglement_thm_3_2}, we define $\mathcal{N}(n)\subseteq E$ as the set of edges incident on $n$, i.e., each $\{i,j\}\in\mathcal{N}(n)$ satisfies $n\in\{i,j\}$, and use the fact that for $\{i,j\}\notin\mathcal{N}(n)$ the operator $Z_iZ_j$ commutes with $X_n$ and hence cancels in the conjugation without affecting the expectation value. Eq.~\eqref{AQER_IQP_discrete_entanglement_thm_3_3} then follows from the Heisenberg-picture identity for a single incident edge, for example $e^{i\beta_{mn} Z_m Z_n/2} X_n e^{-i\beta_{mn} Z_m Z_n/2} = X_n \cos\beta_{mn} - Z_m Y_n \sin\beta_{mn}$, together with the fact that any term containing $Z_m$ has zero expectation on $\ket{+}^{\otimes N}$ because $\langle +|Z|+\rangle = 0$. Eq.~\eqref{AQER_IQP_discrete_entanglement_thm_3_4} then follows from $\langle +|I|+\rangle = \langle +|X|+\rangle = 1$. A similar analysis applied to $Y_n$ and $Z_n$, which yields $y_n = z_n = 0$.

The entanglement measure is therefore a function of $\alpha_{pq}$ and takes the form
\begin{align}
\mathcal{S}(\alpha_{pq}) ={}& -\sum_{n=1}^N \log_2\left( \Tr [\rho_n(\alpha_{pq})^2 ]\right) =  -\sum_{n=1}^N \log_2\left(\frac{1 + x_n(\alpha_{pq})^2}{2}\right) \label{AQER_IQP_S_alpha_pq} ,
\end{align}
where $x_n(\alpha_{pq})$ is given by Eq.~\eqref{AQER_IQP_discrete_entanglement_thm_3_4}. By construction, the effective angles $\beta_{ij}(\alpha_{pq})$ in Eq.~\eqref{AQER_IQP_beta_def} depend on $\alpha_{pq}$ only when $\{i,j\}=\{p,q\}$. Consequently, only $x_p(\alpha_{pq})$ and $x_q(\alpha_{pq})$ carry the dependence on $\alpha_{pq}$:
\begin{align*}
x_p(\alpha_{pq}) &= \cos\bigl(\beta_{pq}(\alpha_{pq})\bigr)\,C_p,\qquad C_p := \prod_{\{p,m\}\in\mathcal{N}(p)\setminus\{\{p,q\}\}} \cos \beta_{pm},\\
x_q(\alpha_{pq}) &= \cos\bigl(\beta_{pq}(\alpha_{pq})\bigr)\,C_q,\qquad C_q := \prod_{\{q,m\}\in\mathcal{N}(q)\setminus\{\{p,q\}\}} \cos \beta_{qm},
\end{align*}
where $C_p$ and $C_q$ are constants with respect to $\alpha_{pq}$ and $C_p, C_q \neq 0$ by the choice of the angle grid. Plugging these expressions into Eq.~\eqref{AQER_IQP_S_alpha_pq} and grouping the constant terms, we can write
\begin{equation*}
\mathcal{S}(\alpha_{pq}) = {\rm const} - \log_2\left(\frac{1 + C_p^2\cos^2\beta_{pq}(\alpha_{pq})}{2}\right) - \log_2\left(\frac{1 + C_q^2\cos^2\beta_{pq}(\alpha_{pq})}{2}\right),
\end{equation*}
so that minimizing $\mathcal{S}(\alpha_{pq})$ over the grid $\alpha_{pq}\in\{a_{pq}\pi/(2K+1)\}$ is equivalent to minimizing the function
\begin{equation}\label{AQER_IQP_discrete_entanglement_thm_7}
g (\alpha_{pq}) := - \log_2\left(\frac{1 + C_p^2\cos^2\beta_{pq}(\alpha_{pq})}{2}\right) - \log_2\left(\frac{1 + C_q^2\cos^2\beta_{pq}(\alpha_{pq})}{2}\right).
\end{equation}
It is easy to verify that $\alpha^*_{pq}=-\omega_{pq}$ , i.e. $\beta_{pq}(\alpha_{pq}^*) = \omega_{pq} + \alpha_{pq}^* = 0$ is the unique optimal value of the above objective, and it can be found by a grid search over $\alpha_{ij} = a_{ij}\pi/(2K+1)$ with $a_{ij} \in \{-2K,-2K+1,\dots,2K-1,2K\}$.

Having identified, for a fixed edge $\{p,q\}\in E$, the unique optimal choice $\alpha^*_{pq}=-\omega_{pq}$ such that $\beta_{pq}(\alpha_{pq}^*) = \omega_{pq} + \alpha_{pq}^* = 0$, we now iterate this construction. At the $t$-th iteration of Step~I, let $\beta_{ij}^{(t)}$ denote the effective angles on the IQP edges and define the residual edge set
\begin{equation}
F^{(t)} := \{\{i,j\}\in E : \beta_{ij}^{(t)} \neq 0\}.
\end{equation}
By the discussion above, if $F^{(t)}$ is nonempty we can pick any $\{p,q\}\in F^{(t)}$ and minimize $\mathcal{S}$ over the grid of angles $\alpha_{pq}$. Uniqueness of the minimizer implies that the optimal choice $\alpha_{pq}^{(t)}$ satisfies
\begin{equation}
\beta_{pq}^{(t+1)} = \beta_{pq}^{(t)} + \alpha_{pq}^{(t)} = 0,
\end{equation}
while all other angles remain unchanged, i.e., $\beta_{ij}^{(t+1)} = \beta_{ij}^{(t)}$ for all $\{i,j\}\neq\{p,q\}$. Therefore the residual edge set strictly shrinks at each iteration,
\begin{equation}
|F^{(t+1)}| = |F^{(t)}| - 1.
\end{equation}
Since $|F^{(0)}|\le |E|$, after $T=|F^{(0)}| \le |E|$ iterations we reach a configuration with $F^{(T)}=\varnothing$, i.e., $\beta_{ij}^{(T)} = 0$ for all $\{i,j\}\in E$. In this case all two-qubit entangling rotations are cancelled, and the resulting state has vanishing entanglement measure $\mathcal{S}=0$. By definition of $\mathcal{S}$ as the sum of single-qubit Renyi entropies, this implies that the state obtained from $\ket{v(\bm{\omega})}$ after $T$ iterations of Step~I is a product state, which we denote by $\ket{v_{\mathrm{prod}}}$.

Finally, by Lemma~\ref{AQER_method_rdm_beta_gamma_value}, we can construct a single-qubit circuit composed of $R_Z$ and $R_Y$ rotations that maps $\ket{v_{\mathrm{prod}}}$ exactly to the computational basis state $\ket{0^N}$ up to a global phase. Let $U_{\mathrm{I}}$ be the product of all Step~I gate blocks obtained by the above procedure and $U_{\mathrm{II}}$ be the single-qubit circuit from Lemma~\ref{AQER_method_rdm_beta_gamma_value}. Then
\begin{equation}
U_{\mathrm{II}} U_{\mathrm{I}} \ket{v(\bm{\omega})} = \ket{0^N}
\end{equation}
up to a global phase. Taking adjoints and reversing the order of gates, we obtain an explicit loading circuit
\begin{equation}
U_{\mathrm{load}} = \left(U_{\mathrm{II}} U_{\mathrm{I}}\right)^\dagger = U_{\mathrm{I}}^\dagger U_{\mathrm{II}}^\dagger
\end{equation}
such that $U_{\mathrm{load}}\ket{0^N} = \ket{v(\bm{\omega})}$ up to a global phase. This completes the proof of Theorem~\ref{AQER_IQP_discrete_entanglement_thm}.

\end{proof}

\begin{theorem}[Approximate loading for generic IQP state vectors]
\label{AQER_IQP_continuous_entanglement_thm}
We follow the notations in Theorem~\ref{AQER_IQP_discrete_entanglement_thm} and consider the IQP circuit state
\begin{equation}
\ket{v(\bm{\omega})} = H^{\otimes N} \left(\prod_{\{i,j\} \in E} e^{-i\omega_{ij} Z_i Z_j/2}\right) H^{\otimes N} \ket{0^N},
\end{equation}
where the angles $\{\omega_{ij}\}_{\{i,j\}\in E}$ are i.i.d.\ random variables sampled from $[-\pi,\pi]$ uniformly. Let $D$ be the maximum degree of $E$. We assume the classical access to the state vector of $\ket{v(\bm{\omega})}$. Then for any $\delta, \epsilon>0$, AQER uses at most $T\le |E|$ iterations of Step~I by searching the grid with size $K = \left\lceil \frac{\pi}{2} \sqrt{\frac{DN}{\epsilon}} \right\rceil$ to produce a circuit $U_{\rm I}$ satisfying $\mathcal{S}(U_{\rm I} |v(\bm{\omega}) \> ) \le \varepsilon $ with probability at least $1-\delta$. 
\end{theorem}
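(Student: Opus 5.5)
We follow the proof of Theorem~\ref{AQER_IQP_discrete_entanglement_thm} step by step, replacing exact cancellation with approximate cancellation on a grid.

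\textbf{Reduction to residual angles.} We use the same restricted Step~I rule: Hadamards on newly touched qubits, identity on previously touched ones, and $R_{ZZ}$ angles $\alpha_{pq}=a\pi/(2K+1)$ with $a\in\{-2K,\dots,2K\}$. Since $\mathcal{S}$ is invariant under single-qubit unitaries, the problem again reduces to the residual state $e^{-i\sum_{E}\beta_{ij}Z_iZ_j/2}\ket{+}^{\otimes N}$. As derived there, its single-qubit RDMs have Bloch vector $(x_n,0,0)$ with $x_n=\prod_{e\in\mathcal{N}(n)}\cos\beta_e$. Hence
\[
\mathcal{S}=-\sum_n\log_2\frac{1+x_n^2}{2}.
\]

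\textbf{Per-edge rounding.} We process edges one at a time. For edge $\{p,q\}$, the objective $g(\alpha_{pq})$ of Eq.~(\ref{AQER_IQP_discrete_entanglement_thm_7}) is increasing in $\sin^2\beta_{pq}$ whenever $C_p,C_q\neq0$. So the grid minimizer is the grid point minimizing $|\sin(\omega_{pq}+\alpha_{pq})|$. The grid has spacing $\pi/(2K+1)$ and covers $[-2K\pi/(2K+1),\,2K\pi/(2K+1)]$, which contains $-\omega_{pq}$ up to an endpoint gap of at most one spacing. Therefore the chosen residual satisfies $|\sin\beta_{pq}|\le\pi/(2(2K+1))\le\pi/(4K)$; modulo $\pi$ the endpoints are also handled. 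The residual never grows back later, because subsequent iterations touch other edges and leave $\beta_{pq}$ fixed. After at most $|E|$ iterations, every edge then has $\sin^2\beta_e\le \pi^2/(16K^2)$.

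\textbf{Bounding $\mathcal{S}$.} We use $-\log_2\frac{1+x^2}{2}\le \frac{1-x^2}{\ln 2}$ and $1-\prod_e\cos^2\beta_e\le\sum_{e\in\mathcal{N}(n)}\sin^2\beta_e$. These give
\[
\mathcal{S}\le \frac{1}{\ln2}\sum_n\sum_{e\in\mathcal{N}(n)}\sin^2\beta_e\le \frac{DN}{\ln 2}\cdot\frac{\pi^2}{16K^2}.
\]
With $K=\lceil\frac{\pi}{2}\sqrt{DN/\epsilon}\rceil$ this is at most $\frac{\epsilon}{4\ln2}<\epsilon$. This step is deterministic once the greedy choices behave as described.

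\textbf{Main obstacle: where the probability enters.} The randomness is needed to guarantee that all $C_p,C_q$ are nonzero throughout, and that the greedy pair selection does not pick a non-edge or an edge in a degenerate way. Otherwise $g$ could be flat, so the minimizer is not the rounding of $-\omega_{pq}$, or $\alpha_{pq}$ could be misdirected.

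We would argue as follows. Each $\omega_e$ is continuous and i.i.d., so with probability one no $\cos\omega_e$ and no residual $\cos\beta_e$ vanishes. Residuals differ from grid points by a continuously distributed shift, so hitting $\pi/2$ has measure zero. Also, on non-edges $\{p,q\}\notin E$, adding $R_{ZZ}$ can only increase $\mathcal{S}$, so the greedy step never selects them.

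To get a quantitative $1-\delta$ rather than a probability-one statement, we would restrict to the event that each $|\cos\omega_e|$ is bounded away from $0$ by a margin $\eta$. A union bound over $|E|$ edges gives probability at least $1-|E|\cdot O(\eta)$, and we choose $\eta=O(\delta/|E|)$. On this event the strict monotonicity argument goes through.

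The remaining subtlety is edges whose total residual stays above threshold because greedy picks the pair with the largest decrease rather than a fixed order. We handle this by noting that every greedy step zeroes, up to grid error, one unfinished edge and never increases another's residual. Hence at most $|E|$ steps suffice.
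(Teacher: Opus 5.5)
Your proposal is correct and follows essentially the same route as the paper's proof: reduce to the residual state $e^{-i\sum_{E}\beta_{ij}Z_iZ_j/2}|+\rangle^{\otimes N}$, use monotonicity of the per-edge objective to show that each edge's residual angle is driven to within grid resolution (this needs the relevant cosines to be nonzero, which the paper secures by a union bound on the event $|\cos\omega_{ij}|\ge\delta/|E|$), and then bound $\mathcal{S}$ by a sum of $1-x_n^2$ terms of order $DN$ times the squared resolution. Your deviations are harmless and in places more careful: your half-spacing bound modulo $\pi$ (the paper simply asserts $|\beta_{ij}|\le\pi/(2K+1)$ without reducing modulo $\pi$) compensates for your looser $\frac{1}{\ln 2}(1-x^2)$ logarithm bound, and a probability-one nonvanishing-cosine event already suffices, since the margin $\delta/|E|$ never enters the final estimate.
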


\begin{proof}

First, we demonstrate that with high probability all parameters $\omega_{ij}$ satisfy a lower bound on the cosine value. Since each $\omega_{ij}$ is sampled from $[-\pi,\pi]$ uniformly, we have
\begin{align}
\Pr\!\left[\,|\cos \omega_{ij}| \,\geq\, \frac{\delta}{|E|} \right]
= 1 - \frac{2}{\pi} \arcsin \left( \frac{\delta}{|E|} \right)
\;\geq\; 1 - \frac{\delta}{|E|}.
\end{align}
By a union bound over all $\{i,j\}\in E$, it follows that with probability at least $1 - \frac{\delta}{|E|}\cdot |E| = 1-\delta$ we have
\begin{equation}
|\cos \omega_{ij}| \,\geq\, \frac{\delta}{|E|} \quad \text{for all } \{i,j\}\in E.
\end{equation}
In the following, we condition on this high-probability event.

Next, we follow the same optimization procedure of Step~I as in the proof of Theorem~\ref{AQER_IQP_discrete_entanglement_thm}. Specifically, the two-qubit gate on $\{i,j\}$ has angle $\alpha_{ij}$ restricted to the grid
\begin{equation*}
\mathcal{G}_K := \Bigl\{\frac{a\pi}{2K+1}: a\in\{-2K,-2K+1,\dots,2K-1,2K\}\Bigr\},
\end{equation*}
where $K = \left\lceil \frac{\pi}{2} \sqrt{\frac{DN}{\epsilon}} \right\rceil$. We also impose that at most one two-qubit gate block is added for each pair $\{i,j\}\subseteq[N]$. Under this restriction, Step~I has at most one update per edge, so the total number of iterations satisfies $T\le |E|$. For each $\{p,q\} \in E$, the grid search in the optimization problem of Eq.~(\ref{AQER_IQP_discrete_entanglement_thm_7}) finds a value $\alpha_{pq}^\star$ whose distance to $-\omega_{pq}$ is at most $\pi/(2K+1)$. Thus, after $T=|E|$ iterations, the remaining residual state can be written as
\begin{equation}
\ket{v(\bm{\beta})}
= e^{ -i\sum_{\{i,j\} \in E} \beta_{ij} Z_i Z_j/2 } \ket{+}^{\otimes N},
\end{equation}
where each effective angle satisfies $|\beta_{ij}|\leq \pi/(2K+1)$.

The entanglement measure of the state $\ket{v(\bm{\beta})}$ can be obtained via a calculation analogous to Eqs.~(\ref{AQER_IQP_discrete_entanglement_thm_3_2})–(\ref{AQER_IQP_S_alpha_pq}). In particular,
\begin{align}
\mathcal{S}(\ket{v(\bm{\beta})})
={}& -\sum_{n=1}^N \log_2\left(\frac{1 + x_n^2}{2}\right) \notag \\
\leq{}& \sum_{n=1}^{N} (1-x_n^2) \label{AQER_IQP_continuous_entanglement_thm_4_0} \\
={}& N -\sum_{n=1}^N  \prod_{\{i,j\} \in \mathcal{N}(n)} \cos^2 \beta_{ij} \label{AQER_IQP_continuous_entanglement_thm_4_1} \\
\leq{}& N - N \cos^{2D} \left(\frac{\pi}{2K+1}\right) \label{AQER_IQP_continuous_entanglement_thm_4_3} \\
\leq{}& N - N  \left[ 1 - \left( \frac{\pi}{2K+1} \right)^2 \right]^D \label{AQER_IQP_continuous_entanglement_thm_4_4} \\
\leq{}& N - N  \left[ 1 - D\left( \frac{\pi}{2K+1} \right)^2 \right] \label{AQER_IQP_continuous_entanglement_thm_4_5} \\
={}& DN \left( \frac{\pi}{2K+1} \right)^2 \;\leq\; \epsilon . \label{AQER_IQP_continuous_entanglement_thm_4_6}
\end{align}
Here Eq.~(\ref{AQER_IQP_continuous_entanglement_thm_4_0}) follows from the inequality $-\log_2 \frac{2-x}{2} \leq x$ for $x \in [0,1]$ by setting $x = 1 - x_n^2$. Eq.~(\ref{AQER_IQP_continuous_entanglement_thm_4_1}) follows from Eq.~(\ref{AQER_IQP_discrete_entanglement_thm_3_4}), which gives $x_n^2 = \prod_{\{i,j\} \in \mathcal{N}(n)} \cos^2 \beta_{ij}$. Eq.~(\ref{AQER_IQP_continuous_entanglement_thm_4_3}) is derived using $|\beta_{ij}|\leq \pi/(2K+1)$ and $|\mathcal{N}(n)| \leq D$, so that each product is lower bounded by $\cos^{2D}(\pi/(2K+1))$. Eq.~(\ref{AQER_IQP_continuous_entanglement_thm_4_4}) uses $\cos^2 x = 1- \sin^2 x \geq 1 - x^2$ for $|x|\le \pi/2$. Eq.~(\ref{AQER_IQP_continuous_entanglement_thm_4_5}) is obtained from the Bernoulli inequality $(1-x)^D \geq 1-Dx$ for $x \in [0,1]$. Finally, Eq.~(\ref{AQER_IQP_continuous_entanglement_thm_4_6}) follows from the choice $K= \left\lceil \frac{\pi}{2} \sqrt{\frac{DN}{\epsilon}} \right\rceil$, which ensures $DN \left( \frac{\pi}{2K+1} \right)^2 \le \epsilon$. This completes the proof of Theorem~\ref{AQER_IQP_continuous_entanglement_thm}.

\end{proof}

\begin{theorem}[Exact loading of discrete IQP states]
\label{AQER_IQP_discrete_entanglement_thm_quantum}
Let $N\ge 2$ be the number of qubits. We consider a fully-connected IQP circuit state
\begin{equation}
\label{AQER_IQP_discrete_entanglement_state_def_quantum}
\ket{v(E)} = H^{\otimes N} \left(\prod_{\{i,j\} \in E} e^{-i \pi Z_i Z_j/8}\right)
H^{\otimes N} \ket{0^N},
\end{equation}
where $E$ is the unknown interaction graph of the IQP circuit, with the maximum degree $D$. Assume we have quantum access to the state $\ket{v(\bm{\omega})}$. Then, up to an overall global phase, the AQER algorithm can reconstruct an exact loading circuit for the IQP state using $T$ iterations of Step I, where $T \leq|E|$ with probability at least $1-\delta$ by using at most $\mathcal{O}\left(|E|N^2 2^{D} \log\tfrac{N^2 |E|}{\delta}\right)$ calls to the given IQP state.
\end{theorem}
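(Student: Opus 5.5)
We reuse the restricted Step~I of the proof of Theorem~\ref{AQER_IQP_discrete_entanglement_thm}. Single-qubit parts are fixed to $I$ on qubits already touched and to $-iH$ otherwise. Since $\mathcal{S}$ is invariant under local unitaries, it suffices to work with the residual state $e^{-i\sum_{E}\beta_{ij}Z_iZ_j/2}\ket{+}^{\otimes N}$, with all $\beta_{ij}=\pi/4$ initially. The candidate angle on $R_{ZZ}$ is restricted to $\alpha\in\{0,-\pi/4\}$. By Eq.~(\ref{AQER_IQP_discrete_entanglement_thm_3_4}), the single-qubit Bloch vectors are $x_n=\prod_{e\in\mathcal{N}(n)}\cos\beta_e$ and $y_n=z_n=0$. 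With residual edge set $F^{(t)}$ and residual degree $d_n^{(t)}$ of qubit $n$, we get $x_n=2^{-d_n^{(t)}/2}$. The key structural fact follows: a pair $\{p,q\}$ lowers $\mathcal{S}$ under $\alpha=-\pi/4$ if and only if $\{p,q\}\in F^{(t)}$. In that case $x_p^2$ and $x_q^2$ double, strictly decreasing $-\log_2\frac{1+x^2}{2}$. For a non-edge, $\beta_{pq}$ would become $-\pi/4$, so $x_p^2$ and $x_q^2$ halve and $\mathcal{S}$ increases. Each correct iteration removes one edge, so $T\le|E|$, the final state is a product state, and Lemma~\ref{AQER_method_rdm_beta_gamma_value} closes the circuit exactly, as in Theorem~\ref{AQER_IQP_discrete_entanglement_thm}.

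With only quantum access, we replace exact evaluation of $\mathcal{S}$ by estimates of $x_p^2$ and $x_q^2$, i.e., of $\langle X_p\rangle$ and $\langle X_q\rangle$ measured on the current state with the candidate gate applied. Each iteration then needs to certify, for each of the at most $N^2$ candidate pairs, whether $\mathcal{S}$ decreased. Equivalently, it needs to decide whether $|x_p|$ went from $2^{-d/2}$ to $2^{-(d-1)/2}$ or to $2^{-(d+1)/2}$. All candidate values are multiples of powers of $\sqrt2$ bounded below by $2^{-(D+1)/2}$, and the gap between the two hypotheses is $\Omega(2^{-D/2})$. Hoeffding's inequality therefore gives additive precision $c\,2^{-D/2}$ using $\mathcal{O}(2^{D}\log(1/\delta'))$ copies per expectation value. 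Since $d_n$ takes discrete values, we can even round to the nearest admissible $2^{-k/2}$ to identify the residual degree exactly.

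For the union bound, set $\delta'=\delta/(N^2|E|)$ over at most $N^2$ candidates per iteration and $|E|$ iterations. This yields total cost $\mathcal{O}(|E|N^2 2^D\log\frac{N^2|E|}{\delta})$. On the good event every decision matches the exact one, so the exact-loading argument applies verbatim. The Step~II parameters are exact once the final state is product: each qubit is then in $\ket{+}$ up to the known Hadamards, so the rotations are known without tomography.

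I expect the main obstacle to be the gap analysis: showing that the estimated change in $\mathcal{S}$ separates true edges from non-edges uniformly by $\Omega(2^{-D})$ in $x^2$, including the degenerate case where a qubit's residual degree is already $0$. I will check this first by writing $\Delta\mathcal{S}$ explicitly in terms of $d_p$ and $d_q$. A secondary subtlety is that the state's preparation cost per call includes the $t$ gates already found; these gates are appended by us, so they do not increase the count of state calls.
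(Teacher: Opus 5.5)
Your proposal is correct and takes essentially the same route as the paper. Both arguments restrict the $R_{ZZ}$ angle to $\{0,-\pi/4\}$ and use $x_n=2^{-d_n/2}$ with $y_n=z_n=0$, estimate $\langle X_p\rangle,\langle X_q\rangle$ to additive precision $\Theta(2^{-D/2})$ with $\mathcal{O}(2^{D}\log\frac{N^2|E|}{\delta})$ shots per candidate, and union-bound over $N^2$ pairs and $|E|$ iterations. Your extra care on two points tightens things the paper glosses over: you note that non-edge candidates can push $x$ down to $2^{-(D+1)/2}$, and that Step II needs no tomography because the final product state is known exactly.
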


\begin{proof}

We follow the Step~I optimization as in Theorem~\ref{AQER_IQP_discrete_entanglement_thm}, except that we restrict the grid search for each pair $\{p,q\}$ to the two angles $\alpha_{pq}\in\{0,-\pi/4\}$. We remark that we now only have quantum (rather than classical vector) access to the IQP state. Our goal is to guarantee that, in each iteration, the grid search selects a qubit pair in $E$ with probability at least $1-\delta/|E|$, where the failure probability comes solely from the finite number of measurement shots. Consequently, after $T=|E|$ iterations of Step~I, all qubit pairs in $E$ are cancelled with probability at least $1-\delta$.

Next, we focus on the first iteration. Similar to the proof of Theorem~\ref{AQER_IQP_discrete_entanglement_thm}, the residual state can be written as
\begin{equation}
\label{eq:residual_quantum}
\ket{v'(E)} \;=\; e^{-i\sum_{\{i,j\}\in E} \beta_{ij}(\alpha_{pq}) Z_i Z_j/2} \ket{+}^{\otimes N}.
\end{equation}
For any candidate pair $\{p,q\}$ and any choice of $\alpha_{pq}$, the single-qubit Bloch coefficients satisfy
\[
x_n(\alpha_{pq}) \;=\; \prod_{\{i,j\}\in\mathcal{N}(n)} \cos\bigl(\beta_{ij}(\alpha_{pq})\bigr), \qquad
y_n(\alpha_{pq}) = 0, \qquad z_n(\alpha_{pq}) = 0.
\]
In Theorem~\ref{AQER_IQP_discrete_entanglement_thm} we showed that Step~I optimizes the entanglement measure
\begin{equation}
\label{eq:S_from_x}
\mathcal{S}(\alpha_{pq}) \;=\; -\sum_{n=1}^{N}
\log_2\!\left(\frac{1+x_n(\alpha_{pq})^2}{2}\right),
\end{equation}
and that a decrease in $\mathcal{S}$ is achieved via an increase in $|x_n(\alpha_{pq})|$. Since $|\mathcal{N}(n)| \leq D$ and $\beta_{ij}(\alpha_{pq}) \in \{0,\pm \pi/4\}$, we have
\[
x_n(\alpha_{pq}) \in \left\{\left(\tfrac{\sqrt{2}}{2}\right)^{d} : 0\le d\le D\right\} \subseteq [2^{-D/2},1].
\]
Thus, for each $\{p,q\}\subseteq [N]$, by using 
\[
\mathcal{O}\!\left(2^{D}\log\frac{N^2|E|}{\delta}\right)
\]
quantum shots, we can estimate the relevant $x_p(\alpha_{pq})$ and $x_q(\alpha_{pq})$ with additive error smaller than a constant multiple of $2^{-D/2}$, and hence determine whether both $|x_p|$ and $|x_q|$ increase when switching $\alpha_{pq}$ from $0$ to $-\pi/4$ with probability at least $1-\delta/(N^2|E|)$ if $\{p,q\}\in E$. By going over all $\{p,q\}\subseteq [N]$, we can then identify at least one $\{p,q\}\in E$ with probability at least $1-\delta/|E|$. Repeating this procedure for all $T=|E|$ iterations, a union bound implies that the overall success probability is at least $1-\delta$. The total number of quantum shots is
\[
\mathcal{O}\!\left(2^{D}\log\frac{N^2|E|}{\delta}\right)\cdot N^2\cdot |E|
= \mathcal{O}\left(|E|N^2 2^{D} \log\tfrac{N^2 |E|}{\delta}\right).
\]
This completes the proof of Theorem~\ref{AQER_IQP_discrete_entanglement_thm_quantum}.

\end{proof}

\section{The Use of Large Language Models}

In this work, large language models (LLM) were used only as a general-purpose tool for minor language polishing and expression refinement. The LLM did not contribute to any scientific ideas, experiments, or results, and all content generated under its assistance was carefully reviewed and edited by the authors. The authors take full responsibility for all contents of the paper, including those influenced by LLM-assisted text.

\end{document}